\documentclass[journal,twoside,web]{ieeecolor}
\usepackage{generic}
\usepackage{cite}
\usepackage{amsmath,amssymb,amsfonts}
\usepackage{algorithmic}
\usepackage{graphicx}
\usepackage{algorithm,algorithmic}
\usepackage{hyperref}
\usepackage{comment}
\usepackage{textcomp}


\usepackage{amsthm}
\usepackage{arydshln}
\usepackage{empheq}

\newtheorem{theorem}{Theorem}
\newtheorem{assumption}{Assumption}

\newtheorem{lemma}[theorem]{Lemma}

\newtheorem{remark}{Remark}


\DeclareMathOperator{\rank}{rank}  
  
\DeclareMathOperator{\vecop}{vec}
\DeclareMathOperator{\col}{col}
\newcommand{\RR}{\mathbb{R}}
\newcommand{\CC}{\mathbb{C}}
\newcommand{\ZZ}{\mathbb{Z}}
\newcommand{\ZZn}{\mathbb{Z}_{\geq 0}}
\newcommand{\ZZp}{\mathbb{Z}_{> 0}}
\newcommand{\ie}{\textit{i.e.}}
\newcommand{\eg}{\textit{e.g.}}
\newcommand{\image}{\text{Im}}


\def\BibTeX{{\rm B\kern-.05em{\sc i\kern-.025em b}\kern-.08em
    T\kern-.1667em\lower.7ex\hbox{E}\kern-.125emX}}
\markboth{\hskip25pc IEEE TRANSACTIONS AND JOURNALS TEMPLATE}
{Author \MakeLowercase{\textit{et al.}}: Title}

\begin{document}
\title{One Equation to Rule Them All---Part I: Direct Data-Driven Cascade Stabilisation}
\author{Junyu Mao, \IEEEmembership{Student Member, IEEE}, Emyr Williams, \IEEEmembership{Student Member, IEEE}, 
Thulasi Mylvaganam, \IEEEmembership{Senior Member, IEEE}, and Giordano Scarciotti, \IEEEmembership{Senior Member, IEEE}
\thanks{J. Mao, E. Williams, and G. Scarciotti are with the Department of Electrical and Electronic Engineering, Imperial College London, SW7 2AZ London, U.K. T. Mylvaganam and E. Williams are with the Department of Aeronautics, Imperial College London, SW7 2AZ London, U.K.
(e-mail:  junyu.mao18@ic.ac.uk; emyr.williams18@ic.ac.uk; t.mylvaganam@ic.ac.uk; g.scarciotti@ic.ac.uk). }}

\maketitle

\begin{abstract}
In this article we present a framework for direct data-driven control for general problems involving interconnections of dynamical systems. We first develop a method to determine the solution of a Sylvester equation from data. Such solution is used to describe a subspace that plays a role in a large variety of problems. We then provide an error analysis of the impact that noise has on this solution. This is a crucial contribution because, thanks to the interconnection approach developed throughout the article, we are able to track how the noise propagates at each stage, and thereby provide bounds on the final designs. Among the many potential problems that can be solved with this framework, we focus on three representatives: cascade stabilisation, model order reduction, and output regulation. This manuscript studies the first problem, while the companion Part II addresses the other two. For each of these settings we show how the problems can be recast in our framework. In the context of cascade stabilisation, we consider the 2-cascade problem, the effect of noise through the cascade, as well as $\mathcal{N}$-cascade case, and we demonstrate that our proposed method is data efficient. The proposed designs are illustrated by means of a numerical example.
\end{abstract}

\begin{IEEEkeywords}
Control design, data-driven control, learning systems, linear matrix inequalities, robust control, output regulation, model order reduction
\end{IEEEkeywords}

\section{Introduction}
\label{sec:introduction}
\IEEEPARstart{T}{he} \emph{Sylvester equation}
\begin{equation}
\label{Eq:Sylvester}
\mathbf{A} \mathbf{X} - \mathbf{X} \mathbf{B} = \mathbf{CD},
\end{equation}
is a linear matrix equation in the unknown matrix $\mathbf{X}$, with 
$(\mathbf{A},\mathbf{B},\mathbf{C},\mathbf{D})$ given matrices of appropriate dimensions. This equation plays a fundamental role in numerous system analysis and control problems. An incomplete list of the system-theoretic applications of \eqref{Eq:Sylvester} include output regulation \cite{francis1976internal}, cascade stabilisation \cite{astolfi2022harmonic}, observer design \cite{luenberger1964observing}, eigenstructure assignment \cite{shafai1988algorithm}, model order reduction \cite{gallivan2004sylvester, astolfi2010model}, disturbance decoupling \cite{syrmos2002disturbance}, and hierarchical control \cite{girard2009hierarchical}. The reader is referred to \cite{astolfi2024role} for a unified overview of the applications of \eqref{Eq:Sylvester}, and to the more recent \cite{simpson-porco-arxiv} for a study of the properties of certain secondary operators associated to \eqref{Eq:Sylvester} upon which system analysis and design methods depend. 

A common feature across the aforementioned applications for which the Sylvester equation \eqref{Eq:Sylvester}  plays a central role, is that its
solution $\mathbf{X}$ often acts as an intermediate object, suggesting a certain change of state coordinates that enables a follow-up design. In the classic \textit{model-based control paradigm}, this solution $\mathbf{X}$ is computed based on the exact knowledge of $(\mathbf{A},\mathbf{B},\mathbf{C},\mathbf{D})$, which requires \textit{a priori} identification of the involved systems. This paradigm, however, is restrictive when only partial, uncertain or even no prior information about the involved systems is available, significantly obstructing the construction of practical solutions to the aforementioned problems. 

Inspired by Willems \textit{et al.}'s \textit{fundamental lemma} \cite{willems-lemma}, which shows that any possible trajectory of a linear system can be parameterised as a linear combination of collected trajectories under certain persistence of excitation conditions, a growing volume of research has been devoted to direct data-driven control design. In this setting, control laws are formulated directly with system data rather than by using transfer-function or state-space representations, thereby bypassing the need for an intermediate system identification step. A seminal contribution by De Persis and Tesi \cite{de2019formulas} provides formulas for data-driven control,  which are used for stabilisation and optimal control, exploiting data-dependent linear matrix inequalities (LMIs). This philosophy has been referred to as the \textit{direct data-driven control paradigm} and has inspired numerous model-free control solutions in various settings, \eg, data informativity \cite{informativity}, time-varying systems \cite{nortmann2023direct}, time-delay systems \cite{rueda2021data}, and nonlinear systems \cite{de2023learning}.

Following this direct data-driven paradigm, in this article we find the solution $\mathbf{X}$ to \eqref{Eq:Sylvester} by leveraging system data to replace the need of knowing the associated matrices. Building on this preliminary result, we introduce an interconnection-based framework for addressing classes of data-driven control problems. The central idea is to decompose complex tasks into interconnections of two (or multiple) dynamical systems, offering a unifying perspective that departs from existing literature. For instance, while specific problems such as data-driven output regulation or data-driven model order reduction have been studied in the literature, our approach is embedded in a broader theoretical context that enables a systematic treatment of multiple tasks using a shared methodology. Although we illustrate our framework by solving three representative problems, namely cascade stabilisation, model order reduction, and output regulation, its applicability extends beyond these cases. A list of potential additional applications of our framework can be inferred from \cite{astolfi2024role,simpson-porco-arxiv}. These have not been developed here due to space constraints.
A second key contribution lies in our comprehensive noise analysis, which is made possible precisely due to the interconnection-based formulation. Our framework enables the propagation and quantification of noise effects through multiple interconnected components, allowing us to derive explicit performance bounds on the final closed-loop systems directly from noise characteristics in the input data. This capability distinguishes our method from traditional two-step approaches, such as system identification followed by controller design, where such end-to-end noise guarantees are generally unavailable. The ability to track noise through the system interconnections, particularly in tasks like data-driven output regulation and model order reduction, represents a significant advantage and provides a further justification for the choice of a direct data-driven approach over two-step approaches.  

Below we provide an overview of the key related works in the first of the three selected application domains, \ie, cascade stabilisation. 

{\small \sf \color{nblue} \textit{Cascade Stabilisation:}} Stabilisation of subsystems connected in series is of significance in both theory and practice. In contrast to the design that considers the entire interconnected system as a unity, cascade stabilisation approaches follow a \textit{divide-and-conquer} philosophy, namely decomposing the overall high-dimensional stabilisation problem into low-dimensional, simpler sub-problems and stabilising subsystems recursively by exploiting their (series) interconnection structure. Successful attempts are given, \eg, in \cite{syrmos2002output, astolfi2022harmonic, chen2023data}. In particular, the approach proposed in \cite{astolfi2022harmonic} is known as the \textit{forwarding method}, where the solution of a Sylvester equation in the form of \eqref{Eq:Sylvester} is used to ensure that the latter subsystem strictly tracks a certain invariant subspace of the former one, guaranteeing the stability sequentially. However, these methods still rely on the models of each subsystem. Their data-driven counterparts, to the best of the authors' knowledge, are still missing. 

{\small \sf \color{nblue} \textbf{Contributions.}} The main contributions of this manuscript are summarised as follows.
\begin{itemize}
    \item[(I)] We derive a novel data-based reformulation of the Sylvester equation. This allows for the direct computation of its solution from data by solving a linear program (LP). For the practical case of noisy data, we establish a bound on the perturbation of the computed solution as a function of the noise level.
    
    \item[(II)] We leverage the framework (I) to develop a data-driven procedure that addresses the $2$-cascade stabilisation problem through solving data-dependent LPs and LMIs. 

    \item[(III)] For the setting in (II), we propose a noise-robust procedure that guarantees an input-to-state stability-like result in the presence of noise.

    \item[(IV)] We extend the proposed data-driven stabilisation approach to the general $\mathcal{N}$-cascade setting, and we show that this procedure requires a substantially smaller number of data samples than an alternative approach that treats the cascade as a monolithic system.
    
\end{itemize}

A preliminary small portion of this work has been accepted for presentation at the 2025 Conference on Decision and Control~\cite{Williams2025Direct}. That submission addresses solely the noise-free data-driven 2-cascade stabilisation problem from a domain-specific angle. In contrast, this manuscript provides significantly broader contributions, as outlined above. Furthermore, it forms Part I of a two‑part article. Part II \cite{mao2025partTwo} demonstrates the breadth and applicability of this framework by means of the model order reduction and output regulation problems.

{\small \sf \color{nblue} \textbf{Organisation.}} Section~\ref{sec:preliminaries} recalls some preliminaries needed for the subsequent sections. In Section~\ref{sec:dataSolutionSyl}, we introduce a framework for the computation of the solution to the Sylvester equation directly from 
data samples. This is then extensively exploited to produce direct data-driven approaches for the problem of cascade stabilisation in Section~\ref{sec:cascadeStabilisation}. Specifically, Section \ref{sec:cascade-full-info} proposes a data-driven procedure for the noise-free $2$-cascade setting, while Section \ref{sec:cascade-noise} investigates the influence of the noise. The results for the $2$-cascade stabilisation problem are then extended to a general setting in Section \ref{sec:N-cascade} where the cascade consists of $\mathcal{N}$ subsystems. The proposed design is demonstrated using an illustrative numerical example in Section \ref{sec:cascade-example}. Section~\ref{sec:conclusion} contains some concluding remarks.

{\small \sf \color{nblue} \textbf{Notation.}} We use standard notation. $\RR$, $\CC$ and $\ZZ$ denote the sets of real numbers, complex numbers and integer numbers. $\ZZn$ ($\ZZp$) denotes the set of nonnegative (positive) integers. The symbols $I$ and $\textbf{0}$ denote the identity matrix and the zero matrix, respectively, whose dimensions can be inferred from the context. $A^\top$ and  $\rank (A)$ indicate the transpose and the rank of any matrix $A$, respectively. The set of eigenvalues (singular values) of a matrix $A$ is denoted by $\lambda(A)$ ($\sigma(A)$). 
Given a matrix $A$ of full row rank, $A^\dagger$ represents the right inverse such that $AA^\dagger = I$. 
Given a symmetric matrix $A \in \mathbb{R}^{n \times n}$, $A \succ 0$ ($A \prec 0$) denotes that $A$ is positive- (negative-) definite. Given a matrix $A \in \mathbb{R}^{n \times m}$, the operator $\vecop(A)$ indicates the vectorization of $A$, which is the $nm \times 1$ vector obtained by stacking the columns of the matrix $A$ one on top of the other. $\|A\|_2$, $\|A\|_F$ and $\|A\|_\infty$ denote the spectral, Frobenius, and infinity norms of the matrix $A$, respectively. $\image(A)$ denotes the image of any matrix $A$. Given matrices  $X_1, \cdots, X_n$ (with the same number of columns), $\text{col}(X_1, \cdots, X_n)$ denotes their vertical concatenation. Given a vector $x$, the symbol $\|x\|_2$ ($\|x\|_F$) denotes its Euclidean (Frobenius) norm. 
The symbol $\otimes$ indicates the Kronecker product. The symbol $\iota$ denotes the imaginary unit. \\
We use capital versions of lower case letters to indicate the corresponding data matrices. For example, given a signal $x: \ZZn \to \RR^{n}$ and a positive integer $T \in \ZZp$, we define 
$$
\begin{array}{rl}
X_{-} &\!\!\!\!\!:=  \begin{bmatrix} 
    x(0) & x(1) & \cdots & x(T-1)
\end{bmatrix},\\[2mm]
X_{+} &\!\!\!\!\!:=  \begin{bmatrix} 
    x(1) & x(2) & \cdots & x(T)
\end{bmatrix}.
\end{array}
$$

\section{Preliminaries} \label{sec:preliminaries}
In this section, we recall two technical preliminary results, on which the main contributions of this article are based. First, in Section \ref{sec:preliminaries-sylvester}, we recall the role of the Sylvester equation in the context of the cascade interconnection of two discrete-time linear time-invariant (LTI) systems. Second, in Section~\ref{sec:preliminaries-data-driven}, we revisit the methodology outlined in \cite{de2019formulas} for the data-driven stabilisation of LTI systems.


\subsection{Sylvester Equation and Cascade Interconnection}
\label{sec:preliminaries-sylvester}
To avoid potential notational confusion\footnote{This is because the matrices $\mathbf{A}, \mathbf{B}, \mathbf{C}, \mathbf{D}$ and $\mathbf{X}$ possess some conventional interpretations in the state-space representation.} arising from \eqref{Eq:Sylvester}, throughout the remainder of this article, we consider the following Sylvester equation without loss of generality
\begin{equation} \label{eq:SylEqGeneric}
    \mathbf{A_2} \mathbf{\Theta} -  \mathbf{\Theta} \mathbf{A_1} = - \mathbf{B_2} \mathbf{C_1},
\end{equation}
in the unknown $\mathbf{\Theta} \in \mathbb{R}^{n_2\times n_1}$,
with $\mathbf{A_1} \in \RR^{n_1 \times n_1}$, $\mathbf{A_2} \in \RR^{n_2 \times n_2}$, $\mathbf{C_1} \in \RR^{p_1 \times n_1}$ and $\mathbf{B_2} \in \RR^{n_2 \times p_1}$. If $\mathbf{A}_1$ and $\mathbf{A}_2$ do not share eigenvalues (\ie{}, $\lambda(\mathbf{A_1}) \cap \lambda(\mathbf{A_2}) = \emptyset$), the Sylvester equation \eqref{eq:SylEqGeneric} has the unique solution $\mathbf{\Theta}$. 
Consider 
then two discrete-time LTI systems described by
\begin{equation} \label{Eq:SigmaSigma1}
\mathbf{\Sigma_1}:\,\,\begin{cases}\,\begin{aligned}
x_1(k+1) &= \mathbf{A_1} x_1(k) + \mathbf{B_1} u_1(k) \\
y_1(k) &= \mathbf{C_1} x_1(k) 
\end{aligned}\end{cases} 
\end{equation}
and
\begin{equation}
\label{Eq:SigmaSigma2}
\mathbf{\Sigma_2}:\,\,\begin{cases}\,\begin{aligned}
x_2(k+1) &= \mathbf{A_2} x_2(k) + \mathbf{B_2} u_2(k) \\
y_2(k) &= \mathbf{C_2} x_2(k) 
\end{aligned}\end{cases} 
\end{equation}
with $x_1(k) \in \RR^{n_1}$, $u_1(k) \in \RR^{m_1}$, $y_1(k) \in \RR^{p_1}$,
 $x_2(k) \in \RR^{n_2}$, $u_2(k) \in \RR^{m_2}$, and $y_2(k) \in \RR^{p_2}$. 
 
The Sylvester equation \eqref{eq:SylEqGeneric} is known to be associated with the cascade interconnection of $\mathbf{\Sigma}_1$ and $\mathbf{\Sigma}_2$, in which $\mathbf{\Sigma}_1$ drives $\mathbf{\Sigma}_2$ via $u_2(k) = y_1(k)$. In particular, when $u_1 \equiv 0$, the solution $\mathbf{\Theta}$ characterises an \textit{invariant subspace}
$$
\mathcal{M} =\{(x_1,x_2) \in \RR^{n_1 + n_2}: x_2 = \mathbf{\Theta} x_1\}, 
$$
of the cascade system. 
Thus, when $u_1 \not\equiv 0$, $\mathbf{\Theta}$ also suggests the construction of an error signal (or, equivalently, a change of coordinates). Namely consider 
$\zeta := x_2 - \mathbf{\Theta} x_1$, and note that 
\begin{equation*} 
\begin{aligned}
    \zeta(k+1) &\!=\!  x_2(k+1) - \mathbf{\Theta} x_1(k+1) \\
    &\!=\!   \mathbf{A_2} x_2(k) + \mathbf{B_2} u_2(k) 
     - \mathbf{\Theta} \left( \mathbf{A_1} x_1(k) + \mathbf{B_1} u_1(k)\right) \\
    &\!=\!    \mathbf{A_2} x_2(k) \!+\! \mathbf{B_2} \mathbf{C_1} x_1(k) 
     \!-\! \mathbf{\Theta} \!\left( \mathbf{A_1} x_1(k) \!+\! \mathbf{B_1} u_1(k)\!\right) \\
    &\!=\!  \mathbf{A_2} x_2(k) + \underbrace{(\mathbf{B_2} \mathbf{C_1}
     - \mathbf{\Theta} \mathbf{A_1})}_{=- \mathbf{A_2} \mathbf{\Theta} \text{ by \eqref{eq:SylEqGeneric}}} x_1(k) - \mathbf{\Theta} \mathbf{B_1} u_1(k) \\
    &\!=\!  \mathbf{A_2} (x_2(k) - \mathbf{\Theta} x_1(k)) - \mathbf{\Theta} \mathbf{B_1} u_1(k) \\
    &\!=\!  \mathbf{A_2} \zeta(k) - \mathbf{\Theta} \mathbf{B_1} u_1(k). 
\end{aligned}
\end{equation*}
At this point the literature studies two different cases, depending on whether $\mathbf{\Sigma_1}$ is considered the ``main system'' (\eg, a system to be controlled) while $\mathbf{\Sigma_2}$ is considered an ``auxiliary system'' (\eg, an internal model), or \textit{vice versa}. In the first case, which appears in problems such as cascade stabilisation, dynamic output regulation, and the so-called swapped moment matching method, $\mathbf{\Sigma_1}$ represents the (or a) ``main system'' with $u_1 \not \equiv 0$ \cite{simpson-porco-arxiv}. In the second case, which appears for instance in static output regulation and direct moment matching, $\mathbf{\Sigma_1}$ represents an ``auxiliary system'', typically an autonomous signal generator, and thus  $u_1 \equiv 0$ \cite{simpson-porco-arxiv}. In this Part I, we consider the first case, which will be revisited in Section~\ref{sec:cascade-problem-formulation}.

\subsection{Data-driven Stabilisation of Linear Systems}
\label{sec:preliminaries-data-driven}

Consider a discrete LTI system in the form \eqref{Eq:SigmaSigma1}. Suppose that the matrices $\mathbf{A_1}$ and $\mathbf{B_1}$ are unknown, but that data-collection experiments can be conducted to measure the values of $x_1(k)$ for given inputs $u_1(k)$. 
These input and state data collected from the experiment are used to form data matrices $U_{1,-}, X_{1,-}$, and $X_{1,+}$, each
of length $T$ (see the notation paragraph in Section \ref{sec:introduction} for the meaning of these symbols). Provided that the rank condition\footnote{The rank condition implies that $T \ge n_1 + m_1$, which is a lower bound on the required number of data points.}
\begin{equation} \label{eq:rank-condition}
    \text{rank}\left(\begin{bmatrix} X_{1,-} \\ U_{1,-}
    \end{bmatrix}\right) = n_1 + m_1,
\end{equation}
holds, a control law can either be designed after a system identification step, or directly from data \cite{willems-lemma}. In particular, if the above rank condition is satisfied, the \textit{open-loop} system can be represented in terms of the collected data as \cite{de2019formulas}
\begin{equation} \label{eq:data-openloop-representation}
    \begin{bmatrix}
        \mathbf{A_1} & \mathbf{B_1}
    \end{bmatrix} = X_{1,+}\begin{bmatrix}
        X_{1,-}\\U_{1,-}
    \end{bmatrix}^\dagger.
\end{equation}
Using \eqref{eq:data-openloop-representation},  the \textit{closed-loop} system $\mathbf{A_1}+\mathbf{B_1}K$ can be represented through data as
\begin{equation} \label{eq:linear-closed-loop-x1g}
\begin{split}    
    &\mathbf{A_1}+\mathbf{B_1}K 
    = X_{1,+}G_{K},
\end{split}
\end{equation}
where $G_{K} \in \RR^{T \times n_1}$ is any matrix such that
\begin{equation} \label{eq:K-to-g-mapping}
    \begin{bmatrix}
        I \\ K
    \end{bmatrix}=\begin{bmatrix}
        X_{1,-} \\ U_{1,-}
    \end{bmatrix}G_{K},
\end{equation}
as shown in \cite{de2019formulas}. Based on \eqref{eq:K-to-g-mapping}, in \cite{de2019formulas} it has also been shown 
that a stabilising feedback gain $K \in \mathbb{R}^{m_1\times n_1}$ can be designed via the solution of a convex optimisation problem in the form of a linear matrix inequality (LMI). Namely, $K := U_{1,-}Q_{x}(X_{1,-}Q_{x})^{-1}$, where $Q_{x} \in \mathbb{R}^{T\times n_1}$ is any matrix satisfying
\begin{equation} \label{eq:basic-sdp}
    \begin{bmatrix}
        X_{1,-}Q_{x} & X_{1,+}Q_{x} \\ Q_{x}^\top X_{1,+}^\top & X_{1,-}Q_{x}
    \end{bmatrix} \succ 0,
\end{equation}
is such that system \eqref{Eq:SigmaSigma1} in closed-loop with $u_1(k)= K x_1(k)$ is asymptotically stable, \ie, $\mathbf{A_1}+\mathbf{B_1}K$ is rendered Schur. In such case, 
\begin{equation} \label{eq:basic-sdp-GK}
G_{K} = Q_{x}(X_{1,-}Q_{x})^{-1}. 
\end{equation}
In practice, one does not have to guarantee the rank condition \eqref{eq:rank-condition} on the collected data. As a matter of fact, the stabilisation (by state feedback) can be achieved as long as the LMI \eqref{eq:basic-sdp} is feasible, which corresponds to the so-called \textit{data-informativity} condition. Note that this informativity condition is weaker than the rank condition \eqref{eq:rank-condition}, hence it is not sufficient for system identification, providing an advantage of direct data-driven stabilisation over two-step procedures involving system identification as the first step. The reader is directed to \cite{informativity} for further details.

Now suppose that the data samples are corrupted by the measurement noise, that is, only the noisy measurement $\bar{x}_1(k): = x_1(k) + \Delta x_1(k)$, with $\Delta x_1(k)$ the measurement noise, is available. Then the closed-loop matrix 
$\mathbf{A_1}+\mathbf{B_1}K$ can be represented through data as
\begin{equation} \label{eq:linear-closed-loop-x1g-noise}
\begin{split}    
    &\mathbf{A_1}+\mathbf{B_1}K  = (\bar{X}_{1,+} + R_{1, -}) G_{K},
\end{split}
\end{equation}
where $R_{1, -} := \mathbf{A_1}X_{1,-} - X_{1,+}$ and $G_{K} \in \RR^{T \times n_1}$ is any matrix such that
\begin{equation} \label{eq:K-to-g-mapping-noise}
    \begin{bmatrix}
        I \\ K
    \end{bmatrix}=\begin{bmatrix}
        \bar{X}_{1,-} \\ U_{1,-}
    \end{bmatrix}G_{K},
\end{equation}
provided that the rank condition 
\begin{equation} \label{eq:rank-condition-noise}
    \text{rank}\left(\begin{bmatrix} \bar{X}_{1,-} \\ U_{1,-}
    \end{bmatrix}\right) = n_1 + m_1,
\end{equation}
holds. The representation \eqref{eq:linear-closed-loop-x1g-noise} can be exploited to form a noise-robust result, where a more conservative LMI is formulated to ensure 
$\mathbf{A_1}+\mathbf{B_1}K$ is Schur. Namely, provided that the following signal-to-noise ratio (SNR) condition
\begin{equation} \label{eq:snr-condition-normal}
    R_{1,-} R_{1,-}^\top \preceq \frac{\alpha^2}{2(2+\alpha)} \bar{X}_{1,+} \bar{X}_{1,+}^\top,
\end{equation}
holds, $K := U_{1,-}Q_{x}(\bar{X}_{1,-}Q_{x})^{-1}$, where $Q_{x} \in \mathbb{R}^{T\times n_1}$ is any matrix satisfying
\begin{subequations}
    \label{eq:sdp-for-noisy-system}
    \begin{align}
    &\begin{bmatrix} 
        \bar{X}_{1,-}Q_x - \alpha \bar{X}_{1,+} \bar{X}_{1,+}^\top& \bar{X}_{1,+}Q_x \\ Q_x^\top \bar{X}_{1,+}^\top & \bar{X}_{1,-}Q_x
    \end{bmatrix} \succ 0, \\
    &\begin{bmatrix} 
        I & Q_x \\ Q_x^\top & \bar{X}_{1,-} Q_x
    \end{bmatrix} \succ 0,
     \end{align}
\end{subequations}
is such that $\mathbf{A_1}+\mathbf{B_1}K$ is still rendered Schur.

\section{Direct Data-driven Solutions of the Sylvester Equation} \label{sec:dataSolutionSyl}
This section addresses the problem of directly computing the solution to Sylvester equations using data samples collected from the systems of interest, without resorting to intermediate system identification procedures. We investigate an instrumental (experimental) configuration, in which the dynamics of the first subsystem are known, whereas those of the second subsystem remain unknown. We will see across Part I and Part II \cite{mao2025partTwo} how such a configuration appears extensively in problems of practical interest. 
Furthermore, we analyse how the presence of measurement and process noise in these subsystems can impact the computed solution. Our noise analysis is constructed in such a way as to carry over to the following applications, offering insights into the robustness of the proposed methodologies under empirical conditions.

\subsection{Solving the Sylvester Equation from Noise-free Data} \label{sec:solutionPrimalSyl}
Consider that one knows the system matrices of $\mathbf{\Sigma_1}$ (\ie, $\mathbf{A_1}$, $\mathbf{B_1}$ and $\mathbf{C_1}$ are given), whereas the
system matrices of $\mathbf{\Sigma}_2$ 
are entirely unknown. 
\begin{remark} \label{remark:Sigma1known}
We stress from the outset that assuming that $\mathbf{\Sigma}_1$ is known is largely without loss of generality because, as it will become clear later in this article and the companion Part II \cite{mao2025partTwo}, this knowledge arises from either (i) the sequential nature of the proposed data-driven procedure, or from the fact that (ii) the first subsystem is designed by the user and does not have a physical embodiment. Nevertheless, for the sake of completeness, the general case where both subsystems are unknown is discussed in Part II.
\end{remark}
Suppose that we have access to exact measurements of $x_2(k)$ and $u_2(k)$.
For the time being, we introduce the following assumption on the data. 
\begin{assumption} \label{ass:rankXU}
The available data matrices are such that the following rank condition is satisfied
\begin{equation} \label{eq:rankXU}
    \rank \left(\begin{bmatrix}
        X_{2, -} \\ U_{2, -}
    \end{bmatrix}\right)
     = n_2 + m_2.
\end{equation} 
\end{assumption}
A sufficient condition for Assumption~\ref{ass:rankXU} to hold is that the input sequence is persistently exciting of order $n_2+1$, and that the pair $(\mathbf{A_2}, \mathbf{B_2})$ is controllable, see \cite[Corollary~2]{willems-lemma}. Under Assumption \ref{ass:rankXU} on the data matrices, the following result establishes a data-based reformulation of the Sylvester equation such that \eqref{eq:SylEqGeneric} can be solved directly from the collected data.

\begin{lemma} \label{lemma:computePI}
Consider systems $\mathbf{\Sigma_1}$ and $\mathbf{\Sigma_2}$, and the Sylvester equation \eqref{eq:SylEqGeneric}. 
Suppose that Assumption \ref{ass:rankXU} holds and that $\lambda(\mathbf{A_1}) \cap \lambda(\mathbf{A_2}) = \emptyset$. Then any matrix $G_\Theta \in \RR^{T \times n_1}$ that satisfies
\begin{subequations} \label{eq:dataRepPi} 
  \begin{empheq}[left=\empheqlbrace]{align}
    X_{2, +}  G_\Theta &= X_{2, -} G_\Theta \mathbf{A_1}   \label{eq:dataRepPi1} \\
    U_{2, -}  G_\Theta &= \mathbf{C_1}    \label{eq:dataRepPi2}
  \end{empheq}
\end{subequations}
is such that
\begin{equation} \label{eq:parameterizedPi}
    \mathbf{\Theta} := X_{2, -} G_\Theta
\end{equation}
is the solution of \eqref{eq:SylEqGeneric}.
Conversely, the solution of \eqref{eq:SylEqGeneric} can be written as in \eqref{eq:parameterizedPi}, with $G_\Theta$ solution of \eqref{eq:dataRepPi}.  
\end{lemma}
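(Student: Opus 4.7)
The plan is to prove both directions via direct algebraic manipulation, exploiting the fact that the data matrices satisfy the identity $X_{2,+} = \mathbf{A_2} X_{2,-} + \mathbf{B_2} U_{2,-}$ coming from the dynamics of $\mathbf{\Sigma_2}$.

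For the forward direction, I would assume $G_\Theta$ satisfies \eqref{eq:dataRepPi} and define $\mathbf{\Theta} := X_{2,-} G_\Theta$. Right-multiplying the dynamics identity by $G_\Theta$ gives
\[
X_{2,+} G_\Theta \;=\; \mathbf{A_2} X_{2,-} G_\Theta + \mathbf{B_2} U_{2,-} G_\Theta \;=\; \mathbf{A_2} \mathbf{\Theta} + \mathbf{B_2} \mathbf{C_1},
\]
where the second equality uses \eqref{eq:dataRepPi2} and the definition of $\mathbf{\Theta}$. On the other hand, \eqref{eq:dataRepPi1} yields $X_{2,+} G_\Theta = X_{2,-} G_\Theta \mathbf{A_1} = \mathbf{\Theta} \mathbf{A_1}$. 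Equating the two expressions gives $\mathbf{A_2}\mathbf{\Theta} + \mathbf{B_2}\mathbf{C_1} = \mathbf{\Theta}\mathbf{A_1}$, i.e., \eqref{eq:SylEqGeneric}.

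For the converse, let $\mathbf{\Theta}$ denote the unique solution of \eqref{eq:SylEqGeneric} (uniqueness is guaranteed by the eigenvalue disjointness assumption). By Assumption~\ref{ass:rankXU}, $\begin{bmatrix} X_{2,-}^\top & U_{2,-}^\top \end{bmatrix}^\top$ has full row rank, so the linear system
\[
\begin{bmatrix} X_{2,-} \\ U_{2,-} \end{bmatrix} G_\Theta \;=\; \begin{bmatrix} \mathbf{\Theta} \\ \mathbf{C_1} \end{bmatrix}
\]
admits a solution $G_\Theta \in \RR^{T \times n_1}$. This $G_\Theta$ satisfies \eqref{eq:dataRepPi2} and $X_{2,-} G_\Theta = \mathbf{\Theta}$. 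It remains to verify \eqref{eq:dataRepPi1}: by the same dynamics identity,
\[
X_{2,+} G_\Theta \;=\; \mathbf{A_2} \mathbf{\Theta} + \mathbf{B_2} \mathbf{C_1} \;=\; \mathbf{\Theta} \mathbf{A_1} \;=\; X_{2,-} G_\Theta \mathbf{A_1},
\]
where the middle equality is precisely the Sylvester equation.

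There is no real obstacle here---the proof is essentially a book-keeping exercise that leverages (a) the data identity for the (unknown) matrices $(\mathbf{A_2},\mathbf{B_2})$ encoded in $X_{2,+} = \mathbf{A_2} X_{2,-} + \mathbf{B_2} U_{2,-}$, and (b) full row rank of the stacked data matrix to ensure the existence of $G_\Theta$ in the converse. The only point that merits a remark is that uniqueness of the Sylvester solution under $\lambda(\mathbf{A_1}) \cap \lambda(\mathbf{A_2}) = \emptyset$ is not needed for the forward implication; it is relevant only insofar as one wants to state that every Sylvester solution is captured by \eqref{eq:parameterizedPi}.
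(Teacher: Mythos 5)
Your proof is correct and follows essentially the same route as the paper's: both directions hinge on the data identity $X_{2,+} = \mathbf{A_2} X_{2,-} + \mathbf{B_2} U_{2,-}$, with the converse using full row rank of the stacked data matrix (Rouch\'e--Capelli) to parameterise $\col(\mathbf{\Theta}, \mathbf{C_1})$ and the eigenvalue-disjointness condition only for uniqueness. Your closing remark that uniqueness is not needed for the forward implication matches the structure of the paper's argument, which likewise invokes it only at the end.
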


\begin{proof}
\textit{(Necessity)} Suppose that $\mathbf{\Theta}$ is a solution of the Sylvester equation \eqref{eq:SylEqGeneric}. Rewriting the equation in matrix form, we obtain
\begin{equation} \label{eq:SylPrimalMat}
    \begin{bmatrix}
        \mathbf{A}_2 & \mathbf{B}_2
    \end{bmatrix}
    \begin{bmatrix}
        \mathbf{\Theta} \\  \mathbf{C}_1 
    \end{bmatrix}
    = 
    \mathbf{\Theta} \mathbf{A}_1.
\end{equation}
By the rank condition \eqref{eq:rankXU}, and applying the Rouché–Capelli theorem, there exists a matrix \( G_\Theta \) such that
\begin{equation} \label{eq:parametrizationPi}
    \begin{bmatrix}
        \mathbf{\Theta} \\  \mathbf{C}_1 
    \end{bmatrix}
    = 
    \begin{bmatrix}
        X_{2,-} \\ U_{2,-}
    \end{bmatrix}
    G_\Theta.
\end{equation}
Substituting \eqref{eq:parametrizationPi} into \eqref{eq:SylPrimalMat} yields
\begin{align*}
    \begin{bmatrix}
        \mathbf{A}_2 & \mathbf{B}_2
    \end{bmatrix}
    \begin{bmatrix}
        X_{2,-} \\ U_{2,-}
    \end{bmatrix}
    G_\Theta
    = X_{2,+} G_\Theta
    = \mathbf{\Theta} \mathbf{A}_1
    = X_{2,-} G_\Theta \mathbf{A}_1,
\end{align*}
where we used the relation $X_{2,+} = \mathbf{A}_2 X_{2,-} + \mathbf{B}_2 U_{2,-}$ for the first equality, \eqref{eq:SylPrimalMat} for the second equality, and the first block row of \eqref{eq:parametrizationPi} for the third equality. This chain of equalities yields \eqref{eq:dataRepPi1}. In addition, the last block row of \eqref{eq:parametrizationPi} yields \eqref{eq:dataRepPi2} directly. 

\textit{(Sufficiency)} Suppose that $G_{\Theta}$ is a solution of \eqref{eq:dataRepPi}. It follows from \eqref{eq:dataRepPi1} and \eqref{eq:dataRepPi2} that
\begin{align*}
    \mathbf{A_2} (X_{2, -} G_\Theta)  - (X_{2, -} G_\Theta) & \mathbf{A_1} \overset{\eqref{eq:dataRepPi1}}{=} \mathbf{A_2} X_{2, -} G_\Theta -  X_{2, +}  G_\Theta \\
    &= (\mathbf{A_2} X_{2, -} - X_{2, +}) G_\Theta \\
    &= - \mathbf{B_2} U_{2, -}  G_\Theta 
    \overset{\eqref{eq:dataRepPi2}}{=} - \mathbf{B_2} \mathbf{C_1},
\end{align*}
where in the third equality we exploited that $X_{2,+} = \mathbf{A}_2 X_{2,-} + \mathbf{B}_2 U_{2,-}$, implying that $X_{2, -} G_\Theta$ is a solution of \eqref{eq:SylEqGeneric}. Furthermore, the condition $\lambda(\mathbf{A_1}) \cap \lambda(\mathbf{A_2}) = \emptyset$ implies that this solution (see \cite[Proposition 6.2]{antoulas2005approximation}) is unique.  
\end{proof}

\begin{remark}
The equations \eqref{eq:dataRepPi} can be used as constraints in a feasibility problem to determine a matrix \( G_\Theta \), and thereby construct the solution \(\mathbf{\Theta} = X_{2,-} G_\Theta\) of \eqref{eq:SylEqGeneric} directly from the available data. Since the constraints are linear in the decision variable \(G_\Theta\), this feasibility problem is a linear program (LP) and can be solved efficiently by modern solvers (\eg, MOSEK \cite{aps2019mosek}) in polynomial time \cite{potra2000interior}, scaling well to cases where $n_1, n_2$, and $T$ are on the order up to thousands.
\end{remark}

\begin{remark}[Data Informativity]
To guarantee the existence of $G_\Theta$ in \eqref{eq:parametrizationPi}, the sufficient condition \eqref{eq:rankXU} can be replaced by the weaker condition
\begin{equation} \label{eq:dataInformativityCondition}
        \image\left(\begin{bmatrix}
        \mathbf{\Theta} \\  \mathbf{C}_1 
    \end{bmatrix}\right) \subset \image\left(\begin{bmatrix}
        X_{2,-} \\ U_{2,-}
    \end{bmatrix}\right).
\end{equation}
This condition, which is equivalent to the feasibility of \eqref{eq:dataRepPi}, is not sufficient for the identification of $\mathbf{\Sigma_2}$. However, it is necessary and sufficient for determining the unique solution $\mathbf{\Theta}$ 
of \eqref{eq:SylEqGeneric} by solving \eqref{eq:dataRepPi}, \ie, \eqref{eq:dataInformativityCondition} or the feasibility of \eqref{eq:dataRepPi} serves as the data-informativity condition.
Analogously, all rank conditions that we provide in this article 
could be turned, mutatis mutandis, into weaker informativity conditions in terms of the feasibility of relevant matrix equalities or LMIs. However, for simplicity of exposition, we formulate only the rank conditions in the rest of the article. 
\end{remark}

\subsection{Error Analysis under Noise-corrupted Data} \label{sec:noiseErrorSyl}
In Sections~\ref{sec:solutionPrimalSyl}, we primarily focused on the idealised, noise-free scenario, where exact measurements (of the input and state), were available and the system dynamics were assumed to be unaffected by noise. However, in practical settings, noise may affect the system in several ways: (i) the measured state and input trajectories may be corrupted by sensor noise; (ii) the system may be influenced by both exogenous disturbances and internal process noise; (iii) the prior knowledge of $\mathbf{\Sigma}_1$ may not be exact. Motivated by this, in this section we provide an error analysis 
accounting for all these potential uncertainties, which will be instrumental to then carry out an error analysis in all subsequent applications relying on solutions of Sylvester equations.

Revisiting the configuration in Section~\ref{sec:solutionPrimalSyl}, consider a perturbed version 
of (the unknown) $\mathbf{\Sigma_2}$ given by
\begin{equation}  \label{eq:sysNoises}
     \quad x_2(k+1) = \mathbf{A_2} x_2(k) + \mathbf{B_2} u_2(k) + d_2(k),
\end{equation}
where $d_2(k) \in \mathbb{R}^{n_2}$ denotes unknown additive system noise, accounting for unmodeled dynamics and/or exogenous disturbances. Now suppose that only the following corrupted quantities are available:
\begin{subequations} \label{eq:uncertainty}
\begin{align}
    \bar{x}_2(k) &= x_2(k) + \Delta x_2(k), \\
    \bar{u}_2(k) &= u_2(k) + \Delta u_2(k), \\
    \mathbf{\bar{A}_1} &= \mathbf{A_1} + \Delta \mathbf{A_1},
\end{align}
\end{subequations}
with $\Delta x_2(k)$ and $\Delta u_2(k)$ the measurement noise on the state and input, and $\Delta \mathbf{A_1}$ the knowledge mismatch on $\mathbf{A_1}$.
To streamline the presentation, we begin by defining a compact or ``encapsulated'' noise signal as
\begin{equation*}
    r_2(k) := \mathbf{A_2} \Delta x_2(k) - \Delta x_2(k+1) + \mathbf{B_2} \Delta u_2(k) - d_2(k),
\end{equation*}
and, correspondingly, we define
\begin{equation} \label{eq:Rdef}
    R_{2,-} := \mathbf{A_2} \Delta X_{2,-} - \Delta X_{2,+} +\mathbf{B_2} \Delta U_{2,-} - D_{2,-}\,.
\end{equation}
which will play a central role in the analysis presented in this subsection. Exploiting this notation, the dynamics of the corrupted state can be compactly expressed as 
\begin{equation} \label{eq:xbarDynamics}
\begin{aligned} 
    \mathbf{\bar{\Sigma}_2}: \bar{x}_2(k+1) = \mathbf{A_2}  \bar{x}_2(k) + \mathbf{B_2} \bar{u}_2(k) - r_2(k).
\end{aligned}
\end{equation}
As before, we impose a rank condition on the data matrices, analogous to Assumption~\ref{ass:rankXU}.

\begin{assumption} \label{ass:rankX-U}
The available data matrices are such that the following rank condition is satisfied
\begin{equation} \label{eq:rankX-U}
    \rank \left(\begin{bmatrix}
        \bar{X}_{2, -} \\ \bar{U}_{2, -} 
    \end{bmatrix}\right)
     = n_2 + m_2.
\end{equation}
\end{assumption}

The following preliminary result presents a ``noisy counterpart'' of Lemma~\ref{lemma:computePI}, under the unrealistic assumption (that we will late remove) that the signal $r_2$ is available for the time being. 

\begin{lemma} \label{lemma:dataRepPibar}
Consider system $\mathbf{\bar{\Sigma}_2}$ defined in \eqref{eq:xbarDynamics} and the Sylvester equation \eqref{eq:SylEqGeneric}. Suppose that Assumption \ref{ass:rankX-U} holds and that $\lambda(\mathbf{A_1}) \cap \lambda(\mathbf{A_2}) = \emptyset$. Then any matrix $\bar{G}_\Theta \in \RR^{T \times n_1}$ that satisfies
\begin{subequations} \label{eq:dataRepPibar} 
  \begin{empheq}[left=\empheqlbrace]{align}
    (\bar{X}_{2, +} + R_{2, -})  \bar{G}_\Theta &= \bar{X}_{2, -} \bar{G}_\Theta \mathbf{A_1}    \label{eq:dataRepPibar1} \\
    \bar{U}_{2, -}  \bar{G}_\Theta &= \mathbf{C_1}    \label{eq:dataRepPibar2}
  \end{empheq}
\end{subequations}
is such that
\begin{equation} \label{eq:parameterizedPibar}
    \mathbf{\Theta} := \bar{X}_{2, -} \bar{G}_\Theta\,,
\end{equation}
is the solution of \eqref{eq:SylEqGeneric}.
Conversely, the solution of \eqref{eq:SylEqGeneric} can be written as in \eqref{eq:parameterizedPibar}, with $\bar{G}_\Theta$ solution of \eqref{eq:dataRepPibar}.  
\end{lemma}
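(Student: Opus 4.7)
The plan is to mirror the structure of the proof of Lemma~\ref{lemma:computePI}, exploiting the fact that the ``corrected'' quantity $\bar{X}_{2,+}+R_{2,-}$ plays in the noisy setting the same role that $X_{2,+}$ plays in the noise-free one. The key algebraic identity I will establish first is
\[
    \bar{X}_{2,+} + R_{2,-} = \mathbf{A_2}\bar{X}_{2,-} + \mathbf{B_2}\bar{U}_{2,-}\,,
\]
which follows by writing $\bar{x}_2(k+1)=x_2(k+1)+\Delta x_2(k+1)$, inserting the perturbed dynamics \eqref{eq:sysNoises}, and observing that the resulting noise combination is exactly $-r_2(k)$ by definition of $r_2$. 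Once this identity is in hand, the remainder of the proof is essentially the argument of Lemma~\ref{lemma:computePI} applied to the triple $(\bar{X}_{2,-},\,\bar{X}_{2,+}+R_{2,-},\,\bar{U}_{2,-})$ in place of $(X_{2,-},X_{2,+},U_{2,-})$.

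For necessity, I would start from the matrix form $[\mathbf{A_2}\;\mathbf{B_2}]\,\col(\mathbf{\Theta},\mathbf{C_1})=\mathbf{\Theta}\mathbf{A_1}$ of \eqref{eq:SylEqGeneric}. Under Assumption~\ref{ass:rankX-U}, the Rouché–Capelli theorem guarantees the existence of $\bar{G}_\Theta$ with $\col(\mathbf{\Theta},\mathbf{C_1})=\col(\bar{X}_{2,-},\bar{U}_{2,-})\,\bar{G}_\Theta$. The bottom block row yields \eqref{eq:dataRepPibar2} immediately, while substituting the top block row into the Sylvester equation and using the key identity above produces \eqref{eq:dataRepPibar1}.

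For sufficiency, given any $\bar{G}_\Theta$ satisfying \eqref{eq:dataRepPibar}, I define $\mathbf{\Theta}:=\bar{X}_{2,-}\bar{G}_\Theta$ and compute
\[
    \mathbf{A_2}\mathbf{\Theta}-\mathbf{\Theta}\mathbf{A_1}
    \overset{\eqref{eq:dataRepPibar1}}{=} \mathbf{A_2}\bar{X}_{2,-}\bar{G}_\Theta - (\bar{X}_{2,+}+R_{2,-})\bar{G}_\Theta,
\]
then use the key identity to rewrite the right-hand side as $-\mathbf{B_2}\bar{U}_{2,-}\bar{G}_\Theta$, and finally invoke \eqref{eq:dataRepPibar2} to arrive at $-\mathbf{B_2}\mathbf{C_1}$. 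Uniqueness of the so-constructed $\mathbf{\Theta}$ follows from $\lambda(\mathbf{A_1})\cap\lambda(\mathbf{A_2})=\emptyset$ via \cite[Proposition~6.2]{antoulas2005approximation}, exactly as before.

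Technically, there is no real obstacle: the lemma has been set up precisely so that packaging measurement noise on state and input, together with the process noise $d_2$, into the single encapsulated signal $r_2$ (and thus $R_{2,-}$) makes the calculation syntactically identical to the noise-free case. The genuine difficulty, which is deliberately postponed past this lemma, is that $R_{2,-}$ is not accessible from data in practice; the purpose of this intermediate step is only to isolate the exact algebraic form of the perturbation that the subsequent error analysis will have to estimate and bound.
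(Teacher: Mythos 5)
Your proposal is correct and follows essentially the same route as the paper's proof: both hinge on the identity $\bar{X}_{2,+} + R_{2,-} = \mathbf{A_2}\bar{X}_{2,-} + \mathbf{B_2}\bar{U}_{2,-}$ implied by the encapsulated-noise dynamics \eqref{eq:xbarDynamics}, after which the argument of Lemma~\ref{lemma:computePI} is replayed with the barred data matrices. The only difference is presentational: the paper states this chain of equalities once and declares necessity and sufficiency analogous to Lemma~\ref{lemma:computePI}, whereas you spell out both directions explicitly, which is if anything more complete.
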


\begin{proof}
Note that under the rank condition \eqref{eq:rankX-U}, now there exists a matrix $\bar{G}_\Theta$ such that
\begin{equation} \label{eq:parametrizationPibar}
    \begin{bmatrix}
        \mathbf{\Theta} \\ \mathbf{C_1}
    \end{bmatrix}
    = 
    \begin{bmatrix}
        \bar{X}_{2, -} \\ \bar{U}_{2, -}
    \end{bmatrix}
    \bar{G}_\Theta. 
\end{equation}
Exploiting the above parametrization,  \eqref{eq:SylPrimalMat} in the proof of Lemma \ref{lemma:computePI} becomes
\begin{align*} 
    \begin{bmatrix}
         \mathbf{A_2} & \mathbf{B_2}
    \end{bmatrix}
    \begin{bmatrix}
       \bar{X}_{2, -} \\ \bar{U}_{2, -}
    \end{bmatrix}
    \bar{G}_\Theta
    =& \;
    (\mathbf{A_2} \bar{X}_{2, -} + \mathbf{B_2} \bar{U}_{2, -}) \bar{G}_{\Theta} \\
    \overset{\eqref{eq:xbarDynamics}}{=} &\, (\bar{X}_{2, +} + R_{2, -}) \bar{G}_\Theta \\
    \overset{\eqref{eq:SylPrimalMat}}{=} &\,  
    \mathbf{\Theta}  \mathbf{A_1}  
    \overset{\eqref{eq:parametrizationPibar}}{=}
    \bar{X}_{2, -} \bar{G}_{\Theta}  \mathbf{A_1}. 
\end{align*}
By exploiting the above equation, the proofs for necessity and sufficiency can be constructed analogously to those in Lemma \ref{lemma:computePI}, and hence are omitted for brevity.
\end{proof}


In practical scenarios, the noise embedded in $R_{2,-}$ is typically unknown and not directly measurable, and the system $\mathbf{\Sigma_1}$ may not be known exactly. As a result, the \textit{nominal} formulation with constraints as in \eqref{eq:dataRepPibar} is unavailable. Instead, one is naturally led to consider an \textit{empirical} version of the problem: namely, to determine a matrix $\hat{G}_{\Theta} \in \RR^{T \times n_1}$ such that
\begin{subequations} \label{eq:dataRepPiPerturbed} 
  \begin{empheq}[left=\empheqlbrace]{align}
    \bar{X}_{2, +} \hat{G}_{\Theta} &= \bar{X}_{2, -} \hat{G}_{\Theta} \mathbf{\bar{A}_1} \label{eq:dataRepPiPerturbed1} \\
    \bar{U}_{2, -}  \hat{G}_{\Theta} &=\mathbf{C_1}\label{eq:dataRepPiPerturbed2}
  \end{empheq}
\end{subequations}
yielding the empirical solution 
\begin{equation} \label{eq:PiPerturbed}
    \mathbf{\hat{\Theta}} := \bar{X}_{2, -} \hat{G}_{\Theta}
\end{equation}
to \eqref{eq:SylEqGeneric}. Note that \eqref{eq:dataRepPiPerturbed}, in addition to omitting $R_{2, -}$, it also replaces $\mathbf{A}_1$ with $\mathbf{\bar{A}_1}$, implying that in practice we do not have perfect knowledge of $\mathbf{A}_1$. Obviously, if $R_{2,-} = \mathbf{0}$ and $\Delta \mathbf{A_1} = \mathbf{0}$, the empirical problem coincides with the nominal one, and thus $\mathbf{\hat{\Theta}} = \mathbf{\Theta}$. 
When $R_{2,-} \neq \mathbf{0}$ or $\Delta \mathbf{A_1} \neq \mathbf{0}$, it becomes essential to understand how these affect the accuracy of the empirical solution $\hat{\mathbf{\Theta}}$. The aim of such analysis is to provide an upper bound on the deviation between the empirical $\hat{\mathbf{\Theta}}$ and the exact solution $\mathbf{\Theta}$, expressed in terms of the norms of $R_{2,-}$ and $\Delta \mathbf{A_1}$. 

We highlight that analysing the deviation between $\bar{G}_{\Theta}$ and $\hat{G}_{\Theta}$ is inherently nontrivial, as neither $\bar{G}_{\Theta}$ nor $\hat{G}_{\Theta}$ are unique solutions to equations \eqref{eq:dataRepPibar} and \eqref{eq:dataRepPiPerturbed}, respectively. Consequently, the difference $\bar{G}_{\Theta} - \hat{G}_{\Theta}$ is not uniquely defined. In fact, its norm may be unbounded due to the possibility of arbitrary elements lying in the kernel space of the associated linear mappings being added to either solution. Despite this ambiguity in $\bar{G}_{\Theta}$ and $\hat{G}_{\Theta}$, we show that the empirical error of the solution, \ie, $\Delta \mathbf{\Theta} := \mathbf{\Theta} - \mathbf{\hat{\Theta}}$, is uniquely defined and bounded for any empirically evaluated $\hat{G}_{\Theta}$. To establish this, the following result demonstrates that the error $\Delta \mathbf{\Theta}$ can be characterised as the solution to a new Sylvester equation.

\begin{lemma} \label{lemma:errorSyl}
Suppose that the assumptions in Lemma \ref{lemma:dataRepPibar} hold. Let $\hat{G}_{\Theta}$ be a solution of \eqref{eq:dataRepPiPerturbed} and $\hat{\mathbf{\Theta}}$ be defined as in \eqref{eq:PiPerturbed}. 
Then, $\Delta \mathbf{\Theta}$ is the unique solution of the Sylvester equation
\begin{equation} \label{eq:errorSyl}
    \mathbf{A_2} \Delta \mathbf{\Theta} -  \Delta \mathbf{\Theta} \mathbf{A_1} = - R_{2, -} \hat{G}_{\Theta} - \bar{X}_{2, -} \hat{G}_{\Theta} \Delta \mathbf{A_1}. 
\end{equation}
\end{lemma}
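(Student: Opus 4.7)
The plan is to derive the claimed Sylvester equation by forming a Sylvester-type identity for $\hat{\mathbf{\Theta}}$ from the empirical constraints \eqref{eq:dataRepPiPerturbed}, then subtracting it from the exact Sylvester equation \eqref{eq:SylEqGeneric} satisfied by $\mathbf{\Theta}$. Uniqueness of $\Delta \mathbf{\Theta}$ will then follow immediately from the spectral separation hypothesis $\lambda(\mathbf{A_1}) \cap \lambda(\mathbf{A_2}) = \emptyset$ via the standard result cited already in Lemma~\ref{lemma:computePI}.

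First I would mimic the sufficiency computation in Lemma~\ref{lemma:computePI}, but applied to the perturbed data. Starting from $\hat{\mathbf{\Theta}} = \bar{X}_{2,-}\hat{G}_\Theta$ and using the compact dynamics \eqref{eq:xbarDynamics}, we have $\mathbf{A_2}\bar{X}_{2,-} + \mathbf{B_2}\bar{U}_{2,-} = \bar{X}_{2,+} + R_{2,-}$. Multiplying on the right by $\hat{G}_\Theta$ and using \eqref{eq:dataRepPiPerturbed2} gives $\mathbf{A_2}\hat{\mathbf{\Theta}} = (\bar{X}_{2,+} + R_{2,-})\hat{G}_\Theta - \mathbf{B_2}\mathbf{C_1}$. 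Next, I would apply \eqref{eq:dataRepPiPerturbed1} to rewrite $\bar{X}_{2,+}\hat{G}_\Theta = \bar{X}_{2,-}\hat{G}_\Theta \bar{\mathbf{A}}_1 = \hat{\mathbf{\Theta}}\bar{\mathbf{A}}_1$, and then decompose $\bar{\mathbf{A}}_1 = \mathbf{A_1} + \Delta\mathbf{A_1}$ according to \eqref{eq:uncertainty}. Rearranging yields the Sylvester-like identity
\begin{equation*}
\mathbf{A_2}\hat{\mathbf{\Theta}} - \hat{\mathbf{\Theta}}\mathbf{A_1} \;=\; -\mathbf{B_2}\mathbf{C_1} + R_{2,-}\hat{G}_\Theta + \bar{X}_{2,-}\hat{G}_\Theta\,\Delta\mathbf{A_1}.
\end{equation*}

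Subtracting this from the exact relation $\mathbf{A_2}\mathbf{\Theta} - \mathbf{\Theta}\mathbf{A_1} = -\mathbf{B_2}\mathbf{C_1}$ and using the linearity of the Sylvester operator $L(\cdot) := \mathbf{A_2}(\cdot) - (\cdot)\mathbf{A_1}$ immediately gives \eqref{eq:errorSyl}. Finally, since $\lambda(\mathbf{A_1}) \cap \lambda(\mathbf{A_2}) = \emptyset$, the operator $L$ is invertible, so \eqref{eq:errorSyl} admits a unique solution; hence $\Delta\mathbf{\Theta}$ is uniquely determined regardless of which (possibly non-unique) $\hat{G}_\Theta$ satisfies \eqref{eq:dataRepPiPerturbed}, confirming the well-posedness noted just before the lemma statement.

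There is no real obstacle here: the argument is a direct algebraic manipulation that parallels the sufficiency step of Lemma~\ref{lemma:computePI}, with the key point being the careful bookkeeping of how $R_{2,-}$ enters via \eqref{eq:xbarDynamics} and how $\Delta\mathbf{A_1}$ enters via the replacement of $\mathbf{A_1}$ by $\bar{\mathbf{A}}_1$ in \eqref{eq:dataRepPiPerturbed1}. The only subtlety worth flagging explicitly in the write-up is the uniqueness claim: although $\hat{G}_\Theta$ is itself non-unique (since its kernel component is unconstrained), every admissible choice produces the same $\hat{\mathbf{\Theta}} = \bar{X}_{2,-}\hat{G}_\Theta$ up to the error governed by \eqref{eq:errorSyl}, whose right-hand side is independent of kernel perturbations once $\bar{X}_{2,-}\hat{G}_\Theta$ and $\bar{U}_{2,-}\hat{G}_\Theta = \mathbf{C_1}$ are fixed, and in particular the left-hand side $L(\Delta\mathbf{\Theta})$ pins down $\Delta\mathbf{\Theta}$ uniquely under the eigenvalue separation assumption.
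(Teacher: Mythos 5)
Your proof is correct, and it reaches \eqref{eq:errorSyl} by a genuinely different route than the paper. The paper's proof goes through the nominal data-based solution $\bar{G}_\Theta$ of Lemma~\ref{lemma:dataRepPibar}: it defines $\Delta G_\Theta := \bar{G}_\Theta - \hat{G}_\Theta$, subtracts the empirical equations \eqref{eq:dataRepPiPerturbed} from the nominal equations \eqref{eq:dataRepPibar} to obtain difference equations for $\Delta G_\Theta$ (in particular $\bar{U}_{2,-}\Delta G_\Theta = \mathbf{0}$), and then evaluates $\mathbf{A_2}\Delta\mathbf{\Theta} - \Delta\mathbf{\Theta}\mathbf{A_1}$ with $\Delta\mathbf{\Theta} = \bar{X}_{2,-}\Delta G_\Theta$ through a chain of substitutions. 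You instead bypass $\bar{G}_\Theta$ and Lemma~\ref{lemma:dataRepPibar}'s parametrisation entirely: you derive a perturbed Sylvester identity satisfied by $\hat{\mathbf{\Theta}}$ directly from the empirical constraints (mirroring the sufficiency step of Lemma~\ref{lemma:computePI}, with the residual terms $R_{2,-}\hat{G}_\Theta$ and $\bar{X}_{2,-}\hat{G}_\Theta\Delta\mathbf{A_1}$ tracked explicitly), and subtract it from the exact equation \eqref{eq:SylEqGeneric}. Your route is shorter, involves fewer intermediate objects, and makes the structure transparent --- the error equation is literally the difference of two Sylvester identities under the linear operator $L(\cdot) = \mathbf{A_2}(\cdot) - (\cdot)\mathbf{A_1}$. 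What the paper's route buys is the explicit exposure of the non-uniqueness structure of $\Delta G_\Theta$ (its equations share the coefficient matrix of \eqref{eq:dataRepPiPerturbed}), which substantiates the discussion preceding the lemma on why $\Delta G_\Theta$ is ill-defined while $\Delta\mathbf{\Theta}$ is not. One small remark on your closing paragraph: the claim that the right-hand side of \eqref{eq:errorSyl} is invariant under kernel perturbations of $\hat{G}_\Theta$ is in fact true (since $R_{2,-} = \mathbf{A_2}\bar{X}_{2,-} + \mathbf{B_2}\bar{U}_{2,-} - \bar{X}_{2,+}$, fixing $\bar{X}_{2,-}\hat{G}_\Theta$, $\bar{U}_{2,-}\hat{G}_\Theta$ and, via \eqref{eq:dataRepPiPerturbed1}, $\bar{X}_{2,+}\hat{G}_\Theta$ pins down $R_{2,-}\hat{G}_\Theta$), but it is tangential: the lemma fixes one $\hat{G}_\Theta$, and uniqueness there means only that the Sylvester equation with that fixed right-hand side has exactly one solution, which is what your spectral-separation argument delivers.
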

\begin{proof}
We begin by analysing the perturbation between the nominal equations \eqref{eq:dataRepPibar} and the perturbed equations \eqref{eq:dataRepPiPerturbed}. Define the perturbation matrix \(\Delta G_\Theta := \bar{G}_\Theta - \hat{G}_\Theta\), and rewrite \eqref{eq:dataRepPiPerturbed} in terms of \(\Delta G_\Theta\) as
\begin{align*}
    \bar{X}_{2,+} (\bar{G}_\Theta - \Delta G_\Theta) &= \bar{X}_{2,-} (\bar{G}_\Theta - \Delta G_\Theta)(\mathbf{A_1} + \Delta \mathbf{A_1}), \\ 
    \bar{U}_{2,-} (\bar{G}_\Theta - \Delta G_\Theta) &= \mathbf{C_1}. 
\end{align*}
Subtracting the above equations from the nominal equations \eqref{eq:dataRepPibar} yields
\begin{subequations}\label{eq:dataRepDeltaPiAll} 
\begin{align}
    \bar{X}_{2,+} \Delta G_\Theta  &= -R_{2,-} \bar{G}_\Theta + \bar{X}_{2,-} \Delta G_\Theta \mathbf{A}_1 -\bar{X}_{2,-} \hat{G}_\Theta \Delta \mathbf{A_1}, \label{eq:dataRepDeltaPi1} \\
    \bar{U}_{2,-} \Delta G_\Theta &= \mathbf{0}. \label{eq:dataRepDeltaPi2}
\end{align}
\end{subequations}
Note that \eqref{eq:dataRepDeltaPiAll} and \eqref{eq:dataRepPiPerturbed} are linear equations with exactly the same coefficient matrix (associated with the respective unknowns). Consequently, if $\hat{G}_{\Theta}$ is not unique, so is $\Delta G_\Theta$.
However, our primary interest lies in the error $\Delta \mathbf{\Theta}$. In what follows we show that this, instead, is uniquely determined. 
Note that $\Delta \mathbf{\Theta} = \bar{X}_{2,-} \Delta G_\Theta$ and we obtain
\begin{align*}
    \mathbf{A}_2 \Delta & \mathbf{\Theta} - \Delta \mathbf{\Theta} \mathbf{A}_1 
    = \mathbf{A}_2 \bar{X}_{2,-} \Delta G_\Theta - \bar{X}_{2,-} \Delta G_\Theta \mathbf{A}_1 \\
    &= \left( \mathbf{A}_2 \bar{X}_{2,-} - \bar{X}_{2,+} \right) \Delta G_\Theta - R_{2,-} \bar{G}_\Theta -\bar{X}_{2,-} \hat{G}_\Theta \Delta \mathbf{A_1} \\
    &= \left( R_{2,-} - \mathbf{B}_2 \bar{U}_{2,-} \right) \Delta G_\Theta - R_{2,-} \bar{G}_\Theta -\bar{X}_{2,-} \hat{G}_\Theta \Delta \mathbf{A_1} \\
    &= R_{2,-} (\Delta G_\Theta - \bar{G}_\Theta) - \mathbf{B}_2 \bar{U}_{2,-} \Delta G_\Theta - \bar{X}_{2,-} \hat{G}_\Theta \Delta \mathbf{A_1} \\
    &= - R_{2,-} \hat{G}_\Theta - \bar{X}_{2,-} \hat{G}_\Theta \Delta \mathbf{A_1},
\end{align*}
where 
in the  second equality we used \eqref{eq:dataRepDeltaPi1}, in the third equality we used the identity \(\bar{X}_{2,+} = \mathbf{A}_2 \bar{X}_{2,-}  + \mathbf{B}_2 U_{2,-} - R_{2,-}\), implied by \eqref{eq:xbarDynamics}, 
and in the fifth equality we exploited \eqref{eq:dataRepDeltaPi2}. 
Hence, \(\Delta \mathbf{\Theta}\) satisfies the Sylvester equation \eqref{eq:errorSyl}.
Finally, since \(\lambda(\mathbf{A}_1) \cap \lambda(\mathbf{A}_2) = \emptyset\) by assumption, this Sylvester equation admits a unique solution. Therefore, although \(\Delta G_\Theta\) itself is not unique, the matrix \(\Delta \mathbf{\Theta}\) is uniquely determined.
\end{proof}


Exploiting the characterisation of the empirical error $\Delta \mathbf{\Theta}$ given in Lemma \ref{lemma:errorSyl}, we are ready to provide an upper bound on $\Delta \mathbf{\Theta}$ in terms of the noise terms $R_{2, -}$ and $\Delta \mathbf{A_1}$. 

\begin{theorem} \label{lemma:errorBound}
Suppose that the assumptions in Lemma \ref{lemma:dataRepPibar} hold. Let $\hat{G}_{\Theta}$ be a solution of \eqref{eq:dataRepPiPerturbed} and $\hat{\mathbf{\Theta}}$ be defined as in \eqref{eq:PiPerturbed}.  Then, the bound
\begin{equation} \label{eq:boundPi}
    \|\Delta \mathbf{\Theta}\|_2 \leq  \|\Delta \mathbf{\Theta}\|_F
    \leq \frac{\|\hat{G}_{\Theta}\|_F \left( \| R_{2, -} \|_2 + \|\bar{X}_{2,-}\|_2 \|\Delta \mathbf{A_1}\|_2\right)}{\sigma_{min}(I \otimes \mathbf{A_2} - \mathbf{A_1}^\top \otimes I)} 
\end{equation}
holds. 
\end{theorem}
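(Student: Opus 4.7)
The plan is to reduce the Sylvester equation from Lemma~\ref{lemma:errorSyl} to a linear system via vectorization, then apply standard spectral and submultiplicative norm bounds. The first inequality in \eqref{eq:boundPi} is routine: for any matrix $M$, $\|M\|_2 \leq \|M\|_F$. The substantive content is the second inequality.

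To establish this, I would apply $\vecop(\cdot)$ to both sides of \eqref{eq:errorSyl}. Using the classical identity $\vecop(AXB) = (B^\top \otimes A)\vecop(X)$ on the left-hand side (taking $B = I$ for the first term and $A = I$ for the second), this yields
\begin{equation*}
    \bigl(I \otimes \mathbf{A_2} - \mathbf{A_1}^\top \otimes I\bigr)\vecop(\Delta \mathbf{\Theta}) = -\vecop\bigl(R_{2,-} \hat{G}_{\Theta} + \bar{X}_{2,-} \hat{G}_{\Theta} \Delta \mathbf{A_1}\bigr).
\end{equation*}
Since the eigenvalues of $I \otimes \mathbf{A_2} - \mathbf{A_1}^\top \otimes I$ are exactly the pairwise differences $\lambda_i(\mathbf{A_2}) - \lambda_j(\mathbf{A_1})$, the assumption $\lambda(\mathbf{A_1}) \cap \lambda(\mathbf{A_2}) = \emptyset$ guarantees that this matrix is nonsingular and its smallest singular value is strictly positive. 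I can therefore invert it and take Euclidean norms, using that $\|M^{-1} v\|_2 \leq \|v\|_2 / \sigma_{\min}(M)$, which is the origin of the denominator in \eqref{eq:boundPi}.

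The numerator then follows from the identity $\|\vecop(\cdot)\|_2 = \|\cdot\|_F$ combined with the triangle inequality and the submultiplicative bound $\|A X B\|_F \leq \|A\|_2 \|X\|_F \|B\|_2$. Concretely,
\begin{equation*}
    \|R_{2,-} \hat{G}_{\Theta} + \bar{X}_{2,-} \hat{G}_{\Theta} \Delta \mathbf{A_1}\|_F \leq \|R_{2,-}\|_2 \|\hat{G}_{\Theta}\|_F + \|\bar{X}_{2,-}\|_2 \|\hat{G}_{\Theta}\|_F \|\Delta \mathbf{A_1}\|_2,
\end{equation*}
which, after factoring $\|\hat{G}_{\Theta}\|_F$, matches the numerator of \eqref{eq:boundPi}.

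I do not anticipate a genuine obstacle here: the result is essentially the standard perturbation bound for Sylvester equations, with the right-hand side disturbance identified by Lemma~\ref{lemma:errorSyl}. The only subtlety worth flagging is that $\hat{G}_{\Theta}$ is not unique (as discussed before the lemma statement), but because the bound is expressed in terms of the specific $\|\hat{G}_{\Theta}\|_F$ used to form $\hat{\mathbf{\Theta}}$, and because $\Delta \mathbf{\Theta}$ itself is unique by Lemma~\ref{lemma:errorSyl}, the statement is meaningful for any admissible choice of $\hat{G}_{\Theta}$. No additional tightening of the bound against this non-uniqueness is attempted.
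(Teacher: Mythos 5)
Your proposal is correct and follows essentially the same route as the paper's proof: vectorise the error Sylvester equation \eqref{eq:errorSyl} from Lemma~\ref{lemma:errorSyl}, invert $I \otimes \mathbf{A_2} - \mathbf{A_1}^\top \otimes I$ (whose invertibility the paper obtains by citing \cite[Proposition 6.2]{antoulas2005approximation}, whereas you justify it directly via the spectrum of the Kronecker difference), and finish with $\|\vecop(\cdot)\|_2 = \|\cdot\|_F$, the triangle inequality, and the mixed spectral--Frobenius norm bounds. Your closing remark on the non-uniqueness of $\hat{G}_{\Theta}$ versus the uniqueness of $\Delta\mathbf{\Theta}$ is also consistent with the paper's discussion preceding Lemma~\ref{lemma:errorSyl}.
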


\begin{proof}
For notational convenience, define \(\mathcal{A} := I \otimes \mathbf{A}_2 - \mathbf{A}_1^\top \otimes I\). By \cite[Proposition 6.2]{antoulas2005approximation} and Lemma~\ref{lemma:errorSyl}, the error \(\Delta \mathbf{\Theta}\) can be expressed as
\begin{equation} \label{eq:errorBoundSolution}
    \vecop(\Delta \mathbf{\Theta}) = 
    \mathcal{A}^{-1} \vecop(- R_{2,-} \hat{G}_\Theta - \bar{X}_{2,-} \hat{G}_\Theta \Delta \mathbf{A_1}).
\end{equation}
Applying the mixed-norm inequality\footnote{Given matrices \(\mathcal{A}\) and \(\mathcal{B}\) of compatible dimensions, the inequalities \(\|\mathcal{A} \mathcal{B}\|_F \leq \|\mathcal{A}\|_2 \|\mathcal{B}\|_F\) and \(\|\mathcal{A} \mathcal{B}\|_F \leq \|\mathcal{A}\|_F \|\mathcal{B}\|_2\) hold.} to \eqref{eq:errorBoundSolution} yields
$$
    \|\vecop(\Delta \mathbf{\Theta})\|_F 
    \leq \|\mathcal{A}^{-1}\|_2 \|\vecop(R_{2,-} \hat{G}_\Theta  + \bar{X}_{2,-} \hat{G}_\Theta \Delta \mathbf{A_1})\|_F.
$$
Recognising that \(\|\vecop( \cdot )\|_F = \| \cdot \|_F\) 
along with the fact that \(\|\mathcal{A}^{-1}\|_2 = 1/\sigma_{\min}(\mathcal{A})\), we obtain
$$
    \|\Delta \mathbf{\Theta}\|_F 
    \leq \frac{1}{\sigma_{\min}(\mathcal{A})} \|R_{2,-} \hat{G}_\Theta + \bar{X}_{2,-} \hat{G}_\Theta \Delta \mathbf{A_1}\|_F.
$$
Applying the triangle inequality of the Frobenius norm and again the mixed-norm inequality gives
\begin{align*}
    \|\Delta \mathbf{\Theta}\|_F &\leq \frac{1}{\sigma_{\min}(\mathcal{A})} \left( \|R_{2,-} \hat{G}_\Theta\|_F + \| \bar{X}_{2,-} \hat{G}_\Theta \Delta \mathbf{A_1}\|_F \right) \\
    &\leq \frac{1}{\sigma_{\min}(\mathcal{A})} \|\hat{G}_\Theta\|_F \left( \|R_{2,-}\|_2  + \| \bar{X}_{2,-} \|_2 \|\Delta \mathbf{A_1}\|_2\right),
\end{align*}
Last, recognising that $\|\Delta \mathbf{\Theta}\|_2 \leq \|\Delta \mathbf{\Theta}\|_F$ for any $\Delta \mathbf{\Theta}$ completes the proof.
\end{proof}

\begin{remark} \label{remark:minmize-norm-G}
In \eqref{eq:boundPi}, it can be seen that the shared bound constant across $R_{2,-}$ and $\Delta \mathbf{A}_1$ depends on two key quantities. The first is the minimal singular value of the matrix $ (I \otimes \mathbf{A}_2 - \mathbf{A}_1^\top \otimes I) $. While this eigenvalue is unknown, it remains constant for fixed systems. The second is the Frobenius norm of the matrix $ \hat{G}_\Theta $, which can be evaluated a posteriori upon solving the feasibility problem \eqref{eq:dataRepPiPerturbed}. Motivated by this relationship, and with the aim of minimising the error $\Delta \boldsymbol{\Theta}$, the bound suggests replacing the feasibility problem \eqref{eq:dataRepPiPerturbed} with the least-norm optimisation problem
\begin{equation} \label{eq:minNormOptimization}
    \begin{aligned}  
    \text{minimize} \quad &  \|\hat{G}_\Theta\|_{F} \\
    \text{subject to} \quad & 
    \eqref{eq:dataRepPiPerturbed1}, \eqref{eq:dataRepPiPerturbed2},
    \end{aligned} 
\end{equation}
which remains convex in the decision variable $\hat{G}_\Theta $. Furthermore, since the matrix $(I \otimes \mathbf{A}_2 - \mathbf{A}_1^\top \otimes I) $ is invariant with respect to the data, the norm $\|\hat{G}_\Theta\|_{F}$ serves as a proxy for assessing the sensitivity of the solution to the noise $R_{2,-}$ and the model uncertainty $\Delta \mathbf{A_1}$. Consequently, one may solve the optimisation problem \eqref{eq:minNormOptimization} over all data subsequences that satisfy the rank condition \eqref{eq:rankX-U}, and select the subsequence that yields the smallest value of $\|\hat{G}_\Theta\|_{F}$.
\end{remark}

The bound in \eqref{eq:boundPi}  relates the error in $\Delta \mathbf{\Theta}$  to the uncertainties introduced in the earlier stages, \ie, $d_2$, $\Delta x_2$ and $\Delta u_2$ 
via $R_{2,-}$, and $\Delta \mathbf{A_1}$. 
As we will demonstrate in the next section, this result enables a direct translation of the final uncertainty specifications of each application into corresponding requirements on the uncertainties at the source.

\section{Cascade Stabilisation}  \label{sec:cascadeStabilisation}
In this section, we apply the framework developed in Section \ref{sec:dataSolutionSyl} to address the data-driven cascade stabilisation problem via the forwarding method. After recalling the model-based method, we provide a data-driven formulation in both the noise-free and the noisy scenarios. 
Finally, we develop the data-driven forwarding method recursively to stabilise cascades containing an arbitrary number of subsystems.

\subsection{Problem Formulation} \label{sec:cascade-problem-formulation}
We consider the cascade system
\begin{subequations} \label{eq:cascade}
    \begin{align}
    &x_1(k+1) = A_1x_1(k) + B_1u_1(k),\label{eq:cascade-x}\\ 
    &x_2(k+1) = A_2x_2(k) + B_2x_1(k) \label{eq:cascade-y},
    \end{align}
\end{subequations} 
where $x_1(k) \in \mathbb{R}^{n_1}$ and $x_2(k) \in \mathbb{R}^{n_2}$ are the states of the first and second subsystems, respectively, and $u_1(k) \in \mathbb{R}^{m_1}$ is the input to the first subsystem (and subsequently to the entire cascade), with $A_1 \in \mathbb{R}^{n_1 \times n_1}$, $B_1 \in \mathbb{R}^{n_1 \times m_1}$, $A_2 \in \mathbb{R}^{n_2 \times n_2}$, and $B_2 \in \mathbb{R}^{n_2 \times n_1}$. We note that this system mirrors the interconnection of \eqref{Eq:SigmaSigma1} and \eqref{Eq:SigmaSigma2}, where $u_2(k) = y_1(k) = x_1(k)$.

The (model-based) stabilisation of such a system is achievable via the forwarding method, which involves finding a pre-stabilising feedback control law $u_1(k) = N_1 x_1(k) + v(k)$, where $N_1 \in \mathbb{R}^{m_1\times n_1}$ is such that $A_1 + B_1 N_1$ is Schur, and $v(k) \in \mathbb{R}^{m_1}$ is a new control input to be designed. As recalled in Section~\ref{sec:preliminaries-sylvester}, if $v(k) \equiv 0$ then the cascade interconnection defines an invariant subspace $\{(x_1,x_2) : x_2=\Upsilon_c x_1\}$ where $\Upsilon_c \in \mathbb{R}^{n_2 \times n_1}$ is the unique solution to the Sylvester equation
\begin{equation} 
    \label{eq:sylvester_cascade}
    \Upsilon_c(A_1 + B_1 N_1) = A_2\Upsilon_c + B_2,
\end{equation}
provided that $\lambda(A_1 + B_1 N_1) \cap \lambda(A_2) = \emptyset$. The above Sylvester equation takes the form \eqref{eq:SylEqGeneric}, with $\mathbf{\Theta} = \Upsilon_c$, $\mathbf{A_1} = (A_1 + B_1 N_1)$, $ \mathbf{A_2} = A_2$, $\mathbf{B_2} = B_2$ and $\mathbf{C_1}=I$. We then apply a change of coordinates $\zeta := x_2 - \Upsilon_c x_1$ where the transformed state $\zeta(k) \in \mathbb{R}^{n_2}$ is the error between $x_2$ and the defined invariant subspace, with dynamics
\begin{equation} \label{eq:zeta-system}
    \zeta(k+1) = A_2 \zeta(k) - \Upsilon_c B_1 v(k).
\end{equation}

In \eg{} \cite[Theorem 1.b]{simpson-porco-arxiv}, it is shown that\footnote{Note that we do not need to make these assumptions as the satisfaction of LMIs such as \eqref{eq:sdp-for-zeta-system} implies that $(A_2, -\Upsilon_c B_1)$ is stabilisable. This is the case for all analogous LMIs appearing in the data-driven results that we present.} if
$\lambda(A_1 + B_1 N_1) \cap \lambda(A_2) = \emptyset$, $(A_2, B_2)$ is stabilisable, and no eigenvalue of $A_2$ is a zero of the system $(A_1+B_1N_1,B_2,I)$ (this is known as \textit{non-resonance condition}), then
$(A_2, -\Upsilon_c B_1)$ is stabilisable. Thus, the feedback control law $v(k) := N_2 \zeta(k)$ can be used to stabilise \eqref{eq:zeta-system}, where $N_2 \in \mathbb{R}^{m_1\times n_2}$ is selected to ensure that $A_2 - \Upsilon_c B_1 N_2$ is Schur. Then, the cascade \eqref{eq:cascade} in closed-loop with the overall control law 
 $u_1(k) = N_1 x_1(k) + N_2 \zeta(k)$, 
results in the dynamics
\begin{equation} \label{eq:closed-loop-cascade-2-systems}
    \begin{bmatrix}
        x_1(k+1)\\\zeta(k+1)
    \end{bmatrix} = \begin{bmatrix}
        A_1 + B_1 N_1 & B_1 N_2\\ \mathbf{0} & A_2 - \Upsilon_c B_1 N_2
    \end{bmatrix}\begin{bmatrix}
        x_1(k)\\\zeta(k)
    \end{bmatrix}.
\end{equation}
We note the upper-triangular form of the dynamics matrix, indicating that the stability of each individual subsystem implies the stability of the whole cascade system. In what follows we address the problem of developing a direct data-driven version of this method. 

\subsection{The Noise-free Problem} 
\label{sec:cascade-full-info}
Consider the cascade system \eqref{eq:cascade}, and suppose now that the matrices $A_1, B_1, A_2, $ and $B_2$ are unknown, but the input data and state data for both subsystems are available so that the data matrices $U_{1,-}$, $X_{1,-}, X_{1,+}, X_{2,-},$ and $X_{2,+}$ can be constructed. We now propose a data-driven forwarding method to directly stabilise the unknown cascade system. We begin by introducing the following assumption on the data matrices.

\begin{assumption} \label{ass:rank-assumption}
The available data matrices are such that the rank conditions 
\begin{subequations} \label{eq:rank_sys}
\begin{align}
    &\rank\left(\begin{bmatrix}X_{1,-}\\U_{1,-} \label{eq:rank_sys_1}
    \end{bmatrix}\right) = n_1 + m_1,\\ 
    &\rank\left(\begin{bmatrix}X_{2,-}\\X_{1,-} \label{eq:rank_sys_2}
    \end{bmatrix}\right) = n_2 + n_1,
\end{align}
\end{subequations}
are satisfied.
\end{assumption}

Mirroring the model-based method outlined in Section \ref{sec:cascade-problem-formulation}, the data-driven forwarding method has three distinct steps:
\begin{enumerate}
    \item Pre-stabilise the first subsystem \eqref{eq:cascade-x}.
    \item Find $\Upsilon_c$, \textit{i.e.} the solution of the Sylvester equation \eqref{eq:sylvester_cascade}.
    \item Apply a coordinate transformation to the second subsystem \eqref{eq:cascade-y} and stabilise the transformed system \eqref{eq:zeta-system}.
\end{enumerate}
The pre-stabilisation Step 1 is analogous to the data-driven stabilisation methodology recalled 
in Section \ref{sec:preliminaries-data-driven}, whereby the stabilising gain is found through the solution of 
a data-dependent LMI of the type \eqref{eq:basic-sdp}. 
Then, the novel contributions of this section are data-driven methods for Steps 2 and 3, namely finding the solution of the underlying Sylvester equation, such that the invariant subspace $\{(x_1,x_2) : x_2=\Upsilon_c x_1\}$ can be characterised and Step 3 executed. 

\begin{lemma} \label{lem:cascade-data-sylvester-lemma}
Suppose that Assumption~\ref{ass:rank-assumption} holds and let $N_1 =U_{1,-}Q_1(X_{1,-}Q_1)^{-1}$ be such that $\lambda(A_1 + B_1 N_1) \cap \lambda(A_2) = \emptyset$, where $Q_1$ is any matrix that satisfies \eqref{eq:basic-sdp}, with $Q_{1} := Q_{x}$.
Then, any matrix $G_{\Upsilon_c} \in \mathbb{R}^{T\times n_1}$ that satisfies 
\begin{subequations} \label{eq:cascade-sylvester-sdp}
\begin{empheq}[left=\empheqlbrace]{align}
    X_{2,+}  G_{\Upsilon_c} &= X_{2,-} G_{\Upsilon_c} X_{1,+} G_{N_1}   \label{eq:dataSylvCascade1} \\
    X_{1, -}  G_{\Upsilon_c} &= I    \label{eq:dataSylvCascade2}
  \end{empheq}
\end{subequations}
where $G_{N_1}$ is obtained from \eqref{eq:basic-sdp-GK} with $K = N_1$, is such that 
\begin{equation} \label{eq:computing-sylv-from-sdp}
    \Upsilon_c = X_{2,-}G_{\Upsilon_c}.
\end{equation}
Namely \eqref{eq:computing-sylv-from-sdp} gives the unique solution of \eqref{eq:sylvester_cascade}. 
\end{lemma}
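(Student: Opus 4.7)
The plan is to reduce this statement directly to Lemma~\ref{lemma:computePI} through two identifications. First, I would rewrite the Sylvester equation \eqref{eq:sylvester_cascade} in the canonical form \eqref{eq:SylEqGeneric} as
$$
A_2 \Upsilon_c - \Upsilon_c (A_1 + B_1 N_1) = - B_2,
$$
which matches the template with $\mathbf{\Theta} = \Upsilon_c$, $\mathbf{A_2} = A_2$, $\mathbf{A_1} = A_1 + B_1 N_1$, $\mathbf{B_2} = B_2$, and $\mathbf{C_1} = I$. Second, I would observe that the cascade interconnection identifies $u_2(k) = x_1(k)$ in system \eqref{eq:cascade-y}, so that the input-data matrix of the second subsystem coincides with the state-data matrix of the first, namely $U_{2,-} = X_{1,-}$ (with $m_2 = n_1$). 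Under this identification, the rank condition \eqref{eq:rank_sys_2} is exactly the rank condition \eqref{eq:rankXU} required by Lemma~\ref{lemma:computePI}.

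Next, I would substitute these identifications into the two data-based constraints \eqref{eq:dataRepPi1}--\eqref{eq:dataRepPi2} of Lemma~\ref{lemma:computePI}. The second constraint $U_{2,-} G_\Theta = \mathbf{C_1}$ becomes $X_{1,-} G_{\Upsilon_c} = I$, recovering \eqref{eq:dataSylvCascade2} verbatim. For the first constraint, I would invoke the key ingredient from Step~1 of the procedure: since $N_1$ is obtained from the LMI \eqref{eq:basic-sdp} via $G_{N_1}$ as in \eqref{eq:basic-sdp-GK}, the closed-loop matrix admits the data representation $A_1 + B_1 N_1 = X_{1,+} G_{N_1}$, by \eqref{eq:linear-closed-loop-x1g} with $K = N_1$. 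Plugging this representation in place of $\mathbf{A_1}$ in the constraint $X_{2,+} G_\Theta = X_{2,-} G_\Theta \mathbf{A_1}$ gives exactly \eqref{eq:dataSylvCascade1}.

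With both constraints matched, Lemma~\ref{lemma:computePI} yields that $\Upsilon_c := X_{2,-} G_{\Upsilon_c}$ solves the Sylvester equation, and uniqueness is guaranteed by the spectral-disjointness hypothesis $\lambda(A_1 + B_1 N_1) \cap \lambda(A_2) = \emptyset$ that is imposed in the statement. The main obstacle---and the only substantive point that goes beyond a direct invocation of Lemma~\ref{lemma:computePI}---is handling the fact that the ``main'' matrix $\mathbf{A_1}$ appearing in the Sylvester equation is itself the closed-loop matrix $A_1 + B_1 N_1$, which is unknown. Replacing it by its data-based surrogate $X_{1,+} G_{N_1}$ is what renders the constraints purely data-dependent, and the correctness of this replacement follows from the sequential structure of the procedure (Step~1 delivers $G_{N_1}$, which Step~2 then consumes). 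All remaining steps are routine algebraic manipulation that I would not spell out in detail, as they are identical in form to the proof of Lemma~\ref{lemma:computePI}.
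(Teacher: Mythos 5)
Your proposal is correct and follows essentially the same route as the paper's proof: both reduce the statement to Lemma~\ref{lemma:computePI} via the identifications $\mathbf{\Theta} = \Upsilon_c$, $\mathbf{A_1} = A_1 + B_1 N_1 = X_{1,+}G_{N_1}$ (by \eqref{eq:linear-closed-loop-x1g}), $\mathbf{A_2} = A_2$, $\mathbf{B_2} = B_2$, $\mathbf{C_1} = I$, together with the observation that $U_{2,-} = X_{1,-}$ in the cascade so that \eqref{eq:rank_sys_2} supplies the rank condition \eqref{eq:rankXU}. Your write-up is in fact somewhat more explicit than the paper's one-paragraph proof, but the substance is identical.
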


\begin{proof}
The result follows by applying Lemma~\ref{lemma:computePI} to \eqref{eq:sylvester_cascade}, with $\mathbf{A_1} = A_1+B_1N_1 = X_{1,+}G_{N_1}$ by \eqref{eq:linear-closed-loop-x1g}, $\mathbf{A_2} = A_2$, $\mathbf{B_2} = B_2$, $\mathbf{C_1} = I$ and $\mathbf{\Theta} = \Upsilon_c$, and 
noting that $U_{2,-} = X_{1,-}$ in the cascade, hence \eqref{eq:rank_sys_2} ensures that the conditions of Lemma~\ref{lemma:computePI} are satisfied.
\end{proof}
Having determined $\Upsilon_c$,  
we are ready to complete Step 3 of the data-driven forwarding method. Namely, we construct a feedback loop using the transformed state $\zeta(k)$  
to stabilise the cascade.

\begin{theorem} \label{thm:modified-2-system-casc}
Suppose that Assumption \ref{ass:rank-assumption} holds.
Let $N_1 := U_{1,-}Q_1(X_{1,-}Q_1)^{-1}$ be such that $\lambda(A_1 + B_1 N_1) \cap \lambda(A_2) = \emptyset$, where $Q_1$ is any matrix that satisfies \eqref{eq:basic-sdp}, with $Q_1 := Q_x$. Let $\Upsilon_c = X_{2,-}G_{\Upsilon_c}$, with $G_{\Upsilon_c}$ any matrix that satisfies \eqref{eq:cascade-sylvester-sdp}, be such that
\begin{equation}
    \rank\left(\begin{bmatrix}Z_{-}\\V_{-} \label{eq:rank_sys_3}
    \end{bmatrix}\right) = n_2 + m_1,
\end{equation}
with $Z_{-} := X_{2,-} - \Upsilon_c X_{1,-}$ and $V_{-} := U_{1,-} - N_1 X_{1,-}$, holds. Then the feedback control law
\begin{equation} \label{eq:casc-feedback-2-systems}
        u_{\text{casc}}(k) = N_1x_1(k) + N_2(x_2(k) - \Upsilon_c x_1(k)),
\end{equation}
with $N_2 := V_{-}Q_2(Z_{-}Q_2)^{-1}$, where $Q_2$ is any matrix that satisfies 
\begin{equation}
        \label{eq:sdp-for-zeta-system}
    \begin{bmatrix} 
        Z_{-}Q_2 & Z_{+} Q_2 \\ Q_2^\top Z_{+}^\top & Z_{-}Q_2
    \end{bmatrix} \succ 0,
\end{equation}
with $Z_{+} := X_{2,+} - \Upsilon_c X_{1,+}$, is such that the cascade system \eqref{eq:cascade} is asymptotically stable.
\end{theorem}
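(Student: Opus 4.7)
The plan is to reduce the theorem to a clean application of the standard data-driven stabilisation recipe of Section~\ref{sec:preliminaries-data-driven} on the transformed subsystem \eqref{eq:zeta-system}, and then invoke the block-upper-triangular structure of the closed loop to conclude asymptotic stability of the whole cascade. The starting observation is that the overall input can be rewritten as $u_1(k)=N_1 x_1(k)+v(k)$ with $v(k):=N_2\zeta(k)$, so, in the $(x_1,\zeta)$-coordinates induced by $\Upsilon_c$, the closed-loop dynamics coincide with \eqref{eq:closed-loop-cascade-2-systems}. Since $A_1+B_1 N_1$ is Schur by Step 1 (it is produced by the LMI \eqref{eq:basic-sdp} exactly as in Section~\ref{sec:preliminaries-data-driven}), it only remains to show that the data-based design of $N_2$ via \eqref{eq:sdp-for-zeta-system} makes $A_2-\Upsilon_c B_1 N_2$ Schur.

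The key step, and the main technical moment of the proof, is to verify that the triple $(Z_{-},Z_{+},V_{-})$ is genuinely the input–state data of the $\zeta$-subsystem \eqref{eq:zeta-system} associated with the auxiliary input $v=u_1-N_1 x_1$. By construction, $V_{-}=U_{1,-}-N_1 X_{1,-}$ is precisely the sequence of $v(k)$ values generated by the experiment, and $Z_{-}=X_{2,-}-\Upsilon_c X_{1,-}$ is the corresponding sequence of $\zeta(k)$ values. The non-trivial identity is $Z_{+}=A_2 Z_{-}-\Upsilon_c B_1 V_{-}$, which I would establish column-wise using \eqref{eq:cascade} and then eliminating $\Upsilon_c A_1$ through the Sylvester equation \eqref{eq:sylvester_cascade}; this is the only place where \eqref{eq:sylvester_cascade} is used, and it is exactly the computation that leads to \eqref{eq:zeta-system} in Section~\ref{sec:cascade-problem-formulation}.

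Once this identification is made, the result is immediate: the rank condition \eqref{eq:rank_sys_3} is the analogue of \eqref{eq:rank-condition} for the pair $(A_2,-\Upsilon_c B_1)$ with data $(Z_{-},Z_{+},V_{-})$, and the LMI \eqref{eq:sdp-for-zeta-system} is the corresponding instance of \eqref{eq:basic-sdp}. Therefore, applying the data-driven stabilisation result recalled in Section~\ref{sec:preliminaries-data-driven} to $\zeta$-subsystem yields $N_2=V_{-}Q_2(Z_{-}Q_2)^{-1}$ such that $A_2-\Upsilon_c B_1 N_2$ is Schur. Plugging $v=N_2\zeta$ back into \eqref{eq:closed-loop-cascade-2-systems} produces a block-upper-triangular state matrix whose diagonal blocks $A_1+B_1 N_1$ and $A_2-\Upsilon_c B_1 N_2$ are both Schur. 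Since the spectrum of a block-triangular matrix equals the union of the spectra of its diagonal blocks, the full closed-loop cascade is asymptotically stable, which is the required conclusion.

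Beyond that Sylvester-driven identity, no new machinery is needed; the apparent novelty is simply that $\Upsilon_c$ itself was obtained from data via Lemma~\ref{lem:cascade-data-sylvester-lemma}, but by that point $\Upsilon_c$ is a known matrix, so $Z_{-}, Z_{+}, V_{-}$ are computable from the collected data and the LMI \eqref{eq:sdp-for-zeta-system} is a genuinely data-dependent condition. I do not anticipate any further obstacle.
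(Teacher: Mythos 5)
Your proposal is correct and follows essentially the same route as the paper's proof: pre-stabilisation via \eqref{eq:basic-sdp}, the key identity $Z_{+} = A_2 Z_{-} - \Upsilon_c B_1 V_{-}$ obtained from the open-loop data relations and the Sylvester equation \eqref{eq:sylvester_cascade} (the paper derives it by matrix-level manipulations, you column-wise, which is equivalent), followed by the application of the data-driven stabilisation result of Section~\ref{sec:preliminaries-data-driven} under the rank condition \eqref{eq:rank_sys_3}, and the conclusion via the block-upper-triangular structure of \eqref{eq:closed-loop-cascade-2-systems}. No gaps.
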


\begin{proof}
The data-driven design of the stabilising feedback \eqref{eq:casc-feedback-2-systems} mirrors the steps outlined in Section \ref{sec:cascade-problem-formulation}, whereby the pre-stabilisation of the $x_1$-dynamics of \eqref{eq:cascade} via $u_1(k) = N_1x_1(k)$ is a pre-requisite for the remaining steps of the forwarding method.
The rank condition \eqref{eq:rank_sys_1} and the formula of $N_1$ given in the statement achieves precisely this, as explained in Section \ref{sec:preliminaries-data-driven} (see also \cite[Theorem 3]{de2019formulas}). Then, by the rank condition \eqref{eq:rank_sys_2} we can apply Lemma~\ref{lem:cascade-data-sylvester-lemma} to determine $\Upsilon_c$, \ie, the solution of the Sylvester equation that characterises the invariant subspace $\{(x_2,x_1): x_2=\Upsilon_c x_1 \}$ that the second subsystem is desired to track. What is left to do is to design the gain $N_2$ that stabilises the system associated to the error $\zeta = x_2 - \Upsilon_c x_1$. To this end, recall that $X_{1,-}$, $X_{2,-}$, and $X_{2,+}$ satisfy $X_{2,+} = A_2X_{2,-} + B_2X_{1,-}$ by definition. Adding and subtracting $\Upsilon_c X_{1,+}$ and $A_2\Upsilon_c X_{1,-}$ to the previous equation yields
\begin{equation} \label{eq:transformed-zeta-linear}
    Z_{+} + \Upsilon_c X_{1,+} = A_2 Z_{-} + A_2\Upsilon_c X_{1,-} + B_2X_{1,-}.
\end{equation}
The matrices $Z_{-}$ and $Z_{+}$ can be interpreted as the data generated by the error $\zeta = x_2 - \Upsilon_c x_1$, which can be obtained directly from the already available data of $x_1$ and $x_2$. Recall also that $X_{1,+} = A_1 X_{1,-} + B_1 U_{1,-}$. Adding and subtracting $B_1N_1X_{1,-}$ in this equation yields
\begin{equation} \label{eq:transformed-x-linear}
    X_{1,+} = (A_1 + B_1N_1)X_{1,-} + B_1V_{-},
\end{equation}
where $V_{-} = U_{1,-} - N_1X_{1,-}$. Hence, substituting \eqref{eq:transformed-x-linear} into \eqref{eq:transformed-zeta-linear}, 
yields 
\begin{equation}
    \begin{split}
    Z_{+} + \Upsilon_c(A_1+B_1N_1)X_{1,-} + \Upsilon_c B_1V_{-} \\= A_2 Z_{-} + A_2\Upsilon_cX_{1,-} + B_2X_{1,-},
    \end{split}
\end{equation}
from which, using the Sylvester equation \eqref{eq:sylvester_cascade} to cancel out terms, we obtain $Z_{+} = A_2 Z_{-} - \Upsilon_c B_1 V_{-}$. At this point we can use the results recalled in Section \ref{sec:preliminaries-data-driven} to write
$
    \begin{bmatrix}
         A_2 & -\Upsilon_c B_1
    \end{bmatrix}= Z_{+} \begin{bmatrix} Z_{-}\\V_{-} \end{bmatrix}^{\dagger},
$
which is well defined by the condition \eqref{eq:rank_sys_3}. Choosing $v(k) = N_2\zeta(k)$ allows $N_2$ to be computed as $N_2 = V_{-}Q_2(Z_{-} Q_2)^{-1}$ where $Q_2$ is a solution of the LMI \eqref{eq:sdp-for-zeta-system}, see \cite[Theorem 3]{de2019formulas}. In conclusion, the stability of the cascade follows from the fact that the obtained $u_{\text{casc}}$ is the feedback of the forwarding method, where $N_1$, $\Upsilon_c$, and $N_2$ are constructed from data such that $A_1+B_1N_1$ and $A_2-\Upsilon_c B_1N_2$ are Schur. From \eqref{eq:closed-loop-cascade-2-systems}, this implies that the entire cascade is stabilised by $u_{\text{casc}}$.
\end{proof}

\begin{remark} \label{remark:rank-condition-cascade-2}
The rank conditions \eqref{eq:rank_sys_1}, \eqref{eq:rank_sys_2}, and \eqref{eq:rank_sys_3} imply a lower bound on the number of data points (\ie, $T$) required to stabilise the cascade via the proposed approach, namely
\begin{equation}
    T \ge \max \left(n_2 + m_1, n_1 + m_1, n_2 + n_1\right).
\end{equation}
Suppose instead that the cascade structure \eqref{eq:cascade} had been stabilised by being treated as a monolithic system with the aggregated state $\mathcal{X}(k):= \col(x_1(k),x_2(k)) \in \mathbb{R}^{n_1 + n_2}$ and dynamics
\begin{equation}
    \mathcal{X}(k+1) = \begin{bmatrix}
        A_1 & \mathbf{0} \\B_2 & A_2
    \end{bmatrix} \mathcal{X}(k) + \begin{bmatrix}
        B_1\\ \mathbf{0}
    \end{bmatrix}u_1(k).
\end{equation}
The data-driven stabilisation of such a system (following the procedure recalled in Section \ref{sec:preliminaries-data-driven}) would require the minimal $T$ to be
\begin{equation}
    T \ge n_2 + n_1 + m_1.
\end{equation}
This observation reveals a benefit of the proposed data-driven forwarding method, regarding data efficiency. Namely, by exploiting the cascade structure of \eqref{eq:cascade}, fewer data points are required in order to stabilise the overall system. While this advantage in data efficiency may be marginal when only two subsystems are involved, we highlight that this becomes increasingly significant when the number of cascade stages grows, as will be shown in Section \ref{sec:N-cascade}. The caveat here is that during the data generation experiment, one only has direct influence on the input data $U_{1,-}$. This means that the rank condition \eqref{eq:rank_sys_1} can be satisfied easily, but it is not guaranteed that this will lead to the subsequent rank conditions \eqref{eq:rank_sys_2}, \eqref{eq:rank_sys_3}, and all those resulting from additional cascaded systems 
being satisfied. 
\end{remark}

\subsection{Cascade Stabilisation with Noise-corrupted Data} \label{sec:cascade-noise}
Consider now a noisy version of \eqref{eq:cascade} given by
\begin{subequations} \label{eq:cascadeNoise}
    \begin{align}
    &x_1(k+1) = A_1x_1(k) + B_1u_1(k) + d_1(k),\label{eq:cascade-x-noise}\\ 
    &x_2(k+1) = A_2x_2(k) + B_2x_1(k) + d_2(k) \label{eq:cascade-y-noise},
    \end{align}
\end{subequations} 
where $d_1(k)$ and $d_2(k)$ represent additive system noise. 
Suppose further that only corrupted measurements of the states $x_1(k)$ and $x_2(k)$, given by
\begin{subequations} \label{eq:cascade-measurements}
\begin{align}
    \bar{x}_1(k) &= x_1(k) + \Delta x_1(k),\\ 
    \bar{x}_2(k) &= x_2(k) + \Delta x_2(k),
\end{align}
\end{subequations}
with $\Delta x_1(k)$ and $\Delta x_2(k)$ denoting the measurement noise, 
are available. In the presence of these unknown disturbances, it is natural to shift the focus from the internal Lyapunov stability studied in the noise-free case to characterising how such perturbations affect the closed-loop behaviour.
Accordingly, the aim of this subsection is to design a controller of the form
\begin{equation} \label{eq:controllerNoisyCascade}
    u_1(k) = N_1 \bar{x}_1(k) + N_2 (\bar{x}_2(k) - \hat{\Upsilon}_c \bar{x}_1(k)),
\end{equation}
where the matrices $N_1, N_2$ and $\hat{\Upsilon}_c$ are to be synthesised from (noisy) data, such that the resulting closed-loop dynamics of system \eqref{eq:cascadeNoise}-\eqref{eq:controllerNoisyCascade} possess certain robustness properties with respect to these perturbations. 


Following the treatment in Section \ref{sec:noiseErrorSyl}, we define the encapsulated noise signals for both subsystems of the cascade as
\begin{subequations} \label{eq:cascade-noise-R}
\begin{align}
    r_1(k) &:= A_1 \Delta x_1(k) - \Delta x_1(k+1) - d_1(k) \label{eq:cascade-noise-r-x1},\\ 
    r_2(k) &:= A_2 \Delta x_2(k) - \Delta x_2(k+1) + B_2 \Delta x_1(k) -d_2(k)\label{eq:cascade-noise-r-x2}.
\end{align}
\end{subequations}
Such definitions provide the following compact description of the dynamics of the corrupted measurements of system \eqref{eq:cascadeNoise}
\begin{subequations} \label{eq:cascadeMeasuremntSystem}
\begin{align}
    \bar{x}_1(k+1) &= A_1  \bar{x}_1(k) + B_1 u_1(k) -r_1(k), \label{eq:cascadeMeasuremntSystem1} \\
    \bar{x}_2(k+1) &= A_2  \bar{x}_2(k) + B_2 \bar{x}_1(k) - r_2(k).
\end{align}
\end{subequations}
The above ``measurement dynamics'' are useful, as they allow for most of the analysis and design in this subsection to be directly conducted in the \textit{measurement space}, where the control law \eqref{eq:controllerNoisyCascade} also lies.
We first 
provide a ``noisy version'' of Assumption~\ref{ass:rank-assumption}.

\begin{assumption} \label{ass:noisy-cascade-assumption}
The available data matrices are such that the rank conditions
\begin{subequations} \label{eq:rank-noise-casc-1}
\begin{align}    
&\rank\left(\begin{bmatrix}\,\bar{X}_{1,-}\\ U_{1,-} 
    \end{bmatrix}\right) = n_1 + m_1, \label{eq:rank_sys_noise_1} \\ 
    &\rank\left(\begin{bmatrix}\,\bar{X}_{2,-}\\\bar{X}_{1,-} 
    \end{bmatrix}\right) = n_2 + n_1, \label{eq:rank_sys_noise_2}
\end{align}
\end{subequations}
are satisfied. 
\end{assumption}

As usual, the first design step focuses on determining the pre-stabilisation gain $N_1$ (from the corrupted measurements) such that $A_1 + B_1 N_1$ is Schur. This is a direct application of the noise-robust result recalled in Section~\ref{sec:preliminaries-data-driven}, which for self-containedness we formulate in the following remark. 

\begin{remark} \label{remark:N1-noise-cs}
Suppose that the rank condition \eqref{eq:rank_sys_noise_1} holds. Suppose that there exists a scalar $\alpha_1 > 0$ such that 
\begin{equation} \label{eq:SNR-condtion-x1}
    R_{1,-} R_{1,-}^\top \preceq \frac{\alpha_1^2}{2(2+\alpha_1)} \bar{X}_{1,+} \bar{X}_{1,+}^\top.
\end{equation}
Then \cite{de2019formulas}, the gain $N_1 := U_{1,-}Q_{x}(\bar{X}_{1,-}Q_{x})^{-1}$, where $Q_{x} \in \mathbb{R}^{T\times n_1}$ is any matrix satisfying
\begin{subequations}
    \label{eq:sdp-for-noisy-system-N1}
    \begin{align}
    &\begin{bmatrix} 
        \bar{X}_{1,-}Q_x - \alpha_1 \bar{X}_{1,+} \bar{X}_{1,+}^\top& \bar{X}_{1,+}Q_x \\ Q_x^\top \bar{X}_{1,+}^\top & \bar{X}_{1,-}Q_x
    \end{bmatrix} \succ 0, \\
    &\begin{bmatrix} 
        I & Q_x \\ Q_x^\top & \bar{X}_{1,-} Q_x
    \end{bmatrix} \succ 0,
     \end{align}
\end{subequations}
renders the matrix $A_1 + B_1 N_1$ Schur. Moreover, $A_1 + B_1 N_1 = (\bar{X}_{1,+} + R_{1, -}) G_{N_1}$, with $G_{N_1}$ any matrix satisfying
\begin{equation*}
    \begin{bmatrix}
        I \\ N_1
    \end{bmatrix}=\begin{bmatrix}
        \bar{X}_{1,-} \\ U_{1,-}
    \end{bmatrix} G_{N_1}.
\end{equation*}
\end{remark}

In the noise-free case, the second design step consists of finding the solution $\Upsilon_c$ to the Sylvester equation \eqref{eq:sylvester_cascade}, which defines the coordinate transformation useful for the final post-stabilisation step, through solving the LP \eqref{eq:cascade-sylvester-sdp}. However, in the noisy setting considered in this subsection, one can only obtain an empirical solution $\hat{\Upsilon}_c :=  \bar{X}_{2,-} \hat{G}_{\Upsilon_c}$, with $\hat{G}_{\Upsilon_c} \in \mathbb{R}^{T\times n_1}$ a solution of 
\begin{subequations} \label{eq:empirical-LMI-Upsilon-noise}
\begin{empheq}[left=\empheqlbrace]{align}
    \bar{X}_{2,+}  \hat{G}_{\Upsilon_c} &= \bar{X}_{2,-} \hat{G}_{\Upsilon_c} (\bar{X}_{1,+} G_{N_1}),   \\
    \bar{X}_{1, -}  \hat{G}_{\Upsilon_c} &= I,    
  \end{empheq}
\end{subequations}
where $\bar{X}_{1,+} G_{N_1}$ serves as an empirical approximant of $A_1 + B_1 N_1$, in line with the discussions in Section \ref{sec:noiseErrorSyl}. In what follows, we provide some instrumental results on the empirical error of the solution, leveraging Lemma~\ref{lemma:errorSyl} and Theorem~\ref{lemma:errorBound}.

\begin{lemma} \label{lemma:errorBound-cascade}
Suppose that Assumption \ref{ass:noisy-cascade-assumption} holds and that $\lambda(A_1 +B_1 N_1) \cap \lambda(A_2) = \emptyset$. Define $\Delta \Upsilon_c := \Upsilon_c  - \hat{\Upsilon}_c$.
Then, the following results hold.
\begin{enumerate}
    \item[(i)] $\Delta \Upsilon_c$ is the unique solution of the Sylvester equation
\begin{equation} \label{err-sylvester-cs}
\begin{aligned}
    A_2 \Delta \Upsilon_c & -  \Delta \Upsilon_c (A_1 +B_1 N_1) \\
    &= - R_{2, -} \hat{G}_{\Upsilon_c} + \bar{X}_{2,-} \hat{G}_{\Upsilon_c} R_{1, -} G_{N_1}. 
\end{aligned}
\end{equation}
    \item[(ii)] The bound
\begin{equation*} 
\begin{split} 
    \|\Delta & \Upsilon_c \|_2 \\
    &\leq \frac{\|\hat{G}_{\Upsilon_c}\|_F \left(\| R_{2, -} \|_2 + \|\bar{X}_{2, -}\|_2 \|R_{1,-}\|_2 \|G_{N_1}\|_2 \right)}{\sigma_{min}(I \otimes A_2 - (A_1 +B_1 N_1)^\top \otimes I)} 
\end{split}
\end{equation*}
holds. 
\end{enumerate}
\end{lemma}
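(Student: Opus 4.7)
The plan is to prove both parts of this lemma by identifying the cascade-stabilisation setting as an instance of the abstract Sylvester setup of Section~\ref{sec:noiseErrorSyl}, and then invoking Lemma~\ref{lemma:errorSyl} and Theorem~\ref{lemma:errorBound} directly. The key conceptual step is recognising the correct substitutions: take $\mathbf{A_1} = A_1 + B_1 N_1$, $\mathbf{A_2} = A_2$, $\mathbf{B_2} = B_2$, $\mathbf{C_1} = I$, and $\mathbf{\Theta} = \Upsilon_c$, so that \eqref{eq:sylvester_cascade} is a specialisation of \eqref{eq:SylEqGeneric}. The spectral-disjointness hypothesis of Lemma~\ref{lemma:dataRepPibar} is then precisely the stated condition $\lambda(A_1+B_1N_1)\cap\lambda(A_2)=\emptyset$, and Assumption~\ref{ass:noisy-cascade-assumption} delivers the required rank condition (with $\bar{U}_{2,-} = \bar{X}_{1,-}$ in the cascade).

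The second, crucial identification is how the model-uncertainty term $\Delta \mathbf{A_1}$ from Section~\ref{sec:noiseErrorSyl} manifests here. The true driving matrix $A_1 + B_1 N_1$ admits, from Remark~\ref{remark:N1-noise-cs}, the exact representation $(\bar{X}_{1,+} + R_{1,-})G_{N_1}$, whereas in \eqref{eq:empirical-LMI-Upsilon-noise} it is replaced by the empirical proxy $\bar{X}_{1,+} G_{N_1}$. The resulting mismatch is therefore $\Delta \mathbf{A_1} = \bar{X}_{1,+}G_{N_1} - (A_1+B_1N_1) = -R_{1,-} G_{N_1}$. With this substitution, \eqref{eq:empirical-LMI-Upsilon-noise} is the cascade instance of \eqref{eq:dataRepPiPerturbed}.

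For part (i), I would then apply Lemma~\ref{lemma:errorSyl} verbatim: substituting $\mathbf{A_1}$, $\mathbf{A_2}$, $\hat{G}_\Theta = \hat{G}_{\Upsilon_c}$, $\bar{X}_{2,-}$ as is, and $\Delta \mathbf{A_1} = -R_{1,-}G_{N_1}$ into \eqref{eq:errorSyl} yields exactly \eqref{err-sylvester-cs}. Uniqueness is then immediate from $\lambda(A_1+B_1N_1)\cap\lambda(A_2)=\emptyset$ by \cite[Proposition 6.2]{antoulas2005approximation}.

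For part (ii), I would invoke Theorem~\ref{lemma:errorBound} with the same substitutions, obtaining
\begin{equation*}
    \|\Delta \Upsilon_c\|_2 \leq \frac{\|\hat{G}_{\Upsilon_c}\|_F \left(\|R_{2,-}\|_2 + \|\bar{X}_{2,-}\|_2 \, \|R_{1,-}G_{N_1}\|_2\right)}{\sigma_{\min}(I \otimes A_2 - (A_1+B_1N_1)^\top \otimes I)},
\end{equation*}
and then applying the submultiplicative inequality $\|R_{1,-}G_{N_1}\|_2 \le \|R_{1,-}\|_2 \|G_{N_1}\|_2$ to match the bound stated in the lemma. There is no genuine obstacle here; the proof is essentially bookkeeping, and the only subtle point worth flagging (which I would make explicit in a short sentence) is that, because $G_{N_1}$ and $\hat{G}_{\Upsilon_c}$ arise as any feasible solutions of their respective problems, the bound holds for every admissible choice, which is consistent with the uniqueness of $\Delta \Upsilon_c$ established in part (i).
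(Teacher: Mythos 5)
Your proposal is correct and follows essentially the same route as the paper's own proof: the paper likewise identifies $\mathbf{A_1} = A_1+B_1N_1$, $\mathbf{A_2} = A_2$, $\hat{\mathbf{\Theta}} = \hat{\Upsilon}_c$, and $\Delta\mathbf{A_1} = -R_{1,-}G_{N_1}$, then obtains (i) directly from Lemma~\ref{lemma:errorSyl} and (ii) from Theorem~\ref{lemma:errorBound} together with the sub-multiplicative inequality $\|R_{1,-}G_{N_1}\|_2 \le \|R_{1,-}\|_2\|G_{N_1}\|_2$. Your write-up is in fact somewhat more explicit than the paper's (which leaves the identification of $\Delta\mathbf{A_1}$ and the cascade substitution $\bar{U}_{2,-}=\bar{X}_{1,-}$ implicit), but the argument is identical.
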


\begin{proof}
Noting that $\Delta \mathbf{A}_1 = - R_{1, -} G_{N_1}$, $\mathbf{A_1} = A_1 +B_1 N_1$, $\mathbf{A_2} = A_2$ and $\hat{\mathbf{\Theta}} = \hat{\Upsilon}_c$, the Sylvester equation follows directly 
from Lemma \ref{lemma:errorSyl} and the claimed bound follows from Theorem \ref{lemma:errorBound} by further recognising that $\|R_{1,-} G_{N_1}\|_2 \leq \|R_{1,-}\|_2 \|G_{N_1}\|_2$ by the sub-multiplicative property.
\end{proof}

Having obtained the matrix $\hat{\Upsilon}_c$ that determines the (empirical) coordinate transformation, we define the nominal and the empirical transformed states (in the measurement space). 
\begin{equation}\label{eq:overzetahatzeta}
    \bar{\zeta}(k) := \bar{x}_2(k) - \Upsilon_c \bar{x}_1(k),  \quad
    \hat{\zeta}(k) := \bar{x}_2(k) - \hat{\Upsilon}_c \bar{x}_1(k),
\end{equation}
respectively, and note that $\hat{\zeta}(k) = \bar{\zeta}(k) + \Delta \Upsilon_c \bar{x}_1(k)$. Let $u_1(k) = N_1 \bar{x}_1(k) + v(k)$ and note that the dynamics of $\bar{\zeta}(k)$ satisfies
\begin{equation} \label{eq:zeta-bar-system-cascade}
\begin{aligned}
    \bar{\zeta}(k+1) = &  \bar{x}_2(k+1) - \Upsilon_c \bar{x}_{1}(k+1) \\
    \overset{\eqref{eq:cascadeMeasuremntSystem}}{=} &  A_2  \bar{x}_2(k) + B_2 \bar{x}_1(k) - r_2(k) \\
    & - \Upsilon_c (A_1  \bar{x}_1(k) + B_1 u_1(k) -r_1(k)) \\
    = &  A_2  \bar{x}_2(k) + B_2 \bar{x}_1(k) - r_2(k) \\
    & - \Upsilon_c ((A_1 + B_1 N_1)  \bar{x}_1(k) + B_1 v(k) -r_1(k)) \\
    = &  A_2  \bar{x}_2(k) + (B_2 - \Upsilon_c(A_1 + B_1 N_1)) \bar{x}_1(k) - r_2(k) \\
    & - \Upsilon_c (B_1 v(k) - r_1(k)) \\
    \overset{\eqref{eq:sylvester_cascade}}{=} &  A_2  (\bar{x}_2(k) - \Upsilon_c \bar{x}_1(k)) - \Upsilon_c B_1 v(k)  \\
    & + \Upsilon_c r_1(k) - r_2(k) \\
    = &  A_2  \bar{\zeta}(k) - \Upsilon_c B_1 v(k) 
    + \Upsilon_c r_1(k) - r_2(k) \\
    = &  A_2  \bar{\zeta}(k) - \hat{\Upsilon}_c B_1 v(k) \\
    &+ \Upsilon_c r_1(k) - r_2(k) - \Delta \Upsilon_c B_1 v(k) ,
\end{aligned}
\end{equation}
where 
in the last equality we used the definition of $\Delta \Upsilon_c$. Thus, we note that the term $\Upsilon_c r_1(k) - r_2(k) - \Delta \Upsilon_c B_1 v(k)$ plays the role of the ``process noise'' of $\bar{\zeta}(k)$. Together with the observation from \eqref{eq:overzetahatzeta} that $\Delta \Upsilon_c \bar{x}_1(k) = \hat{\zeta}(k) - \bar{\zeta}(k)$ plays the role of the ``measurement noise'' of $\bar{\zeta}(k)$,  
it becomes clear that designing $N_2$ such that $v(k) = N_2 \hat{\zeta}(k)$ renders $A_2 -\hat{\Upsilon}_c B_1 N_2$ Schur is another iteration of the noise-robust result in Section~\ref{sec:preliminaries-data-driven}, with the SNR condition derived in the next result.

\begin{lemma} \label{lemma:N2-noise-cs}
Consider the $\bar{\zeta}$-dynamics \eqref{eq:zeta-bar-system-cascade}. Suppose that the rank condition
\begin{equation}
    \rank\left(\begin{bmatrix}
        \hat{Z}_{-}\\ V_{-}
    \end{bmatrix}\right) = n_2 + m_1,   
\end{equation}
with $\hat{Z}_{-} := \bar{X}_{2, -} - \hat{\Upsilon}_c \bar{X}_{1, -}$ and $V_{-} := U_{1, -} - N_1 \bar{X}_{1, -}$, holds.
Suppose there exists a scalar $\alpha_2 > 0$ such that
\begin{equation} \label{eq:SNR-condtion-zeta}
    R_{\zeta,-}R_{\zeta,-}^\top \preceq \frac{\alpha_2^2}{2(2+\alpha_2)}\hat{Z}_{+}\hat{Z}_{+}^\top,
\end{equation}
where $\hat{Z}_{+} := \bar{X}_{2, +} - \hat{\Upsilon}_c \bar{X}_{1, +}$ and
\begin{equation} \label{eq:what-is-r-sigma}
\begin{split}
    R_{\zeta,-} &:= R_{2,-} -  \bar{X}_{2,-} \hat{G}_{\Upsilon_c} R_{1,-} - R_{2,-}\hat{G}_{\Upsilon_c}\bar{X}_{1,-} \\
    &+ \bar{X}_{2,-} \hat{G}_{\Upsilon_c} R_{1,-} G_{N_1} \bar{X}_{1,-}.
\end{split}
\end{equation}
Then the gain
$N_2 := V_{-}Q_{\zeta}(\hat{Z}_{-} Q_{\zeta})^{-1}$, where $Q_{\zeta} \in \mathbb{R}^{T\times n_2}$ is any matrix satisfying
\begin{subequations}
    \label{eq:sdp-for-noisy-system-N2}
    \begin{align}
    &\begin{bmatrix} 
        \hat{Z}_{-} Q_{\zeta} - \alpha_2 \hat{Z}_{+} \hat{Z}_{+}^\top & \hat{Z}_{+} Q_{\zeta} \\ Q_{\zeta}^\top \hat{Z}_{+}^\top & \hat{Z}_{-} Q_{\zeta}
    \end{bmatrix} \succ 0, \\
    &\begin{bmatrix} 
        I & Q_{\zeta} \\ Q_{\zeta}^\top & \hat{Z}_{-} Q_x
    \end{bmatrix} \succ 0,
     \end{align}
\end{subequations}
renders the matrix $A_2 -\hat{\Upsilon}_c B_1 N_2$ Schur.
\end{lemma}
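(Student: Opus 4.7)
The plan is to cast the design of $N_2$ as another instance of the noise-robust data-driven stabilisation result recalled in Section~\ref{sec:preliminaries-data-driven} (see also Remark~\ref{remark:N1-noise-cs}), applied now to the ``empirical'' pair $(A_2,-\hat{\Upsilon}_c B_1)$ using the data matrices $(\hat{Z}_{-},\hat{Z}_{+},V_{-})$. To this end, the central task is to show that these data matrices obey a data equation of the form
\begin{equation*}
    \hat{Z}_{+} \;=\; A_2\,\hat{Z}_{-} \;+\; \bigl(-\hat{\Upsilon}_c B_1\bigr)V_{-} \;-\; R_{\zeta,-},
\end{equation*}
with $R_{\zeta,-}$ exactly as in \eqref{eq:what-is-r-sigma}. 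Once this is established, the remainder of the argument is a verbatim application of the noise-robust result of Section~\ref{sec:preliminaries-data-driven}.

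First I would substitute the measurement dynamics \eqref{eq:cascadeMeasuremntSystem} into the definition $\hat{Z}_{+}:=\bar{X}_{2,+}-\hat{\Upsilon}_c\bar{X}_{1,+}$ to obtain
\begin{equation*}
\hat{Z}_{+} = A_2\bar{X}_{2,-} + (B_2-\hat{\Upsilon}_c A_1)\bar{X}_{1,-} - \hat{\Upsilon}_c B_1 U_{1,-} - R_{2,-} + \hat{\Upsilon}_c R_{1,-}.
\end{equation*}
Then I would use $\bar{X}_{2,-}=\hat{Z}_{-}+\hat{\Upsilon}_c\bar{X}_{1,-}$ and $U_{1,-}=V_{-}+N_1\bar{X}_{1,-}$ to collect terms in $\hat{Z}_{-}$, $V_{-}$, and a residual proportional to $\bar{X}_{1,-}$:
\begin{equation*}
\hat{Z}_{+} = A_2\hat{Z}_{-} - \hat{\Upsilon}_c B_1 V_{-} + \bigl(A_2\hat{\Upsilon}_c+B_2-\hat{\Upsilon}_c(A_1+B_1N_1)\bigr)\bar{X}_{1,-} - R_{2,-} + \hat{\Upsilon}_c R_{1,-}.
\end{equation*}
Here the residual ``Sylvester mismatch'' $A_2\hat{\Upsilon}_c+B_2-\hat{\Upsilon}_c(A_1+B_1N_1)$ would vanish if $\hat{\Upsilon}_c$ were exact; the key step is to replace $\hat{\Upsilon}_c=\Upsilon_c-\Delta\Upsilon_c$ and use the nominal Sylvester identity \eqref{eq:sylvester_cascade} to reduce this term to $-(A_2\Delta\Upsilon_c-\Delta\Upsilon_c(A_1+B_1N_1))$, which by Lemma~\ref{lemma:errorBound-cascade}(i) equals $R_{2,-}\hat{G}_{\Upsilon_c}-\bar{X}_{2,-}\hat{G}_{\Upsilon_c}R_{1,-}G_{N_1}$. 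Substituting this expression back and using $\hat{\Upsilon}_c=\bar{X}_{2,-}\hat{G}_{\Upsilon_c}$ to collect the terms proportional to $R_{1,-}$, I obtain precisely $R_{\zeta,-}$ as in \eqref{eq:what-is-r-sigma}.

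With the data equation $\hat{Z}_{+}=A_2\hat{Z}_{-}+(-\hat{\Upsilon}_c B_1)V_{-}-R_{\zeta,-}$ in hand, the situation is formally identical to the noise-corrupted setting of Section~\ref{sec:preliminaries-data-driven} (or Remark~\ref{remark:N1-noise-cs}): $(\hat{Z}_{-},V_{-})$ play the role of the state/input data, $\hat{Z}_{+}$ that of the one-step shifted state data, and $R_{\zeta,-}$ that of the encapsulated noise. The stated rank condition on $\mathrm{col}(\hat{Z}_{-},V_{-})$ allows the standard data-based parametrisation of $A_2-\hat{\Upsilon}_c B_1 N_2$, the SNR condition \eqref{eq:SNR-condtion-zeta} matches the form of \eqref{eq:snr-condition-normal}, and the LMI \eqref{eq:sdp-for-noisy-system-N2} matches \eqref{eq:sdp-for-noisy-system} mutatis mutandis. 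A direct invocation of that result then yields that $N_2:=V_{-}Q_\zeta(\hat{Z}_{-}Q_\zeta)^{-1}$ renders $A_2-\hat{\Upsilon}_c B_1 N_2$ Schur.

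The main obstacle I anticipate is purely algebraic: carefully identifying the ``Sylvester mismatch'' $A_2\hat{\Upsilon}_c+B_2-\hat{\Upsilon}_c(A_1+B_1N_1)$ with $-(A_2\Delta\Upsilon_c-\Delta\Upsilon_c(A_1+B_1N_1))$ and then tracing the four terms of $R_{\zeta,-}$ back through $\hat{\Upsilon}_c=\bar{X}_{2,-}\hat{G}_{\Upsilon_c}$. Conceptually, everything reduces to invoking Lemma~\ref{lemma:errorBound-cascade}(i) to absorb the empirical Sylvester error into the encapsulated noise of the $\hat{\zeta}$-level data equation; no new stability argument is needed beyond what Section~\ref{sec:preliminaries-data-driven} already provides.
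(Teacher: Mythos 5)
Your proposal is correct and follows essentially the same route as the paper's proof: both identify the encapsulated noise $R_{\zeta,-}$ by combining the measurement dynamics \eqref{eq:cascadeMeasuremntSystem} with the Sylvester-error characterisation \eqref{err-sylvester-cs} of Lemma~\ref{lemma:errorBound-cascade}(i), and then invoke the noise-robust stabilisation result of Section~\ref{sec:preliminaries-data-driven} on the pair $(\hat{Z}_{-},V_{-})$ with shifted data $\hat{Z}_{+}$. The only cosmetic difference is that you carry out the algebra directly at the data-matrix level, whereas the paper works at the signal level via $\hat{\zeta}(k)=\bar{\zeta}(k)+\Delta\Upsilon_c\bar{x}_1(k)$ and the previously derived $\bar{\zeta}$-dynamics \eqref{eq:zeta-bar-system-cascade}; the computations are term-for-term equivalent and yield exactly \eqref{eq:what-is-r-sigma}.
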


\begin{proof}
The construction of $R_{\zeta,-}$ follows the same idea used to construct $r_1(k)$ and $r_2(k)$ as in \eqref{eq:cascade-noise-R}, which is to find an encapsulated signal $r_\zeta(k)$ such that the $\hat{\zeta}$-dynamics can be expressed as
\begin{equation} \label{eq:zeta-hat-dynamics}
    \hat{\zeta}(k+1) =  A_2  \hat{\zeta}(k) - \hat{\Upsilon}_c B_1 v(k) - r_\zeta(k),
\end{equation}
and then the claim boils down to a direct consequence of the noise-robust result in Section \ref{sec:preliminaries-data-driven}. 
Thus, the proof consists in deriving \eqref{eq:what-is-r-sigma}. To this end, recall that $ \hat{\zeta}(k)  =  \bar{\zeta}(k) + \Delta \Upsilon_c \bar{x}_1(k)$. By writing this for $k+1$ and using the $\bar{\zeta}$-dynamics \eqref{eq:zeta-bar-system-cascade}, we obtain \eqref{eq:zeta-hat-dynamics} where $r_\zeta(k)$ is given by
\begin{equation*}
\begin{aligned}
    r_\zeta(k) := & A_2 \Delta \Upsilon_c \bar{x}_1(k) -  \Delta \Upsilon_c \bar{x}_1(k+1) \\
    &- (\Upsilon_c r_1(k) - r_2(k) - \Delta \Upsilon_c B_1 v(k)) \\
    \overset{\eqref{eq:cascadeMeasuremntSystem1}}{=}& A_2 \Delta \Upsilon_c \bar{x}_1(k) -  \Delta \Upsilon_c (A_1  \bar{x}_1(k) + B_1 u_1(k) -r_1(k)) \\
    &- (\Upsilon_c r_1(k) - r_2(k) - \Delta \Upsilon_c B_1 v(k)) \\
    =& A_2 \Delta \Upsilon_c \bar{x}_1(k) -  \Delta \Upsilon_c (A_1 + B_1 N_1)  \bar{x}_1(k) \\
    &+ \Delta \Upsilon_c r_1(k) - \Upsilon_c r_1(k) + r_2(k) \\
    \overset{\eqref{err-sylvester-cs}}{=}& (- R_{2, -} \hat{G}_{\Upsilon_c} + \bar{X}_{2,-} \hat{G}_{\Upsilon_c} R_{1, -} G_{N_1})  \bar{x}_1(k) \\
    &- \hat{\Upsilon}_c r_1(k) + r_2(k), \\
\end{aligned}
\end{equation*}
where we used 
$u_1(k) = N_1 \bar{x}_1(k) + v(k)$ in the third equality. 
By recalling that $\hat{\Upsilon}_c = \bar{X}_{2,-} \hat{G}_{\Upsilon_c}$, the definition of the data matrix $R_{\zeta, -}$ in \eqref{eq:what-is-r-sigma} follows. This completes the proof. 
\end{proof}

\begin{remark}
It is noteworthy that as we went through the three steps of the forwarding method, we always characterised all the errors as explicit expressions of the source uncertainties $R_{1,-}$ and $R_{2,-}$. For the second step, which involved solving \eqref{eq:empirical-LMI-Upsilon-noise}, the error $\Delta \Upsilon_c$ is connected back to $R_{1,-}$ and $R_{2,-}$ via \eqref{err-sylvester-cs}. Likewise, for the third step, the uncertainty $R_{\zeta, -}$
in \eqref{eq:what-is-r-sigma} is expressed in terms of $R_{1,-}$ and $R_{2,-}$ (while the remaining terms therein can be evaluated empirically). 
\end{remark}

Before proceeding with the analysis of the closed-loop system trajectories, we first determine the closed-loop dynamics in the measurement space.

\begin{lemma}
\label{lemma-closeloop2cascade}
Consider system \eqref{eq:cascadeMeasuremntSystem}, the empirical transformed state $\hat{\zeta}$, and the control \eqref{eq:controllerNoisyCascade}. Let $\Delta \Upsilon_c$ be defined as in Lemma \ref{lemma:errorBound-cascade}. Then, the closed-loop dynamics of $\col(\bar{x}_1(k), \hat{\zeta}(k))$ are described by
\begin{equation}   \label{eq:noisy-overal-cl-system}
    \begin{bmatrix}
        \bar{x}_1(k+1)\\ \hat{\zeta}(k+1)
    \end{bmatrix} 
    = \mathcal{A}
    \begin{bmatrix}
        \bar{x}_1(k)\\ \hat{\zeta}(k)
    \end{bmatrix} 
    + \mathcal{B}
    \begin{bmatrix}
        r_1(k) \\r_2(k)
    \end{bmatrix},
\end{equation}
where
\begin{equation*}
    \mathcal{A} := \begin{bmatrix}
        A_1 + B_1 N_1 & B_1 N_2 \\ 
        \Delta \Upsilon_c (A_1 +B_1 N_1) - A_2 \Delta \Upsilon_c & A_2 - \hat{\Upsilon}_c B_1 N_2
    \end{bmatrix},
\end{equation*}
\begin{equation*}
    \mathcal{B} := \begin{bmatrix}
        -I & \mathbf{0} \\
        \hat{\Upsilon}_c & -I
    \end{bmatrix}.
\end{equation*}
\end{lemma}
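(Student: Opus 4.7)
The plan is to derive the two block rows of \eqref{eq:noisy-overal-cl-system} separately by substituting the control law \eqref{eq:controllerNoisyCascade} into the measurement dynamics \eqref{eq:cascadeMeasuremntSystem}, and then to rewrite the result in the coordinates $(\bar{x}_1,\hat{\zeta})$ using the definition $\hat{\zeta}(k)=\bar{x}_2(k)-\hat{\Upsilon}_c\bar{x}_1(k)$.

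First, I would handle the $\bar{x}_1$-row, which is the easy part. Noting that $\bar{x}_2(k)-\hat{\Upsilon}_c\bar{x}_1(k)=\hat{\zeta}(k)$, the control \eqref{eq:controllerNoisyCascade} reads $u_1(k)=N_1\bar{x}_1(k)+N_2\hat{\zeta}(k)$. Substituting into \eqref{eq:cascadeMeasuremntSystem1} immediately yields the first block row $\bar{x}_1(k+1)=(A_1+B_1N_1)\bar{x}_1(k)+B_1N_2\hat{\zeta}(k)-r_1(k)$, which matches the first row of $\mathcal{A}$ and $\mathcal{B}$.

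Next, I would tackle the $\hat{\zeta}$-row. Starting from $\hat{\zeta}(k+1)=\bar{x}_2(k+1)-\hat{\Upsilon}_c\bar{x}_1(k+1)$ and substituting both equations of \eqref{eq:cascadeMeasuremntSystem} together with $u_1(k)=N_1\bar{x}_1(k)+N_2\hat{\zeta}(k)$, I obtain
\begin{equation*}
\hat{\zeta}(k+1)=A_2\bar{x}_2(k)+\bigl[B_2-\hat{\Upsilon}_c(A_1+B_1N_1)\bigr]\bar{x}_1(k)-\hat{\Upsilon}_cB_1N_2\hat{\zeta}(k)+\hat{\Upsilon}_cr_1(k)-r_2(k).
\end{equation*}
Replacing $\bar{x}_2(k)$ by $\hat{\zeta}(k)+\hat{\Upsilon}_c\bar{x}_1(k)$ isolates $A_2\hat{\zeta}(k)$ and collects the remaining $\bar{x}_1(k)$-coefficient into $A_2\hat{\Upsilon}_c+B_2-\hat{\Upsilon}_c(A_1+B_1N_1)$.

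The main (and only real) obstacle is simplifying this coefficient to match the stated $\mathcal{A}_{21}$. Here I would write $\hat{\Upsilon}_c=\Upsilon_c-\Delta\Upsilon_c$ and invoke the nominal Sylvester equation \eqref{eq:sylvester_cascade}, which tells us $\Upsilon_c(A_1+B_1N_1)=A_2\Upsilon_c+B_2$. The $\Upsilon_c$-contributions cancel, leaving precisely $\Delta\Upsilon_c(A_1+B_1N_1)-A_2\Delta\Upsilon_c$, which is the $(2,1)$-entry of $\mathcal{A}$. Combining both rows produces \eqref{eq:noisy-overal-cl-system} with the announced $\mathcal{A}$ and $\mathcal{B}$, completing the argument.
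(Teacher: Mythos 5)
Your proposal is correct and follows essentially the same route as the paper: substitute the control law into the measurement dynamics, expand $\hat{\zeta}(k+1)=\bar{x}_2(k+1)-\hat{\Upsilon}_c\bar{x}_1(k+1)$, and use the nominal Sylvester equation \eqref{eq:sylvester_cascade} with the decomposition $\Upsilon_c=\hat{\Upsilon}_c+\Delta\Upsilon_c$ to reduce the coefficient $A_2\hat{\Upsilon}_c+B_2-\hat{\Upsilon}_c(A_1+B_1N_1)$ to $\Delta\Upsilon_c(A_1+B_1N_1)-A_2\Delta\Upsilon_c$. The paper's only cosmetic difference is that it substitutes the decomposition into the Sylvester equation before inserting the result into the $\hat{\zeta}$-dynamics, rather than into the collected $\bar{x}_1$-coefficient afterwards.
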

\begin{proof}
From \eqref{eq:cascadeMeasuremntSystem1} and \eqref{eq:controllerNoisyCascade} it follows that
\begin{equation} \label{eq:proof_xbar_dynamics}
\bar{x}_1(k+1) = (A_1 + B_1 N_1) \bar{x}_1(k) + B_1 N_2 \hat{\zeta}(k) - r_1(k).
\end{equation}
For the variable $\hat{\zeta}(k)$, we have 
that
\begin{equation} \label{eq:proof-hatzeta-dynamics}
\begin{aligned}
\hat{\zeta}(k+1) & = \bar{x}_2(k+1) - \hat{\Upsilon}_c \bar{x}_1(k+1)  \\
&= A_2  \bar{x}_2(k) + B_2 \bar{x}_1(k) - r_2(k) \\
& \quad -\! \hat{\Upsilon}_c(A_1  \bar{x}_1(k) \!+\! B_1 (N_1 \bar{x}_1(k) \!+\! N_2 \hat{\zeta}(k)) \!-\!r_1(k)) \\
&= A_2  \bar{x}_2(k) + (B_2 - \hat{\Upsilon}_c(A_1 + B_1 N_1))\bar{x}_1(k) \\
& \quad - \hat{\Upsilon}_c B_1 N_2 \hat{\zeta}(k) - (r_2(k) - \hat{\Upsilon}_c r_1(k)).
\end{aligned}
\end{equation}
Substituting $\Upsilon_c = \hat{\Upsilon}_c + \Delta \Upsilon_c$ into the Sylvester equation \eqref{eq:sylvester_cascade}, we obtain that
\begin{equation} 
\begin{aligned}
    B_2 - & \hat{\Upsilon}_c(A_1 +  B_1 N_1)  \\
    &= - A_2 \Delta \Upsilon_c +  \Delta \Upsilon_c (A_1 +B_1 N_1) - A_2 \hat{\Upsilon}_c.
\end{aligned}
\end{equation}
Exploiting the above equation, \eqref{eq:proof-hatzeta-dynamics} becomes
\begin{equation} \label{eq:proof_zetahat_dynamics}
\begin{aligned}
\hat{\zeta}(k+1)  
&= (A_2 - \hat{\Upsilon}_c B_1 N_2) \hat{\zeta}(k) \\
& \quad + (- A_2 \Delta \Upsilon_c +  \Delta \Upsilon_c (A_1 +B_1 N_1))\bar{x}_1(k) \\
& \quad - (r_2(k) - \hat{\Upsilon}_c r_1(k)).
\end{aligned}
\end{equation}
Combining the dynamics in \eqref{eq:proof_xbar_dynamics} and \eqref{eq:proof_zetahat_dynamics} proves the claim.\end{proof}

Lemma~\ref{lemma-closeloop2cascade} shows that the presence of the process and measurement noise destroys the diagonal structure which was leveraged by the forwarding method to achieve (asymptotic) Lyapunov stability. Nevertheless, we can derive an input-to-state stability-like result.

\begin{theorem} \label{theorem:ISS-cascade} 
Consider system \eqref{eq:cascadeNoise} and the controller \eqref{eq:controllerNoisyCascade}. Suppose that the SNR conditions \eqref{eq:SNR-condtion-x1} and \eqref{eq:SNR-condtion-zeta} hold for some positive scalars $\alpha_1$ and $\alpha_2$. Let $N_1$, $N_2$, and $\hat{\Upsilon}_c$ be obtained using Remark \ref{remark:N1-noise-cs}, Lemma \ref{lemma:N2-noise-cs}, and \eqref{eq:empirical-LMI-Upsilon-noise}, respectively. Let the blocks of $\mathcal{A}$ as in Lemma~\ref{lemma-closeloop2cascade} be denoted
$ 
 [\mathcal{A}_{11}, \mathcal{A}_{12}; \mathcal{A}_{21}, \mathcal{A}_{22}].
$
Suppose that $R_{1,-}$ and $R_{2,-}$ are such that
\begin{equation} \label{eq:stability-margin-condition}
\begin{aligned}
    \| R_{2, -} & \hat{G}_{\Upsilon_c} - \bar{X}_{2,-} \hat{G}_{\Upsilon_c} R_{1, -} G_{N_1}\|_2 \\
    & < \frac{1}{\|\mathcal{A}_{12}\|_2 \|(zI-\mathcal{A}_{11})^{-1}\|_{\mathcal{H}_\infty} \|(zI-\mathcal{A}_{22})^{-1}\|_{\mathcal{H}_\infty}},
\end{aligned}
\end{equation}
where $\|\cdot\|_{\mathcal{H}_\infty} := \sup_{|z|=1}\sigma_{max}(\cdot)$. Define $\mathring{\zeta}(k) := x_2(k) - \hat{\Upsilon}_c x_1(k)$. Then, the trajectories of system \eqref{eq:cascadeNoise} in closed-loop with \eqref{eq:controllerNoisyCascade} satisfy
\begin{equation} \label{eq:cascade-noise-iss-bound}  
\begin{aligned}
    \left\|\begin{bmatrix}
    x_1(k) \\ \mathring{\zeta}(k)
    \end{bmatrix}\right\|_2 
    \le & \beta\left(
    \left\|\begin{bmatrix}
    \bar{x}_1(0)\\
    \hat{\zeta}(0)
    \end{bmatrix}\right\|_2, k\right) 
    + \gamma\left(\left|\begin{bmatrix}
            r_1\\r_2
        \end{bmatrix}\right|_{k-1}\right) \\
    &+ \left\|\begin{bmatrix}
        \Delta x_1(k) \\ \Delta x_1(k) - \hat{\Upsilon}_c \Delta x_2(k)
    \end{bmatrix}\right\|_2,
\end{aligned}
\end{equation}
where $|z|_k := \sup_{0\le j \le k}\|z(j)\|$, $\beta$ is a class-$\mathcal{K}\mathcal{L}$ function and $\gamma$ is a class-$\mathcal{K}$ function given as
\begin{equation*}
    \beta(s, k) := c p^k s, \quad \gamma(s) := \frac{c\| \mathcal{B} \|_2}{1-p} s,
\end{equation*}
with some constants $c > 0$ and $0 \leq p < 1$.
\end{theorem}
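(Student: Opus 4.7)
The plan is to leverage the closed-loop representation already established in Lemma~\ref{lemma-closeloop2cascade}, namely \eqref{eq:noisy-overal-cl-system}, and proceed in three stages: (i) prove that the block matrix $\mathcal{A}$ is Schur; (ii) derive a linear ISS-style bound on $\col(\bar{x}_1, \hat{\zeta})$ in the measurement space; (iii) translate this bound back to $\col(x_1, \mathring{\zeta})$ using the definitions of the measurement noise in \eqref{eq:cascade-measurements}.

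For stage (i), the diagonal blocks of $\mathcal{A}$ are already Schur: $\mathcal{A}_{11}=A_1+B_1N_1$ by Remark~\ref{remark:N1-noise-cs} under the SNR condition \eqref{eq:SNR-condtion-x1}, and $\mathcal{A}_{22}=A_2-\hat{\Upsilon}_c B_1 N_2$ by Lemma~\ref{lemma:N2-noise-cs} under \eqref{eq:SNR-condtion-zeta}. The key step is rewriting the off-diagonal block via the error Sylvester equation \eqref{err-sylvester-cs} as
\begin{equation*}
    \mathcal{A}_{21} = -\bigl(A_2 \Delta \Upsilon_c - \Delta \Upsilon_c (A_1+B_1N_1)\bigr) = R_{2,-}\hat{G}_{\Upsilon_c} - \bar{X}_{2,-}\hat{G}_{\Upsilon_c} R_{1,-} G_{N_1}.
\end{equation*}
With this identification, hypothesis \eqref{eq:stability-margin-condition} reads
\begin{equation*}
    \|\mathcal{A}_{12}\|_2\,\|\mathcal{A}_{21}\|_2\,\|(zI-\mathcal{A}_{11})^{-1}\|_{\mathcal{H}_\infty}\,\|(zI-\mathcal{A}_{22})^{-1}\|_{\mathcal{H}_\infty} < 1,
\end{equation*}
which is exactly the discrete-time small-gain condition for the feedback loop formed by the two Schur subsystems through the cross terms $\mathcal{A}_{12}$ and $\mathcal{A}_{21}$. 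The discrete-time small-gain theorem then guarantees Schurness of $\mathcal{A}$.

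Stage (ii) is standard: Schurness of $\mathcal{A}$ yields constants $c>0$ and $p\in[0,1)$ with $\|\mathcal{A}^k\|_2 \le c p^k$. Applying the variation-of-constants formula to \eqref{eq:noisy-overal-cl-system}, taking $\|\cdot\|_2$, and bounding the geometric series $\sum_{j=0}^{k-1}\|\mathcal{A}^{k-1-j}\mathcal{B}\|_2 \le \tfrac{c\|\mathcal{B}\|_2}{1-p}$ produces a bound on $\col(\bar{x}_1(k), \hat{\zeta}(k))$ with precisely the claimed $\beta(s,k)=cp^ks$ and $\gamma(s)=\tfrac{c\|\mathcal{B}\|_2}{1-p}s$. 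For stage (iii), note that $x_1(k)=\bar{x}_1(k)-\Delta x_1(k)$ by \eqref{eq:cascade-measurements}, while substituting \eqref{eq:cascade-measurements} into \eqref{eq:overzetahatzeta} yields $\hat{\zeta}(k) = \mathring{\zeta}(k) + \Delta x_2(k) - \hat{\Upsilon}_c \Delta x_1(k)$. The triangle inequality applied to the measurement-space bound, combined with these two identities, adds a residual term involving the measurement noise, delivering \eqref{eq:cascade-noise-iss-bound}.

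The main obstacle is stage (i): it requires carefully converting \eqref{eq:stability-margin-condition}, a scalar norm inequality expressed in terms of noise-dependent quantities, into Schurness of the block matrix $\mathcal{A}$ by invoking the small-gain theorem on the feedback interconnection of two stable subsystems. The remaining steps---norm-bounding the convolution sum in stage (ii) and applying a triangle inequality for the coordinate change in stage (iii)---are routine once $\mathcal{A}$ has been shown to be Schur. Explicit expressions for $c$ and $p$ in terms of the system data and closed-loop spectrum are not required for the statement and would only refine the $\mathcal{KL}$-gain.
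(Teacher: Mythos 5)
Your proposal is correct and follows essentially the same route as the paper: identify $\mathcal{A}_{21}$ with the right-hand side of the error Sylvester equation \eqref{err-sylvester-cs}, read \eqref{eq:stability-margin-condition} as a small-gain condition on the interconnection of the two Schur diagonal blocks to conclude that $\mathcal{A}$ is Schur, obtain the linear ISS bound in the measurement coordinates (which the paper gets by citation rather than your explicit variation-of-constants computation), and finish with a triangle inequality relating $(\bar{x}_1,\hat{\zeta})$ to $(x_1,\mathring{\zeta})$. Incidentally, your residual term $\Delta x_2(k) - \hat{\Upsilon}_c \Delta x_1(k)$ is the dimensionally consistent one, and the term $\Delta x_1(k) - \hat{\Upsilon}_c \Delta x_2(k)$ appearing in the theorem statement and the paper's proof appears to be a typo, so this is not a discrepancy on your side.
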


\begin{proof}
We first note that $\mathcal{A}_{11}$ and $\mathcal{A}_{22}$ are guaranteed to be Schur by the obtained $N_1$ and $N_2$, as established in Remark \ref{remark:N1-noise-cs} and Lemma \ref{lemma:N2-noise-cs}. 
This ensures that the terms $\|(zI-\mathcal{A}_{11})^{-1}\|_{\mathcal{H}_\infty}$ and $\|(zI-\mathcal{A}_{22})^{-1}\|_{\mathcal{H}_\infty}$ in \eqref{eq:stability-margin-condition} are finite. 
Then, by the result (i) in Lemma \ref{lemma:errorBound-cascade}, \eqref{eq:stability-margin-condition} yields that
$$
\|\mathcal{A}_{21}\|_2 \|\mathcal{A}_{12}\|_2 \|(zI-\mathcal{A}_{11})^{-1}\|_{\mathcal{H}_\infty} \|(zI-\mathcal{A}_{22})^{-1}\|_{\mathcal{H}_\infty} < 1.
$$
By exploiting that $\|\cdot\|_{\mathcal{H}_{\infty}} = \|\cdot\|_{2}$ for constant matrices and the sub-multiplicative property of the $\mathcal{H}_{\infty}$ induced operator norm, we have 
\begin{equation*} 
    \|(zI-\mathcal{A}_{11})^{-1} \mathcal{A}_{12}\|_{\mathcal{H}_\infty} \|(zI-\mathcal{A}_{22})^{-1} \mathcal{A}_{21}\|_{\mathcal{H}_\infty} < 1.
\end{equation*}
Note that this is a small-gain condition which ensures the asymptotic stability of system \eqref{eq:noisy-overal-cl-system} under $r_1 \equiv 0$ and $r_2 \equiv 0$, finally concluding that $\mathcal{A}$ is Schur. 
Then, the input-to-state stability of \eqref{eq:noisy-overal-cl-system} follows with the provided $\beta \in \mathcal{K}\mathcal{L}$
and $\gamma \in \mathcal{K}$ (see \cite[Example 3.4]{jiang2001input}), namely
\begin{equation} \label{eq:proof-iss-formula}
\begin{aligned}
    \left\|\begin{bmatrix}
    \bar{x}_1(k) \\ \hat{\zeta}(k)
    \end{bmatrix}\right\|_2 
    \le & \,\beta\left(
    \left\|\begin{bmatrix}
    \bar{x}_1(0)\\
    \hat{\zeta}(0)
    \end{bmatrix}\right\|_2, k\right) 
    + \gamma\left(\left|\begin{bmatrix}
            r_1 \\ r_2
        \end{bmatrix}\right|_{k-1}\right).
\end{aligned}
\end{equation}
Last, recognising that $x_1(k) = \bar{x}_1(k) - \Delta x_1(k)$ and $\mathring{\zeta}(k) = \hat{\zeta}(k) - (\Delta x_1(k) - \hat{\Upsilon}_c \Delta x_2(k))$ yields
\begin{equation*}
     \left\|\begin{bmatrix}
    x_1(k) \\ \mathring{\zeta}(k) 
    \end{bmatrix}\right\|_2 
    \leq
     \left\|\begin{bmatrix}
    \bar{x}_1(k) \\ \hat{\zeta}(k)
    \end{bmatrix}\right\|_2
    +
    \left\|\begin{bmatrix}
        \Delta x_1(k) \\ \Delta x_1(k) - \hat{\Upsilon}_c \Delta x_2(k)
    \end{bmatrix}\right\|_2, 
\end{equation*}
which proves the claim.
\end{proof}

\begin{remark}
Although the condition \eqref{eq:stability-margin-condition} on $R_{1,-}$ and $R_{2,-}$ cannot be verified using data (because the right-hand side requires $A_1$, $A_2$, and $B_1$), it provides the insight that one should keep the decision variable $\hat{G}_{\Upsilon_c}$ as small as possible, suggesting the formulation of a least-norm problem as already described in Remark~\ref{remark:minmize-norm-G}. Moreover, it also shows that if the noises $R_{2,-}$ and $R_{1,-}$ are sufficiently small, then \eqref{eq:stability-margin-condition} will be satisfied. 
\end{remark}

\subsection{Extending to $\mathcal{N}$ Subsystems}
\label{sec:N-cascade}
In this section, we demonstrate how the data-driven forwarding method proposed in Section \ref{sec:cascade-full-info} can be implemented in a recursive manner, to stabilise a cascade of a number of subsystems greater than $\mathcal{N} = 2$. Namely, consider
\begin{equation} \label{eq:n-system-cascade}
    x_i(k+1) = A_ix_i(k) +  \begin{cases} 
    B_1 u_1(k),& \text{if } i = 1\\
    B_ix_{i-1}(k), & \text{if } i \geq 2
    \end{cases}
\end{equation}
for $i \in [1,\mathcal{N}]$, where $x_i(k) \in \mathbb{R}^{n_i}$, $u_1(k) \in \mathbb{R}^{n_0}$, $A_i \in \mathbb{R}^{n_i \times n_i}$, and $B_i \in \mathbb{R}^{n_i \times n_{i-1}}$, and the
the overall control law
\begin{equation} \label{eq:control-n-cascade}
    u_1(k) 
    = \sum_{i=1}^{\mathcal{N}} N_i \zeta_i(k),
\end{equation}
where $N_i \in \RR^{n_0 \times n_i}$ are the feedback gains 
and $\zeta_i(k) \in \RR^{n_i}$ are the 
``transformed'' states (defined in a way to be specified shortly), for $i\in [1, \mathcal{N}]$. 
The general idea in recursively applying the forwarding method for stabilisation is then as follows: provided that the first $(j-1)$ subsystems (with $j \in [2, \mathcal{N}]$) are stabilised by
\begin{equation} \label{eq:control-j-cascade}
    u_1(k) = \sum_{i=1}^{j-1} N_i \zeta_i(k),
\end{equation}
we find a transformation matrix $\Upsilon_{c,j-1} \in \RR^{n_i \times \sum_{i=1}^{j-1} n_i}$ such that the subspace 
$\{(\zeta_{1:j-1},x_{j}): x_{j}=\Upsilon_{c,j-1}\zeta_{1:j-1}\}$, with 
$$
\zeta_{1:j-1} := \col(\zeta_1(k), \dots, \zeta_{j-1}(k)),
$$ 
is rendered invariant and attractive 
by adding $v_{j}(k) := N_j \zeta_j(k)$ to the cascade input $u_1(k)$ given in \eqref{eq:control-j-cascade}. Then, the stability of the cascade will be extended to include the $j$-th subsystem. Applying this recursively for all $\mathcal{N}$ subsystems yields the overall control law \eqref{eq:control-n-cascade}, which renders the overall cascade stable. 
To formalise this idea 
and specify all the involved objects, we provide below a model-based result that is instrumental to presenting the ultimate goal, \ie, a data-driven approach to stabilise the $\mathcal{N}$-cascade system. To streamline the presentation, for all $i \in [1,\mathcal{N}]$, let 
\begin{equation} \label{eq:what-is-Acl-i-model-based}
\begin{aligned}
    A_{cl,1:i} & := \\ 
    & \begin{cases}
    A_1+ B_1N_1, & \text{if } i = 1\\
    \begin{bmatrix}
        A_{cl,1:i-1} & B_{cl,1:i-1}N_i \\ \mathbf{0} & A_i - \Upsilon_{c,i-1}B_{cl,1:i-1}N_i
    \end{bmatrix}, & \text{if }  i \geq 2\,, 
    \end{cases} 
\end{aligned}  
\end{equation}
and 
\begin{equation} \label{eq:what-is-Bcl-i-model-based}
    B_{cl,1:i} := \begin{cases}
    B_1, & \text{if } i = 1\\
    \col(B_{cl,1:i-1},-\Upsilon_{c,i-1}B_{cl,1:i-1}), & \text{if } i \geq 2\,.
    \end{cases}   
\end{equation}  

\begin{theorem} \label{lemma:n-cascade-model-based}
Consider the cascade system \eqref{eq:n-system-cascade}. Define the coordinate transformation (for the $i$-th subsystem) as
\begin{equation} \label{eq:define-zeta-i}
    \zeta_{i}(k) := \begin{cases} 
    x_i(k),& \text{if } i = 1\\
    x_i(k) - \Upsilon_{c,i-1}\zeta_{1:i-1}(k), & \text{if } i \geq 2,
    \end{cases} 
\end{equation}
where $\Upsilon_{c,1}, \Upsilon_{c,2}, \dots \Upsilon_{c,\mathcal{N}-1}$ solve the nested Sylvester equations
\begin{equation} \label{eq:sylvester-N-systems}
\begin{cases} 
    \Upsilon_{c,1}A_{cl,1:1} = A_{2}\Upsilon_{c,1} + B_{2}\,, \\
    \Upsilon_{c,i}A_{cl,1:i} = A_{i+1}\Upsilon_{c,i} + B_{i+1}\begin{bmatrix}
        \Upsilon_{c,i-1} & I
    \end{bmatrix}\,, 
    \end{cases}       
\end{equation}
for all $2 \leq i \leq \mathcal{N} - 1$.
Then, the feedback control law \eqref{eq:control-n-cascade}
renders the cascade \eqref{eq:n-system-cascade} asymptotically stable for any $\mathcal{N} \geq 2$, provided that the matrices 
$N_i$, for all $i\in [1, \mathcal{N}]$, are such that
\begin{equation} \label{eq:n-cascade-gain-condition}
    \begin{cases}
    (A_1 + B_{cl, 1:1} N_1), & \text{if } i = 1\\
    (A_i - \Upsilon_{c, i-1} B_{cl, 1:i-1} N_i), & \text{if } i \geq 2\,, 
    \end{cases}   
\end{equation}
are all Schur, and $\lambda(A_{cl, 1:i}) \cap \lambda(A_{i+1}) = \emptyset$, for all $i\in [1, \mathcal{N} - 1]$. \\
\end{theorem}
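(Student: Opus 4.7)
The plan is to proceed by induction on the stage index $j \in [1, \mathcal{N}]$, showing that the aggregated transformed state $\zeta_{1:j}(k) := \col(\zeta_1(k), \dots, \zeta_j(k))$ evolves as
\begin{equation*}
    \zeta_{1:j}(k+1) = A_{cl,1:j}\,\zeta_{1:j}(k) + B_{cl,1:j}\,v_{j+1}(k),
\end{equation*}
where $v_{j+1}(k) := \sum_{i=j+1}^{\mathcal{N}} N_i \zeta_i(k)$ collects the portion of $u_1$ that has not yet been allocated, with the convention $v_{\mathcal{N}+1} \equiv 0$. The base case $j = 1$ is immediate: since $B_{cl,1:1} = B_1$ and the feedback $u_1 = N_1 \zeta_1 + v_2$ renders $A_1 + B_1 N_1 = A_{cl,1:1}$ Schur by hypothesis \eqref{eq:n-cascade-gain-condition}, one has $\zeta_1(k+1) = A_{cl,1:1}\zeta_1(k) + B_{cl,1:1}v_2(k)$.

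For the inductive step, assume the hypothesis holds for $j-1$ and invert the coordinate change in \eqref{eq:define-zeta-i} to write $x_j(k) = \zeta_j(k) + \Upsilon_{c,j-1}\zeta_{1:j-1}(k)$ and $x_{j-1}(k) = \begin{bmatrix}\Upsilon_{c,j-2} & I\end{bmatrix}\zeta_{1:j-1}(k)$ (with the understanding that for $j=2$ this reduces to $x_1 = \zeta_1$). Substituting into the $x_j$-dynamics \eqref{eq:n-system-cascade} and using the inductive hypothesis for $\zeta_{1:j-1}(k+1)$, the expression for $\zeta_j(k+1) = x_j(k+1) - \Upsilon_{c,j-1}\zeta_{1:j-1}(k+1)$ produces a cross-term
\begin{equation*}
    \left( A_j\Upsilon_{c,j-1} + B_j \begin{bmatrix}\Upsilon_{c,j-2} & I\end{bmatrix} - \Upsilon_{c,j-1}A_{cl,1:j-1}\right)\zeta_{1:j-1}(k),
\end{equation*}
which vanishes identically by the $(j-1)$-th Sylvester equation in \eqref{eq:sylvester-N-systems}, leaving $\zeta_j(k+1) = A_j\zeta_j(k) - \Upsilon_{c,j-1}B_{cl,1:j-1}\, v_j(k)$. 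Setting $v_j = N_j \zeta_j + v_{j+1}$ and stacking the $\zeta_{1:j-1}$- and $\zeta_j$-dynamics, a direct computation shows that the resulting block matrices match \eqref{eq:what-is-Acl-i-model-based} and \eqref{eq:what-is-Bcl-i-model-based} exactly, completing the induction.

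Finally, at $j = \mathcal{N}$ the overall control law \eqref{eq:control-n-cascade} corresponds to $v_{\mathcal{N}+1} \equiv 0$ and yields $\zeta_{1:\mathcal{N}}(k+1) = A_{cl,1:\mathcal{N}}\zeta_{1:\mathcal{N}}(k)$. By construction $A_{cl,1:\mathcal{N}}$ is block upper-triangular with diagonal blocks given by \eqref{eq:n-cascade-gain-condition}, all Schur by hypothesis, so $A_{cl,1:\mathcal{N}}$ itself is Schur. Since the coordinate change from $\col(x_1, \dots, x_{\mathcal{N}})$ to $\zeta_{1:\mathcal{N}}$ is a static invertible linear map (block upper-triangular with identity diagonal), asymptotic stability transfers back to the original cascade \eqref{eq:n-system-cascade}. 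The non-resonance conditions $\lambda(A_{cl,1:i}) \cap \lambda(A_{i+1}) = \emptyset$ enter solely to guarantee uniqueness of each $\Upsilon_{c,i}$ in \eqref{eq:sylvester-N-systems}, so the recursive construction is well-defined. The main obstacle I anticipate is purely notational: the index bookkeeping in the inductive step requires careful verification that the term $B_j x_{j-1}$ produces precisely the right-hand side $B_j \begin{bmatrix}\Upsilon_{c,j-2} & I\end{bmatrix}$ appearing in \eqref{eq:sylvester-N-systems}, so that the cross-terms cancel cleanly, and that the resulting block structure reproduces \eqref{eq:what-is-Acl-i-model-based}--\eqref{eq:what-is-Bcl-i-model-based} exactly at every stage.
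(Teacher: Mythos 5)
Your proof is correct and follows essentially the same route as the paper's: induction along the cascade, with the nested Sylvester equations \eqref{eq:sylvester-N-systems} cancelling the cross-terms in the $\zeta_j$-dynamics and the block upper-triangular structure of $A_{cl,1:\mathcal{N}}$ delivering stability. The only (beneficial) difference is bookkeeping: you strengthen the induction hypothesis to the forced dynamics $\zeta_{1:j}(k+1)=A_{cl,1:j}\,\zeta_{1:j}(k)+B_{cl,1:j}\,v_{j+1}(k)$, which the paper instead obtains via a separate ``intermediate step'' appended to its stability-only hypothesis, and you make explicit the invertible block-triangular coordinate change that transfers stability back to the original $x$-coordinates, which the paper leaves implicit.
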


\begin{proof} 
The claim is demonstrated by induction.

\textit{(Base case).} The two-cascade setting ($\mathcal{N}=2$) serves as a base-case, for which the statement holds, as recalled 
in Section~\ref{sec:cascade-problem-formulation}. Namely,  
the closed-loop dynamics  
are given by 
\eqref{eq:closed-loop-cascade-2-systems}, \ie{},
\begin{equation*}
    \zeta_{1:2}(k+1) = 
    \underbrace{\begin{bmatrix}
        A_1 + B_1 N_1 & B_1 N_2\\ \mathbf{0} & A_2 - \Upsilon_{c,1} B_1 N_2
    \end{bmatrix}}_{= A_{cl, 1:2}}
    \zeta_{1:2}(k)\,, 
\end{equation*}
where $A_{cl, 1:2}$ is Schur and, hence, the closed-loop 2-cascade system is asymptotically stable.

\textit{(Induction hypothesis).} Assume that the claim holds for $\mathcal{N} = j-1$ subsystems. Namely, the control input 
\begin{equation}\label{eq:proof-last-subsystem-input}
u_1(k) = \sum_{i=1}^{j-1} N_i \zeta_i(k),
\end{equation} 
with the feedback gains $N_1, N_2, \dots, N_{j-1}$ that satisfy \eqref{eq:n-cascade-gain-condition} yield the asymptotically stable closed-loop dynamics 
\begin{equation} \label{eq:proof-last-subsystem-closedloop}
     \zeta_{1:j-1}(k+1) = A_{cl, 1:j-1} \zeta_{1:j-1}(k),
\end{equation}
with $A_{cl, 1:j-1}$ Schur.

Then, as an intermediate step, note that if an additional input $\bar u(k)$ is added to the $(j-1)$-cascade system, \ie,  if  
\begin{equation}\label{eq:proof-last-subsystem-intermediate-input}
u_1(k) = \sum_{i=1}^{j-1} N_i \zeta_i(k) + \bar{u}(k),
\end{equation}  
\eqref{eq:proof-last-subsystem-closedloop} becomes  
\begin{equation} \label{eq:zeta_i_unforced}
    \zeta_{1:j-1}(k+1) = A_{cl, 1:j-1} \zeta_{1:j-1}(k) + B_{cl, 1:j-1} \bar{u}(k)\,. 
\end{equation}

\textit{(Induction step).} Now consider $\mathcal{N} = j$, namely an additional subsystem
\begin{equation} \label{eq:n-cascade-end-node}
    x_j(k+1) = A_j x_j(k) + B_j x_{j-1}(k),
\end{equation}
is added to the tail of the cascade considered in the induction hypothesis, and a new feedback term $v_j(k) = N_j \zeta_j(k)$ is introduced to the input $u_1(k)$  given in \eqref{eq:proof-last-subsystem-input}, yielding an overall control input of the form \eqref{eq:proof-last-subsystem-intermediate-input}, with 
$\bar{u}(k) = N_j \zeta_j(k)$. 
It can be observed that now the $\mathcal{N}$-cascade \eqref{eq:n-system-cascade} can be decomposed into a two-cascade setting described in Section \ref{sec:preliminaries-sylvester}, where $\mathbf{\Sigma_1}$ is system \eqref{eq:zeta_i_unforced}, which contains the first $(j-1)$ subsystems (under the coordinate transformation \eqref{eq:define-zeta-i}), and $\mathbf{\Sigma_2}$ is the new ``end subsystem'' 
\eqref{eq:n-cascade-end-node}. Note that by \eqref{eq:define-zeta-i}, we have that 
$x_{j-1} =\Upsilon_{c,j-2} \zeta_{1:j-2}(k) + \zeta_{j-1}(k)$,
yielding that 
$\mathbf{C_1} = \begin{bmatrix}
        \Upsilon_{c, j-2} & I
\end{bmatrix}$
for $\mathbf{\Sigma_1}$. Then, the Sylvester equation \eqref{eq:sylvester-N-systems}, for $i = \mathcal{N}$, is a direct specialisation of \eqref{eq:SylEqGeneric} to the setting where \eqref{eq:zeta_i_unforced} serves as $\mathbf{\Sigma_1}$ and \eqref{eq:n-cascade-end-node} serves as $\mathbf{\Sigma_2}$, therefore the dynamics of the ``end subsystem'' 
can be expressed as
\begin{equation} \label{eq:end-node-zeta-open-loop}
    \zeta_j(k+1) = A_j  \zeta_j(k+1) - \Upsilon_{c, j-1} B_{cl, 1:j-1} \bar{u}(k).
\end{equation}
Then, recalling that 
$\bar{u}(k) = N_j \zeta_j(k)$, 
it trivially follows from \eqref{eq:zeta_i_unforced} and \eqref{eq:end-node-zeta-open-loop} that the closed-loop of the entire cascade is described by
\begin{equation*}
    \zeta_{1:j}(k+1) = 
    \underbrace{\begin{bmatrix}
        A_{cl, 1:j-1} & B_{cl, 1:j-1} N_j \\ 
        \mathbf{0} & A_j - \Upsilon_{c, j-1} B_{cl, 1:j-1} N_j
    \end{bmatrix}}_{= A_{cl, 1:j}}
    \zeta_{1:j}(k).
\end{equation*}
Note that this is an 
upper-triangular matrix, $ A_{cl, 1:j-1}$ is Schur by the induction hypothesis, and $A_j - \Upsilon_{c, j-1} B_{cl, 1:j-1} N_j$  is Schur by the  condition \eqref{eq:n-cascade-gain-condition} on $N_j$, which implies that $A_{cl, 1:j}$ is Schur too. 
Asymptotic stability of the entire $\mathcal{N}$-cascade follows, which proves that the claim holds for $\mathcal{N} = j$. 

As the base case holds and the induction step has been proven under the induction hypothesis, by the principle of mathematical induction, the claim holds for any $\mathcal{N} \geq 2$.
\end{proof}

We now present a data-driven approach that solves the $\mathcal{N}$-cascade stabilisation problem by exploiting the result presented in Theorem~\ref{lemma:n-cascade-model-based}.

\begin{theorem} \label{theorem:n-cascade-data-driven}
Consider the cascade system \eqref{eq:n-system-cascade} and the control law \eqref{eq:control-n-cascade}. Suppose that Assumption~\ref{ass:rank-assumption} holds. Let 
\begin{equation}\label{eq:Ncasc_N1}
    N_1 := U_{1,-}Q_1(X_{1,-}Q_1)^{-1}\,, 
    \end{equation}
    be such that $\lambda(A_1 + B_1 N_1) \cap \lambda(A_2) = \emptyset$, where $Q_1 \in \RR^{T \times n_1}$ is any matrix that satisfies \eqref{eq:basic-sdp}, with $Q_1 := Q_x$. Let
    $\mathcal{A}_1 := X_{1, +} G_{N_1}$ and $\mathcal{B}_1 := X_{1, +} G_{B_1}$, with $G_{N_1} \in \RR^{T \times n_1}$ and $G_{B_1} \in \RR^{T \times n_0}$ any matrices that satisfy
\begin{equation} \label{eq:proof-base-AandB}
    \begin{bmatrix}
        I & \mathbf{0} \\ 
        N_1 & I
    \end{bmatrix}
    =\begin{bmatrix}
        X_{1,-} \\ U_{1,-}
    \end{bmatrix} 
    \begin{bmatrix}
        G_{N_1} & G_{B_1}
    \end{bmatrix},
\end{equation}
and let
$\mathcal{Y}_1 := X_{2,-} G_{\Upsilon_c}$, with $G_{\Upsilon_c} \in \RR^{T \times n_1}$ any matrix that satisfies \eqref{eq:cascade-sylvester-sdp}.\\
Let 
$$
\mathcal{A}_i :=
    \begin{bmatrix}
        \mathcal{A}_{i-1} & \mathcal{B}_{i-1} N_i \\ \mathbf{0} & A_i - \mathcal{Y}_{i-1} \mathcal{B}_{i-1} N_i
    \end{bmatrix}
$$ 
and 
$
\mathcal{B}_i := \col(\mathcal{B}_{i-1}, -\mathcal{Y}_{i-1}\mathcal{B}_{i-1}),
$
for all $i\in [2, \mathcal{N}]$, and let $\mathcal{Y}_{i} := X_{i+1, -} G_{\Upsilon_i}$, with $G_{\Upsilon_i} \in \RR^{T \times \sum_{j=1}^{i} n_j}$ any matrix that satisfies
\begin{subequations}  \label{eq:compute-ups-n-cascade}
\begin{empheq}[left=\empheqlbrace]{align}
    X_{i+1,+}  G_{\Upsilon_i} &= X_{i+1,-} G_{\Upsilon_i} \mathcal{A}_{i}   \\
    X_{i+1, -}  G_{\Upsilon_i} &= 
    \begin{bmatrix}
        \mathcal{Y}_{i-1} & I
    \end{bmatrix}
  \end{empheq}
\end{subequations}
provided that 
\begin{equation} \label{eq:rank-condition-upsilon-N-cascade}
\rank \left(\begin{bmatrix}
    X_{i+1,-}\\X_{i,-}
\end{bmatrix}\right) 
        = n_{i+1} + n_{i}\,,
\end{equation}
for all $i \in [2, \mathcal{N}-1]$.
For $i \in [2, \mathcal{N}]$, let\footnote{The notation $-/ +$ is used to indicate that the subscripts  may be either $-$ or $+$ (the meaning of which is defined in Section \ref{sec:introduction}).} 
\begin{equation*} 
    Z_{1:i,-/+} := \begin{cases} 
    X_{i, -/+}, & \text{if } i = 1\\
    \col(Z_{1:i-1,-/+}, Z_{i,-/+}), & \text{if } i \geq 2, 
    \end{cases} 
\end{equation*}
and let
\begin{equation}\label{eq:Ncascade_gains}
N_i := V_{i,-} Q_{i}(Z_{i,-} Q_{i})^{-1}
\end{equation} 
be such that $\lambda(A_{cl,1:i}) \cap \lambda(A_{i+1}) = \emptyset$, where $Q_i \in \RR^{T \times n_i}$ is any matrix that satisfies
\begin{equation}
        \label{eq:sdp-for-zeta-system-new}
    \begin{bmatrix} 
        Z_{i,-}Q_i & Z_{i, +}Q_i \\ Q_i^\top Z_{i, +}^\top & Z_{i,-}Q_i
    \end{bmatrix} \succ 0,
\end{equation}
with $Z_{i, -/+} := X_{i,-/+} - \mathcal{Y}_{i-1} Z_{1:i-1,-/+}$ and $V_{i,-} := U_{1,-} - \sum_{j=1}^{i-1} N_j Z_{j,-}$, provided that
\begin{equation} \label{eq:rank-condition-zeta-N-cascade}
    \rank \left(\begin{bmatrix}
            Z_{i,-}\\ V_{i, -}
    \end{bmatrix}\right) = n_{i} + n_0\,.
\end{equation}
Then, the feedback control law \eqref{eq:control-n-cascade}, with $N_1$ given by \eqref{eq:Ncasc_N1} and $N_i$ given by \eqref{eq:Ncascade_gains}, for all $i\in [2, \mathcal{N}]$, 
renders the cascade \eqref{eq:n-system-cascade} asymptotically stable for any $\mathcal{N} \geq 2$.
\end{theorem}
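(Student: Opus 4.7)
The plan is to prove the claim by induction on $\mathcal{N}$, mirroring the inductive argument of Theorem~\ref{lemma:n-cascade-model-based}, while at each stage showing that the data-constructed matrices $\mathcal{A}_i$, $\mathcal{B}_i$, and $\mathcal{Y}_i$ coincide with the model-based quantities $A_{cl,1:i}$, $B_{cl,1:i}$, and $\Upsilon_{c,i}$ defined in \eqref{eq:what-is-Acl-i-model-based}--\eqref{eq:what-is-Bcl-i-model-based} and \eqref{eq:sylvester-N-systems}. Once this identification is established at every level, Theorem~\ref{lemma:n-cascade-model-based} delivers the asymptotic stability conclusion ``for free''.

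\textbf{Base case} ($\mathcal{N}=2$). First I would note that \eqref{eq:Ncasc_N1} together with \eqref{eq:basic-sdp} is precisely the stabilising-gain construction of \cite{de2019formulas}, so $A_1+B_1N_1$ is Schur. Then, by \eqref{eq:data-openloop-representation} applied jointly, the data-based matrices defined through \eqref{eq:proof-base-AandB} satisfy $\mathcal{A}_1 = X_{1,+} G_{N_1} = A_1 + B_1 N_1 = A_{cl,1:1}$ and $\mathcal{B}_1 = X_{1,+} G_{B_1} = B_1 = B_{cl,1:1}$. Invoking Lemma~\ref{lem:cascade-data-sylvester-lemma} (whose applicability follows from Assumption~\ref{ass:rank-assumption}) yields $\mathcal{Y}_1 = \Upsilon_{c,1}$. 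Finally, Theorem~\ref{thm:modified-2-system-casc} shows that $N_2$ constructed via \eqref{eq:Ncascade_gains}--\eqref{eq:sdp-for-zeta-system-new} renders $A_2 - \Upsilon_{c,1} B_1 N_2$ Schur, so the base case reduces to the already-proved $2$-cascade statement.

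\textbf{Induction step.} Assume the claim holds for $\mathcal{N}=j-1$, so in particular $\mathcal{A}_{j-1}=A_{cl,1:j-1}$ and $\mathcal{B}_{j-1}=B_{cl,1:j-1}$ are correctly identified and the first $(j-1)$ subsystems are stabilised. I would proceed in two sub-steps. \emph{(a) Recovering $\mathcal{Y}_{j-1}$.} The LP \eqref{eq:compute-ups-n-cascade}, combined with the rank condition \eqref{eq:rank-condition-upsilon-N-cascade} and the induction hypothesis, is a direct instance of Lemma~\ref{lemma:computePI} with $\mathbf{A_1}=\mathcal{A}_{j-1}$, $\mathbf{A_2}=A_j$, $\mathbf{B_2}=B_j$, and $\mathbf{C_1}=\begin{bmatrix}\mathcal{Y}_{j-2} & I\end{bmatrix}$; the non-resonance hypothesis $\lambda(A_{cl,1:j-1})\cap\lambda(A_j)=\emptyset$ then gives $\mathcal{Y}_{j-1}=\Upsilon_{c,j-1}$. \emph{(b) Recovering $N_j$.} The key identity to establish is
\begin{equation*}
    Z_{j,+} = A_j Z_{j,-} \;-\; \mathcal{Y}_{j-1}\mathcal{B}_{j-1}\, V_{j,-},
\end{equation*}
obtained by combining $X_{j,+}=A_j X_{j,-}+B_j X_{j-1,-}$ with the definitions $Z_{j,-/+}:=X_{j,-/+}-\mathcal{Y}_{j-1}Z_{1:j-1,-/+}$ and $V_{j,-}:=U_{1,-}-\sum_{i=1}^{j-1}N_i Z_{i,-}$, and then using the nested Sylvester equation \eqref{eq:sylvester-N-systems} at level $j-1$ together with the closed-loop identity $Z_{1:j-1,+}=A_{cl,1:j-1}Z_{1:j-1,-}+B_{cl,1:j-1}V_{j,-}$ (which itself follows by unrolling the induction hypothesis) to cancel the cross terms. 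Once this identity is in place, the LMI \eqref{eq:sdp-for-zeta-system-new} under the rank condition \eqref{eq:rank-condition-zeta-N-cascade} is exactly the standard data-driven stabilisation recipe of \cite{de2019formulas} applied to the pair $(A_j,-\mathcal{Y}_{j-1}\mathcal{B}_{j-1})$, rendering $A_j-\Upsilon_{c,j-1}B_{cl,1:j-1}N_j$ Schur. The induction is then closed by verifying, from the recursive definitions in the statement, that $\mathcal{A}_j=A_{cl,1:j}$ and $\mathcal{B}_j=B_{cl,1:j}$, which allows the argument to be applied again at level $j+1$. The asymptotic stability of the full $\mathcal{N}$-cascade then follows from Theorem~\ref{lemma:n-cascade-model-based}.

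The main obstacle is sub-step (b), specifically the bookkeeping needed to establish the identity for $Z_{j,+}$: one must telescopically unwind the recursive transformations $\zeta_i$ and the recursive gains $N_i$, then invoke the nested Sylvester equations to cancel cross-coupling terms, recursively generalising the argument used for the derivation of $Z_{+}=A_2 Z_{-}-\Upsilon_c B_1 V_{-}$ in the proof of Theorem~\ref{thm:modified-2-system-casc}. All other pieces are direct applications of Lemma~\ref{lemma:computePI} and the data-driven stabilisation recalled in Section~\ref{sec:preliminaries-data-driven}.
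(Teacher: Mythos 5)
Your proposal is correct and follows essentially the same route as the paper's proof: identify the data-constructed quantities $\mathcal{A}_i$, $\mathcal{B}_i$, $\mathcal{Y}_i$ with the model-based $A_{cl,1:i}$, $B_{cl,1:i}$, $\Upsilon_{c,i}$ via \eqref{eq:proof-base-AandB} and Lemma~\ref{lemma:computePI}, verify that the gains render the matrices in \eqref{eq:n-cascade-gain-condition} Schur using the transformed-data identity $Z_{i,+} = A_i Z_{i,-} - \mathcal{Y}_{i-1}\mathcal{B}_{i-1}V_{i,-}$ together with the standard data-driven LMI, and then conclude by invoking Theorem~\ref{lemma:n-cascade-model-based}. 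The only difference is presentational: you organise the identification as an explicit induction on $\mathcal{N}$ and spell out the derivation of the $Z_{i,+}$ identity, whereas the paper handles it as a two-case ($i=1$ versus $i\geq 2$) recursive argument and simply asserts that $Z_{i,-/+}$ and $V_{i,-}$ are the data matrices of $\zeta_i$ and $\bar{u}$ evolving under \eqref{eq:end-node-zeta-open-loop}.
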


\begin{proof}
To prove the result, we show that the data-defined quantities $\mathcal{A}_i$, $\mathcal{B}_i$, and $\mathcal{Y}_i$ are the quantities $A_{cl,1:i}$, $B_{cl,1:i}$,  and $\Upsilon_{c,i}$, respectively. In addition, we show that the gains $N_i$ given by \eqref{eq:Ncasc_N1} and \eqref{eq:Ncascade_gains} are such that the matrices \eqref{eq:n-cascade-gain-condition} are Schur. Then the result follows from Theorem~\ref{lemma:n-cascade-model-based}. We split the proof into two cases.

\noindent \textit{($i = 1$)}. Note that
\begin{equation*}
\begin{aligned}
    \begin{bmatrix}
       \mathcal{A}_1 & \mathcal{B}_1
    \end{bmatrix} &= X_{1, +} \begin{bmatrix}
        G_{N_1} & G_{B_1}
    \end{bmatrix} \\
    &= (A_1 X_{1, -} + B_1 U_{1,-}) 
    \begin{bmatrix}
        G_{N_1} & G_{B_1}
    \end{bmatrix} \\
    &= \begin{bmatrix}
        A_1 & B_1
    \end{bmatrix}
    \begin{bmatrix}
        X_{1, -} \\
        U_{1,-}
    \end{bmatrix}
    \begin{bmatrix}
        G_{N_1} & G_{B_1}
    \end{bmatrix}  \\
    & = \begin{bmatrix}
        A_1 & B_1
    \end{bmatrix}
        \begin{bmatrix}
        I & \mathbf{0} \\ 
        N_1 & I
    \end{bmatrix}\\
    &= \begin{bmatrix}
      A_1 + B_1 N_1 & B_1
    \end{bmatrix}
    = \begin{bmatrix}
      A_{cl, 1:1} & B_{cl, 1:1}
    \end{bmatrix},
\end{aligned}
\end{equation*}
where in the fourth equality we used \eqref{eq:proof-base-AandB}. In addition, note that $A_1 + B_1 N_1$ (with $N_1$ given in \eqref{eq:Ncasc_N1}) is Schur (as shown in Theorem \ref{thm:modified-2-system-casc}), ensuring that the Schur condition on \eqref{eq:n-cascade-gain-condition} is satisfied for $i=1$.

\noindent \textit{($i \geq 2$)}. 
By specialising the Sylvester equation \eqref{eq:SylEqGeneric} in Lemma~\ref{lemma:computePI} to \eqref{eq:sylvester-N-systems}, it follows that \eqref{eq:compute-ups-n-cascade} is the corresponding data-dependent reformulation of the Sylvester equation \eqref{eq:sylvester-N-systems}, with the rank condition \eqref{eq:rank-condition-upsilon-N-cascade} ensuring that the corresponding analogue of Assumption \ref{ass:rankXU} holds. Therefore, we have that $\mathcal{Y}_i = \Upsilon_{c, i}$. Then, from
the given recursive constructions of $\mathcal{A}_i$ and $\mathcal{B}_i$, it follows directly that $\mathcal{A}_i = A_{cl, 1:i}$ and $\mathcal{B}_i = B_{cl, 1:i}$. 
Next, note that the transformed $i$-th system evolves according to the dynamics \eqref{eq:end-node-zeta-open-loop} (with $j:= i$) and that  $Z_{i,-/+}$ and $V_{i, -}$ are the corresponding data matrices of $\zeta_i$ and $\bar{u}$. Then, noting that the data-dependent LMI \eqref{eq:sdp-for-zeta-system-new} and the associated rank condition \eqref{eq:rank-condition-zeta-N-cascade} is an instance of the LMI \eqref{eq:basic-sdp} and the associated rank condition \eqref{eq:rank-condition}, it follows that the resulting $N_i$, given in \eqref{eq:Ncascade_gains}, renders $A_i - \Upsilon_{c, i-1} B_{cl,1:i-1}$ Schur, for any $i \geq 2$. \ie, \eqref{eq:n-cascade-gain-condition} is satisfied for all $i \in [2, \mathcal{N}]$.

\noindent Thus, the result follows from Theorem \ref{lemma:n-cascade-model-based}.
\end{proof}

\begin{remark} \label{rem:dataT}
If one disregards the cascade structure and attempts to stabilise the entire system as a monolithic entity via solving LMI \eqref{eq:basic-sdp}, then the state will be $\mathcal{X}(k):= \col(x_1(k),x_2(k),\cdots,x_{\mathcal{N}}(k)) \in \mathbb{R}^{\sum_{i=1}^{\mathcal{N}} n_i}$ and the required data length \(T\) would have to satisfy $T \ge T_{min}$ where
\begin{equation} \label{eq:lower-bound-data-savings-n-augmented}
    T_{min} = \sum_{i=0}^{\mathcal{N}} n_i,
\end{equation}
which, assuming that the individual subsystem orders $n_i$ are 
independent of $\mathcal{N}$, is $O(\mathcal{N})$. In contrast, when exploiting the cascade structure as in Theorem~\ref{theorem:n-cascade-data-driven}, the given rank conditions only require that the minimum data length is
\begin{equation} \label{eq:lower-bound-data-savings-n-forwarding}
    T_{min} = \max_{1 \leq i \leq \mathcal{N}-1}\!\bigl(n_{\mathcal{N}} + n_0,\,  n_i + n_0,\, n_{i+1} + n_{i}\bigr),
\end{equation}
which is $O(1)$ in the number of cascade stages $\mathcal{N}$, under the same assumption, showing a significant improvement in terms of data efficiency.
\end{remark}

\subsection{Numerical Example}
\label{sec:cascade-example}

\begin{figure}
    \centering
    \includegraphics[width=0.9\columnwidth]{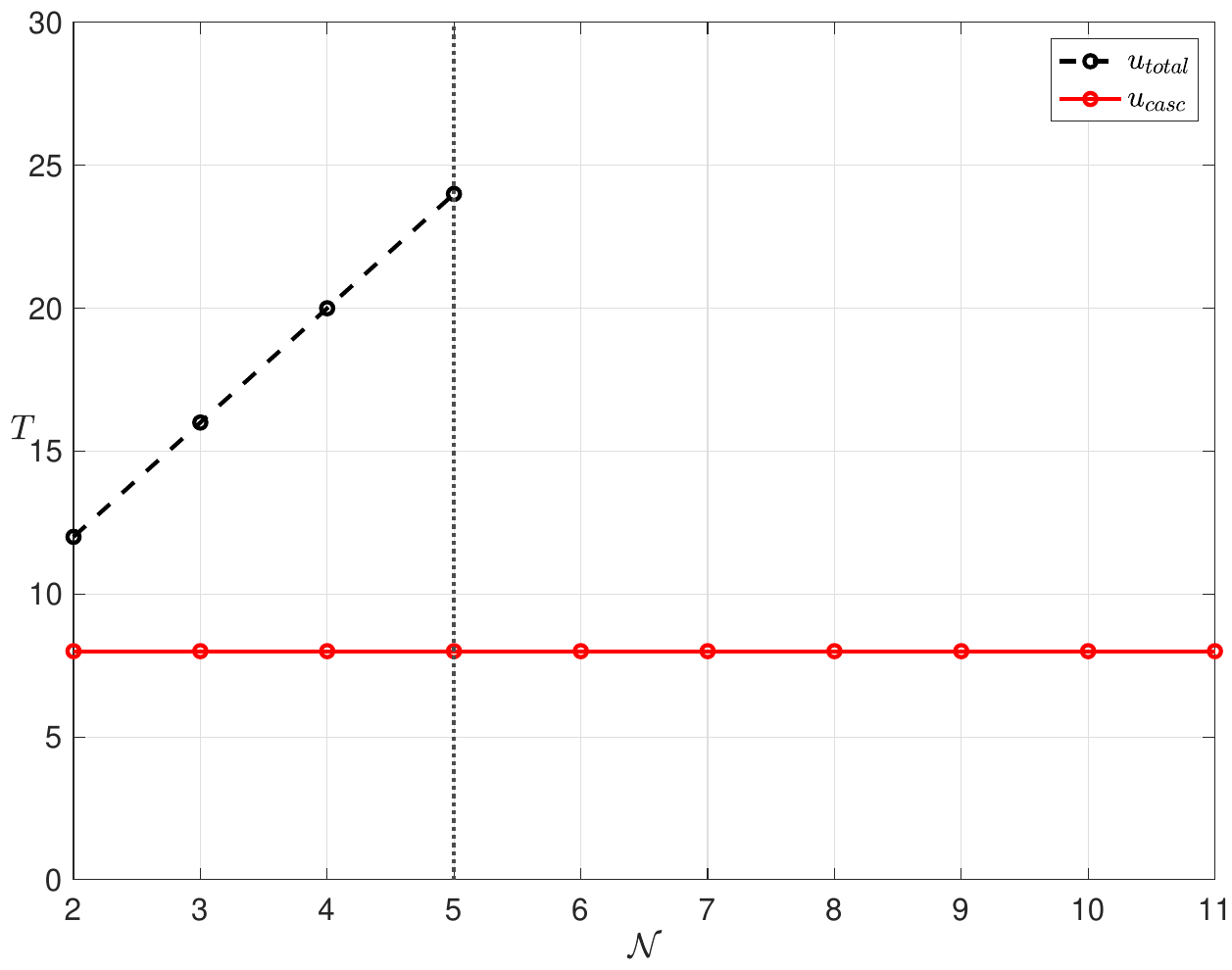}
    \caption{$T$ required to stabilise a cascade of $\mathcal{N}$ subsystems for both $u_{casc}$ (solid/red line) and $u_{total}$ (dashed/black line). The vertical dotted/black line indicates the boundary, to the right-hand side of which the approach $u_{total}$ fails to stabilise the entire cascade.} 
    \label{fig:n_cascades_min_datapoints}
\end{figure}%

To demonstrate the results of the previous sections, we consider an interconnection of $\mathcal{N}$ subsystems, 
each with state $x_i(k) \in \mathbb{R}^{4}$ for all $i \in [1,\mathcal{N}]$, and input $u_1(k) \in \mathbb{R}^4$.  More precisely, consider two 
discrete-time systems defined by (the randomly generated) matrix 
pairs $[\mathbf{A}_1|\mathbf{B}_1]$ and $[\mathbf{A}_2|\mathbf{B}_2]$, as 
given in \eqref{eq:numerical-cascade-first-system} and \eqref{eq:numerical-cascade-second-system}, respectively, and consider the 
cascade system \eqref{eq:n-system-cascade} 
constructed with
\begin{equation}
    [A_i|B_i] = \begin{cases}
        [\mathbf{A}_1|\mathbf{B}_1], & \text{if } i \text{ is odd,}\\
        [\mathbf{A}_2|\mathbf{B}_2], & \text{if } i \text{ is even,}\\
    \end{cases}
\end{equation}
for $i \in [1,\mathcal{N}]$. The matrices $[\mathbf{A}_1|\mathbf{B}_1]$ and $[\mathbf{A}_2|\mathbf{B}_2]$ were generated using MATLAB's \textit{rss} function for random continuous-time stable systems, and subsequently modified to shift one eigenvalue of each matrix $A_i$ onto the right half plane. Finally, the systems were discretised using a zero-order hold with $t_s=0.1$~s. Thus, each $A_i$ is such that, in the absence of any control input, the subsystems are unstable (as is the whole cascade when viewed as a monolithic entity).

\begin{figure*}[t]  
\centering
\begin{equation}\label{eq:numerical-cascade-first-system}
[\mathbf{A}_1|\mathbf{B}_1] =  
\left[
\begin{array}{rrrr|rrrr}
0.7024  &  0.1639  & -0.5026  &  0.1318 & 0.1121  & -0.0274  &  0.1904  &  0.0153\\
   -0.1507  &  0.9198  &  0.1836  & -0.0095 &0.0921  & -0.0322  &  0.0756  & -0.0028\\
    0.4880  &  0.0151   & 0.7407  &  0.2593 & 0.0439  &  0.0033  & -0.0406  & -0.0165\\
   -0.1897   & 0.1716  & -0.1388   & 0.7960 &  -0.0485 &    0.0914  &  0.0949  & -0.0124\\
\end{array}
\right]
\end{equation}
\end{figure*}

\begin{figure*}[t]  
\centering
\begin{equation}\label{eq:numerical-cascade-second-system}
[\mathbf{A}_2|\mathbf{B}_2] = 
\left[
\begin{array}{rrrr|rrrr}
0.7426  & -0.1530  &  0.2112  &  0.5858 & -0.1977 &  -0.0318  & -0.1206  & -0.0121\\
0.2451  &  0.9275  &  0.0160  & -0.0617 & 0.0034  & -0.0272  &  0.0927  &  0.1312\\
-0.3498  &  0.0315  &  0.9548  &  0.0431 & -0.3043 &   0.0372  &  0.1062   & 0.0012\\
-0.4782  &  0.1994  & -0.2835  &  0.7992 & -0.2300  & -0.1325  &  -0.1445  &  0.0127\\
\end{array}
\right]
\end{equation}
\end{figure*}

In what follows we consider several cases, which correspond to different values of $\mathcal{N}\ge 2$. For each case, we consider two different data-driven approaches to stabilise the cascade. 
First we apply the proposed data-driven forwarding procedure (in Theorem \ref{theorem:n-cascade-data-driven}) resulting in a control input which we denote by 
$u_{casc}$. 
Second, we apply
the approach in \cite{de2019formulas} directly, effectively treating 
the cascade as a monolithic system, resulting in a control input denoted by 
$u_{total}$. 
In each of these cases, we generate persistently exciting random inputs using MATLAB's \textit{randn} function, in order to create data sequences of set length of at least $T_{min}$ as defined in \eqref{eq:lower-bound-data-savings-n-augmented} and \eqref{eq:lower-bound-data-savings-n-forwarding} for the monolithic and the cascade control architectures,
respectively. In all simulations in this section, the rank conditions have been checked and confirmed to be always satisfied. 
All convex problems are solved using MATLAB CVX \cite{cvx}, with MOSEK \cite{aps2019mosek} as the selected solver for LMI problems and the default SDPT3 solver for finding the data-based Sylvester solutions. For each method, if stabilisation was unsuccessful (\ie, the solver failed to find a solution to the optimisation problem despite the rank conditions being satisfied) using data sequences of length $T = T_{min}$, then the sequence length $T$ was incremented until either stabilisation occurred or $T$ became greater than $2T_{min}$, at which point it was assumed that the method could not stabilise the cascade. It should be noted that theoretically using $T > T_{min}$ should have no effect, but this strategy was tested to exclude that numerical artifacts were causing the solver to fail.

\begin{figure}
    \centering
    \includegraphics[width=0.95\columnwidth]{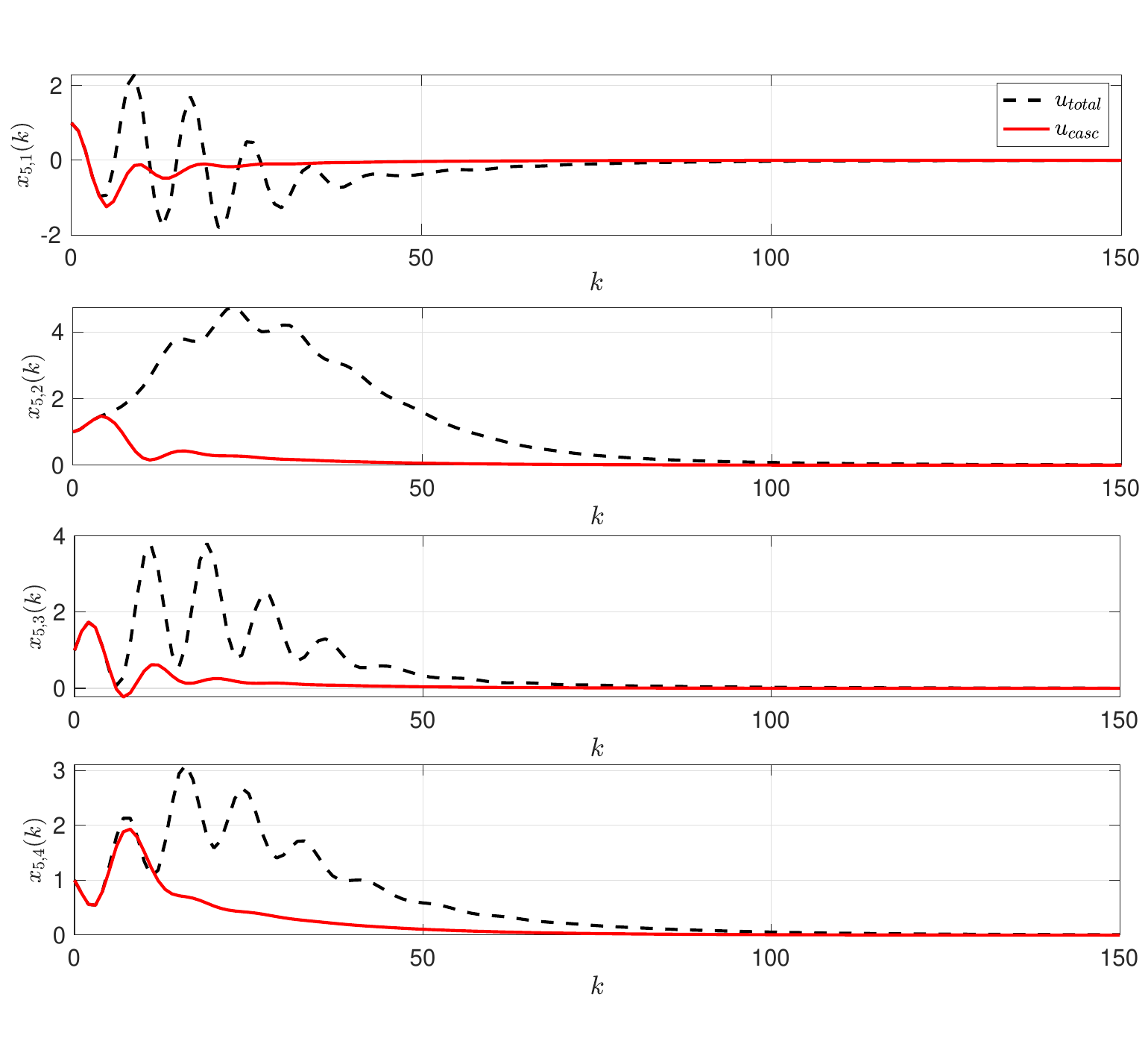}
    \caption{Time histories of the state of the end subsystem (with $\mathcal{N} = 5$) for $u_{casc}$ (solid/red line), and $u_{total}$ (dashed/black line). Each subplot corresponds to a state variable.}
    \label{fig:n_cascades_traj}
\end{figure}

\begin{figure}
    \centering
    \includegraphics[width=0.95\columnwidth]{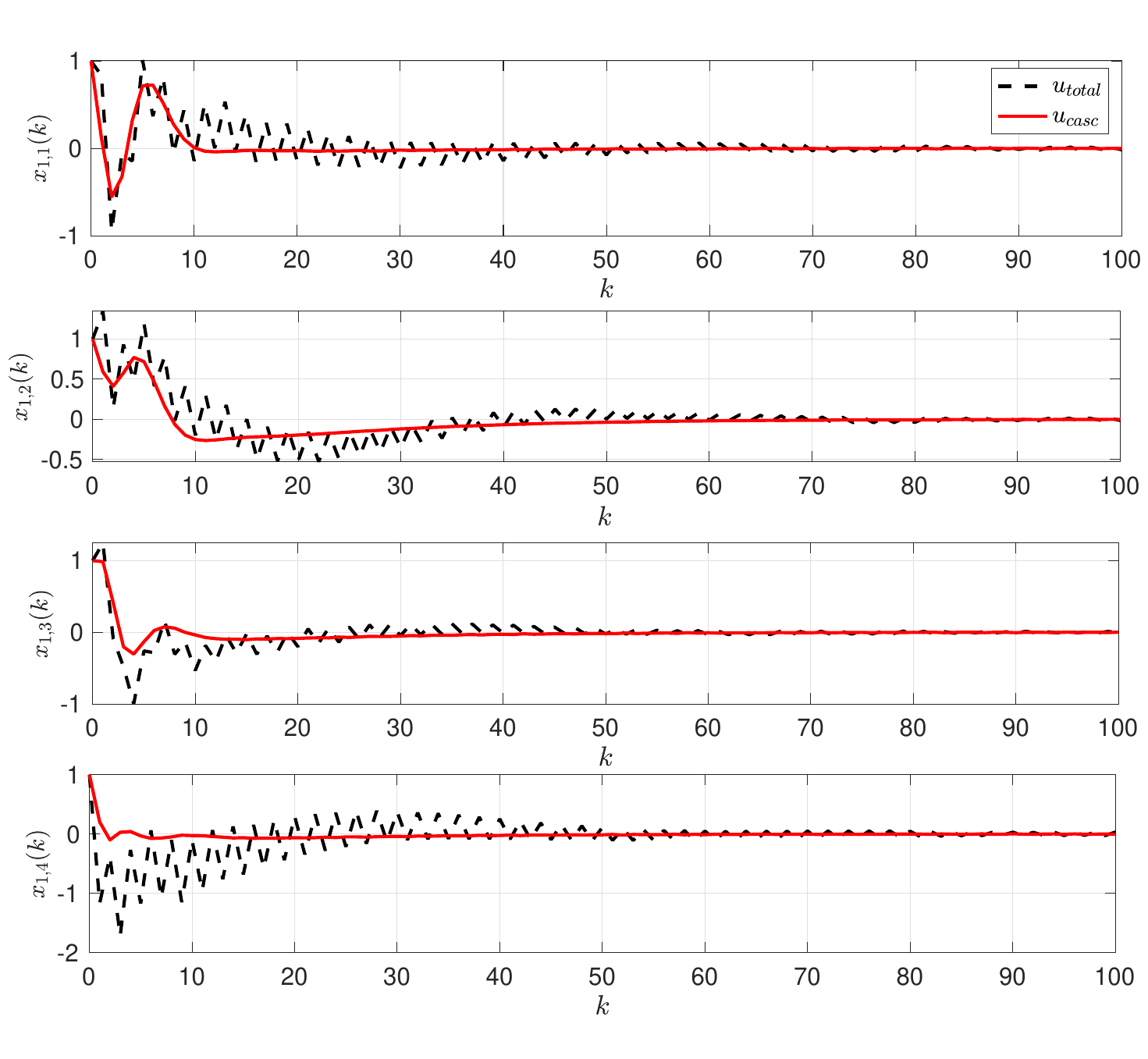}
    \caption{Time histories of the state of the first subsystem (with $\mathcal{N} = 2$) for $u_{casc}$ (solid/red line), and $u_{total}$ (dashed/black line), in the presence of measurement noise on both subsystems bounded by $|\Delta x_1| \le 0.001$ and $|\Delta x_2| \le 0.001$.}
    \label{fig:noise_traj}
\end{figure}

In Fig.~\ref{fig:n_cascades_min_datapoints}, we compare, for different values of $\mathcal{N}$, the minimum number of data points (or equivalently, the minimum time length) $T$ required to stabilise a cascade of $\mathcal{N}$ subsystems for both approaches. It was observed that the control strategy $u_{casc}$ stabilises cascades of subsystems up to $\mathcal{N}=11$, requiring only $8$ data points, in accordance with the value of $T_{min}$ given in \eqref{eq:lower-bound-data-savings-n-forwarding}, which is independent of $\mathcal{N}$. Meanwhile, for $u_{total}$, $T$ scales (linearly) with $\mathcal{N}$ in accordance with $T_{min}$ given in \eqref{eq:lower-bound-data-savings-n-augmented}. We note that for $\mathcal{N}=5$, the monolithic method requires three times as many data points as our proposed method. Moreover, the monolithic method cannot stabilise the cascade when $\mathcal{N} > 5$, as indicated by the vertical dotted/black line, meaning that the LMI solver fails to find a solution. Thus, Fig. \ref{fig:n_cascades_min_datapoints} shows that the bounds given in Remark~\ref{rem:dataT} are sharp and demonstrates that the data-driven forwarding method $u_{casc}$ outperforms $u_{total}$ in two ways: it requires fewer data points for each $\mathcal{N}$ tested, and can stabilise cascades of sizes for which it is not possible to determine $u_{total}$. 

We recall that in Remark \ref{remark:rank-condition-cascade-2} we observed that, since
during the data generation experiment it is only possible to directly influence the input data sequence $U_{1,-}$, it may be difficult to guarantee
that the rank conditions \eqref{eq:rank-condition-upsilon-N-cascade} and \eqref{eq:rank-condition-zeta-N-cascade} will be satisfied from the generated data sequences, as these do not depend explicitly on $U_{1,-}$. However, in this numerical example it was found that these rank conditions were easily satisfied with the random input sequence supplied, as long as the data sequence length was selected as $T \ge T_{min}$. Thus, it is unlikely that these rank conditions not being guaranteed would be a barrier to the adoption of the forwarding method in practice.

Fig.~\ref{fig:n_cascades_traj} shows the time histories of the state of the end subsystem with $\mathcal{N} = 5$, for both $u_{casc}$ and $u_{total}$, to visually show that the state responses are comparable, although the monolithic method generates larger oscillations.

We now provide some empirical results with noisy data, where for simplicity we restrict our focus to $\mathcal{N} = 2$, \ie, the standard 2-cascade setting. In this example, the generated data matrices are corrupted by bounded measurement noise $\Delta x_1$ and $\Delta x_2$. 
For $u_{casc}$, this was achieved by generating data sequences of noise in the form $\Delta X_{1,-/+}$ and $\Delta X_{2,-/+}$ using MATLAB's \textit{rand} function, and then by capping these numbers between $\pm 10^{-i}$ for $i \in [1,8]$ (\ie, we tested eight levels of noise). These noise data sequences were added to the noise-free state data sequences to obtain noise-corrupted data sequences $\bar{X}_{1,-/+}$ and $\bar{X}_{2,-/+}$. This same process was repeated for $u_{total}$. It was found that both $u_{casc}$ and $u_{total}$ showed similar performance: both methods stabilised for a $\Delta x$ bounded up to $\pm 10^{-3}$, despite the SNR conditions \eqref{eq:snr-condition-normal} and \eqref{eq:SNR-condtion-zeta} were typically satisfied only up to a $\Delta x$ bounded by $10^{-4}$. This confirms that the conditions \eqref{eq:snr-condition-normal} and \eqref{eq:SNR-condtion-zeta} are indeed conservative (as shown also in \cite[Section V]{de2019formulas}), and shows that the forwarding method has similar robustness to standard data driven stabilisation with noise. Fig.~\ref{fig:noise_traj} shows the time history of the state of the first subsystem for both $u_{casc}$ (solid/red line) and $u_{total}$ (dashed/black line) when subject to noise bounded by $\pm10^{-3}$. For this simulation, the norm $\|\Delta \Upsilon_c\|_2$ was found to have a value of $1.075$, well within the expected bound of $46.364$ calculated according to Lemma~\ref{lemma:errorBound-cascade}. Thus, in this specific instance the bound is loose (with a ratio of around $40$). However, the tightness of this bound varies greatly due to the random nature of the simulation. In other cases, the ratio was found to have tighter ratios of $\sim\!\!12$.

\section{Conclusion} \label{sec:conclusion}
Motivated by the well-known relation between Sylvester equations and cascade interconnections, we have developed a data-driven framework that solves the Sylvester equation from data samples, and we have provided an error analysis under noise. This framework is applicable to a broad class of problems that can be interpreted through system interconnections. We have showcased the applicability and effectiveness of the proposed framework by addressing the (data-driven) cascade stabilisation problem across a variety of settings. In Part II \cite{mao2025partTwo} of this article, we will exploit this framework further to provide direct data-driven solutions to the model order reduction and output regulation problems.

\section*{Acknowledgment}
G. Scarciotti would like to thank the members of the ``Sylvester Kings'' group, Daniele Astolfi and John Simpson-Porco, for daily illuminating discussions over several years regarding control theoretic implications of the Sylvester equation.

\section*{References}
\bibliographystyle{IEEEtran}
\bibliography{ref}

\newcommand{\ACC}[1]{Proceedings of the #1 American Control Conference}\newcommand{\ECC}[1]{Proceedings of the #1 European Control Conference}\newcommand{\CDC}[1]{Proceedings of the #1 IEEE Conference on Decision and Control}
\begin{thebibliography}{10}
\providecommand{\url}[1]{#1}
\csname url@samestyle\endcsname
\providecommand{\newblock}{\relax}
\providecommand{\bibinfo}[2]{#2}
\providecommand{\BIBentrySTDinterwordspacing}{\spaceskip=0pt\relax}
\providecommand{\BIBentryALTinterwordstretchfactor}{4}
\providecommand{\BIBentryALTinterwordspacing}{\spaceskip=\fontdimen2\font plus
\BIBentryALTinterwordstretchfactor\fontdimen3\font minus \fontdimen4\font\relax}
\providecommand{\BIBforeignlanguage}[2]{{%
\expandafter\ifx\csname l@#1\endcsname\relax
\typeout{** WARNING: IEEEtran.bst: No hyphenation pattern has been}%
\typeout{** loaded for the language `#1'. Using the pattern for}%
\typeout{** the default language instead.}%
\else
\language=\csname l@#1\endcsname
\fi
#2}}
\providecommand{\BIBdecl}{\relax}
\BIBdecl

\bibitem{francis1976internal}
B.~A. Francis and W.~M. Wonham, ``The internal model principle of control theory,'' \emph{Automatica}, vol.~12, no.~5, pp. 457--465, 1976.

\bibitem{astolfi2022harmonic}
D.~Astolfi, L.~Praly, and L.~Marconi, ``Harmonic internal models for structurally robust periodic output regulation,'' \emph{Systems \& Control Letters}, vol. 161, p. 105154, 2022.

\bibitem{luenberger1964observing}
D.~G. Luenberger, ``Observing the state of a linear system,'' \emph{IEEE {T}ransactions on {M}ilitary {E}lectronics}, vol.~8, no.~2, pp. 74--80, 1964.

\bibitem{shafai1988algorithm}
B.~Shafai and S.~Bhattacharyya, ``An algorithm for pole assignment in high order multivariable systems,'' \emph{IEEE Transactions on Automatic Control}, vol.~33, no.~9, pp. 870--876, 1988.

\bibitem{gallivan2004sylvester}
K.~Gallivan, A.~Vandendorpe, and P.~Van~Dooren, ``Sylvester equations and projection-based model reduction,'' \emph{Journal of Computational and Applied Mathematics}, vol. 162, no.~1, pp. 213--229, 2004.

\bibitem{astolfi2010model}
A.~Astolfi, ``Model reduction by moment matching for linear and nonlinear systems,'' \emph{IEEE Transactions on Automatic Control}, vol.~55, no.~10, pp. 2321--2336, 2010.

\bibitem{syrmos2002disturbance}
V.~L. Syrmos, ``Disturbance decoupling using constrained {S}ylvester equations,'' \emph{IEEE Transactions on Automatic Control}, vol.~39, no.~4, pp. 797--803, 2002.

\bibitem{girard2009hierarchical}
A.~Girard and G.~J. Pappas, ``Hierarchical control system design using approximate simulation,'' \emph{Automatica}, vol.~45, no.~2, pp. 566--571, 2009.

\bibitem{astolfi2024role}
D.~Astolfi, J.~W. Simpson-Porco, and G.~Scarciotti, ``On the role of dual {S}ylvester and invariance equations in systems and control,'' \emph{IFAC-PapersOnLine}, vol.~58, no.~5, pp. 20--27, 2024.

\bibitem{simpson-porco-arxiv}
J.~W. Simpson-Porco, D.~Astolfi, and G.~Scarciotti, ``Steady-state cascade operators and their role in linear control, estimation, and model reduction problems,'' \emph{arXiv preprint arXiv:2408.07568}, 2024.

\bibitem{willems-lemma}
J.~Willems, I.~Markovsky, P.~Rapisarda, and B.~De~Moor, ``A note on persistency of excitation,'' in \emph{2004 43rd IEEE Conference on Decision and Control}, vol.~3, 2004, pp. 2630--2631.

\bibitem{de2019formulas}
C.~De~Persis and P.~Tesi, ``Formulas for data-driven control: Stabilization, optimality, and robustness,'' \emph{IEEE Transactions on Automatic Control}, vol.~65, no.~3, pp. 909--924, 2019.

\bibitem{informativity}
H.~J. Van~Waarde, J.~Eising, M.~K. Camlibel, and H.~L. Trentelman, ``The informativity approach: To data-driven analysis and control,'' \emph{IEEE Control Systems Magazine}, vol.~43, no.~6, pp. 32--66, 2023.

\bibitem{nortmann2023direct}
B.~Nortmann and T.~Mylvaganam, ``Direct data-driven control of linear time-varying systems,'' \emph{IEEE Transactions on Automatic Control}, vol.~68, no.~8, pp. 4888--4895, 2023.

\bibitem{rueda2021data}
J.~G. Rueda-Escobedo, E.~Fridman, and J.~Schiffer, ``Data-driven control for linear discrete-time delay systems,'' \emph{IEEE Transactions on Automatic Control}, vol.~67, no.~7, pp. 3321--3336, 2021.

\bibitem{de2023learning}
C.~De~Persis and P.~Tesi, ``Learning controllers for nonlinear systems from data,'' \emph{Annual Reviews in Control}, vol.~56, p. 100915, 2023.

\bibitem{syrmos2002output}
V.~Syrmos and F.~Lewis, ``Output feedback eigenstructure assignment using two {S}ylvester equations,'' \emph{IEEE Transactions on Automatic Control}, vol.~38, no.~3, pp. 495--499, 1993.

\bibitem{chen2023data}
L.~Chen and J.~W. Simpson-Porco, ``Data-driven output regulation using single-gain tuning regulators,'' in \emph{2023 62nd IEEE Conference on Decision and Control (CDC)}, 2023, pp. 2903--2909.

\bibitem{Williams2025Direct}
E.~Williams, T.~Mylvaganam, and G.~Scarciotti, ``Direct data-driven stabilisation of linear cascade systems,'' in \emph{2025 Conference on Decision and Control (CDC)}, 2025, accepted for publication.

\bibitem{mao2025partTwo}
\BIBentryALTinterwordspacing
J.~Mao, E.~Williams, T.~Mylvaganam, and G.~Scarciotti, ``One equation to rule them all---{P}art {II}: Direct data-driven reduction and regulation,'' \emph{submitted for publication}, 2025. [Online]. Available: \url{https://drive.google.com/drive/folders/13f7gkAzA09T6j53ir1DjFj1WLLDW2mk2?usp=sharing}
\BIBentrySTDinterwordspacing

\bibitem{antoulas2005approximation}
A.~C. Antoulas, \emph{Approximation of {L}arge-{S}cale {D}ynamical {S}ystems}.\hskip 1em plus 0.5em minus 0.4em\relax SIAM, 2005.

\bibitem{aps2019mosek}
M.~ApS, ``{MOSEK} optimization toolbox for {MATLAB},'' \emph{User’s Guide and Reference Manual, Version}, vol.~4, no.~1, p. 116, 2019.

\bibitem{potra2000interior}
F.~A. Potra and S.~J. Wright, ``Interior-point methods,'' \emph{Journal of Computational and Applied Mathematics}, vol. 124, no. 1-2, pp. 281--302, 2000.

\bibitem{jiang2001input}
Z.-P. Jiang and Y.~Wang, ``Input-to-state stability for discrete-time nonlinear systems,'' \emph{Automatica}, vol.~37, no.~6, pp. 857--869, 2001.

\bibitem{cvx}
{CVX Research, Inc.}, ``{CVX}: {MATLAB} software for disciplined convex programming,'' Aug. 2012, version 2.0. [Online]. Available: \url{https://cvxr.com/cvx}.

\end{thebibliography}

\begin{IEEEbiography}[{\includegraphics[width=1in,height=1.25in,clip,keepaspectratio]{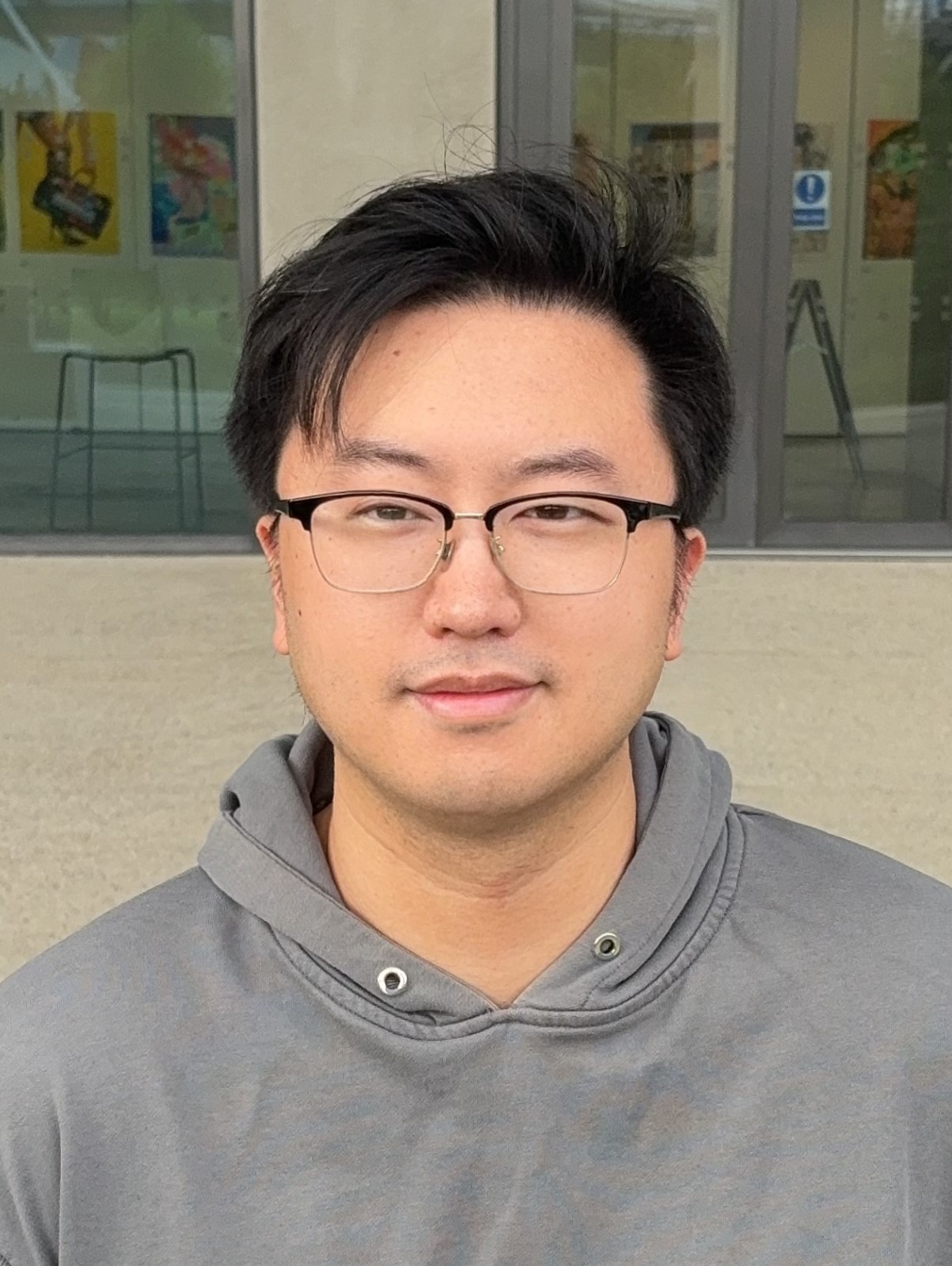}}]{Junyu Mao} (Student Member, IEEE) received his B.Eng. degree in Electrical and Electronic Engineering from the University of Liverpool, Liverpool, U.K., in 2018, and two M.Sc. degrees---in Control Systems from Imperial College London, London, U.K., in 2019, and in Data Science and Machine Learning from University College London, London, U.K., in 2020. He is currently pursuing the Ph.D. degree in the Control and Power Group at Imperial College London. In December 2024, he was a visiting Ph.D. student at the French National Centre for Scientific Research (CNRS). His research interests include the theoretical foundations of model order reduction and data-driven control for large‑scale dynamical systems.
\end{IEEEbiography}

\begin{IEEEbiography}[{\includegraphics[width=1in,height=1.25in,clip,keepaspectratio]{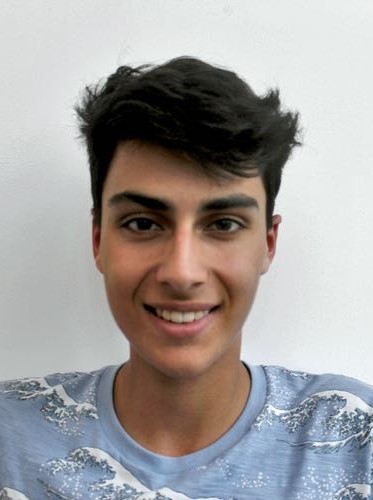}}]{Emyr Williams} 
is a Ph.D Candidate at the MAC-X Lab, Imperial College London. He received his M.Eng. degree in Aeronautical Engineering from Imperial College London in 2022. Following two years in industry as a software engineer, he joined the Control and Power research group at Imperial College London in 2024 to pursue his Ph.D. His research specialises on the application of data-driven and nonlinear control methodologies in the offshore renewable sector.
\end{IEEEbiography}

\begin{IEEEbiography}[{\includegraphics[width=1in,height=1.25in,clip,keepaspectratio]{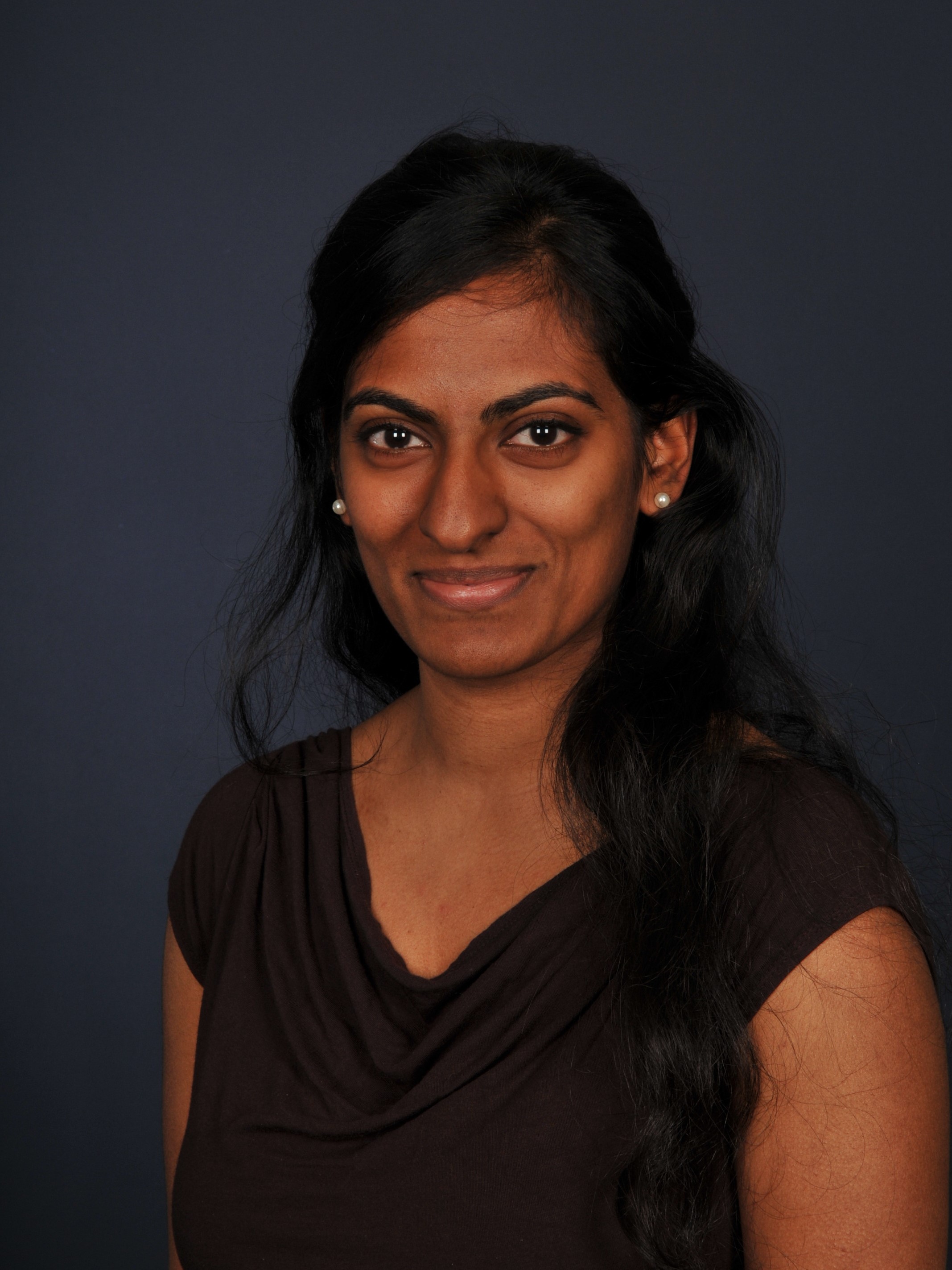}}]{Thulasi Mylvaganam} was born in Bergen, Norway, in 1988. She received the M.Eng. degree in Electrical and Electronic Engineering and the Ph.D. degree in Nonlinear Control and Differential Games from Imperial College London, London, U.K., in 2010 and 2014, respectively. From 2014 to 2016, she was a Research Associate with the Department of Electrical and Electronic Engineering, Imperial College London. From 2016 to 2017, she was a Research Fellow with the Department of Aeronautics, Imperial College London, UK, where she is currently Associate Professor. Her research interests include nonlinear control, optimal control, game theory, distributed control and data-driven control. She is Associate Editor of the IEEE Control Systems Letters, of the European Journal of Control, of the IEEE CSS Conference Editorial Board and of the EUCA Conference Editorial Board. She is also Vice Chair of Education for the IFAC Technical Committee 2.4 (Optimal Control) and Member of the UKACC Executive Committee.
\end{IEEEbiography}

\begin{IEEEbiography}[{\includegraphics[width=1in,height=1.25in,clip,keepaspectratio]{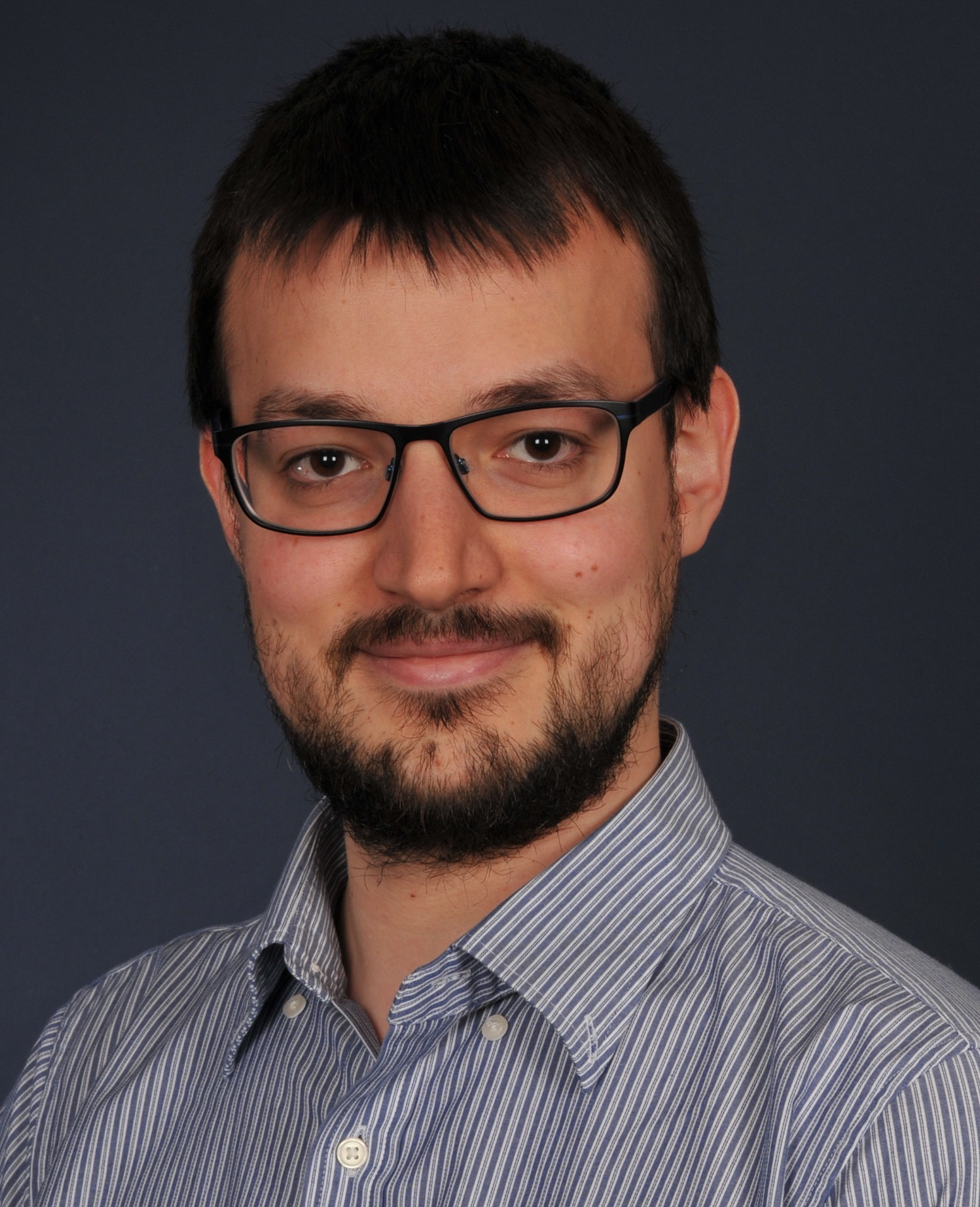}}]{Giordano Scarciotti} (Senior Member, IEEE) received his B.Sc. and M.Sc. degrees in Automation Engineering from the University of Rome ``Tor Vergata'', Italy, in 2010 and 2012, respectively. In 2012 he joined the Control and Power Group, Imperial College London, UK, where he obtained a Ph.D. degree in 2016. 
He also received an M.Sc. in Applied Mathematics from Imperial in 2020. He is currently an Associate Professor at Imperial. 
He was a visiting scholar at New York University in 2015, at University of California Santa Barbara in 2016, and a Visiting Fellow of Shanghai University in 2021-2022. He is the recipient of an Imperial College Junior Research Fellowship (2016), of the IET Control \& Automation PhD Award (2016), the Eryl Cadwaladr Davies Prize (2017), an ItalyMadeMe award (2017) and the IEEE Transactions on Control Systems Technology Outstanding Paper Award (2023). He is a member of the EUCA Conference Editorial Board, of the IFAC and IEEE CSS Technical Committees on Nonlinear Control Systems and has served in the International Programme Committees of multiple conferences. He is Associate Editor of Automatica. He was the National Organising Committee Chair for the EUCA European Control Conference (ECC) 2022, and of the 7th IFAC Conference on Analysis and Control of Nonlinear Dynamics and Chaos 2024, and the Invited Session Chair and Editor for the IFAC Symposium on Nonlinear Control Systems 2022 and 2025, respectively. He is the General Co-Chair of ECC 2029. 
\end{IEEEbiography}

\end{document}